\def\>{\ensuremath{\rangle}}
\def\<{\ensuremath{\langle}}
\def\-{\ensuremath{\textrm{-}}}
\def\apply{\mathrel{*\!\!=}}
\def\change{\ensuremath{\mathit{change}}}
\def\qVar{\ensuremath{\mathit{qv}}}
\def\qv{\ensuremath{\mathit{qv}}}
\def\cVar{\ensuremath{\mathit{cv}}}
\def\QVar{\ensuremath{\mathit{qVar}}}
\def\CVar{\ensuremath{\mathit{cVar}}}
\def\Var{\ensuremath{\mathit{var}}}
\def\Chan{\mathit{chan}}
\def\Exp{\mathit{Exp}}
\def\rassign{:=_{\$}}
\def\h{\ensuremath{\mathcal{H}}}
\def\l{\ensuremath{\mathcal{L}}}
\def\lh{\ensuremath{\mathcal{L(H)}}}
\def\dh{\ensuremath{\mathcal{D(H})}}
\def\dhv{\ensuremath{\d(\h_V)}}
\def\R{\ensuremath{\mathbb{R}}}
\def\m{\ensuremath{\mathcal{M}}}
\def\s{\ensuremath{\mathcal{S}}}
\def\t{\ensuremath{\mathcal{T}}}
\def\x{\ensuremath{\mathcal{X}}}
\def\z{\ensuremath{\mathcal{Z}}}
\def\ra{\ensuremath{\rightarrow}}
\def\a{\ensuremath{\mathcal{A}}}
\def\e{\ensuremath{\mathcal{E}}}
\def\f{\ensuremath{\mathcal{F}}}
\def\l{\ensuremath{\mathcal{L}}}
\def\N{\mathbb{N}}
\def\Z{\mathbb{Z}}
\def\qiz{\ensuremath{|i\>_q\<0|}}
\def\qzi{\ensuremath{|0\>_q\<i|}}
\def\d{\ensuremath{\mathcal{D}}}
\def\dh{\ensuremath{\mathcal{D(H)}}}
\def\lh{\ensuremath{\mathcal{L(H)}}}
\def\le{\ensuremath{\sqsubseteq}}
\def\ge{\ensuremath{\sqsupseteq}}
\newcommand{\supp}[1]{\ensuremath{\lceil{#1}\rceil}}
\newcommand{\tr}{{\rm tr}}
\newcommand{\rto}[1]{\stackrel{#1}\rightarrow}
\newcommand{\ass}[3]{\left\{#1\right\}\ #2\ \left\{#3\right\}}
\newcommand{\conf}[3]{\left\<#1, #2, #3\right\>}
\newcommand{\cqs}[2]{\left\<#1, #2\right\>}
\newcommand {\true} {\ensuremath{{\mathbf{true}}}}
\newcommand {\false} {\ensuremath{{\mathbf{false}}}}
\newcommand {\abort}{\ensuremath{{\mathbf{abort}}}}
\newcommand {\sskip} {\mathbf{skip}}
\newcommand {\ddo} {\ensuremath{\mathbf{do}}}
\newcommand {\mymeas} {\mathbf{meas}}
\newcommand {\tot} {\mathit{tot}}
\newcommand {\pal} {\mathit{par}}
\newcommand {\fail} {\mathbf{fail}}
\newcommand {\iif} {\mathbf{if}}
\newcommand {\fii} {\mathbf{fi}}
\newcommand {\od} {\mathbf{od}}
\newcommand\measstm[3]{\iif\ #1\ra\ #2\ \square\ \neg (#1)\ra #3\ \fii}
\newcommand\alterex{\iif\ B_1\ra S_1\ \square\ldots\square\ B_n\ra S_n\ \fii}
\newcommand\altercom{\iif\ \square_{i=1}^n B_i\ra S_i\ \fii}
\newcommand\repex{\ddo\ B_1\ra S_1\ \square\ldots\square\ B_n\ra S_n\ \od}
\newcommand\repcom{\ddo\ \square_{i=1}^n B_i\ra S_i\ \od}
\newcommand{\define}{\ensuremath{\triangleq}}
\def\tybool{\ensuremath{\mathbf{Boolean}}}
\def\tyint{\ensuremath{\mathbf{Integer}}}
\def\tyqubit{\ensuremath{\mathbf{Qubit}}}
\def\tyqudit{\ensuremath{\mathbf{Qudit}}}
\def\type{\ensuremath{\mathit{type}}}
\def\qstate{\Delta}
\def\qassert{\Theta}
\def\qassertp{\Psi}
\def\cstate{\sigma}
\def\cstates{\Sigma}
\def\cassert{p}
\def\emptydis{\bot}
\def\Exp{\mathrm{Exp}}
\def\qstates{\s_V}
\def\qasserts{\a_V}
\def\qstatesh#1{\mathcal{S}_{#1}}
\def\qassertsh#1{\mathcal{A}_{#1}}
\newcommand\prog{\mathit{Prog}}
\def\ph{\ensuremath{\mathcal{P}(\h)}}
\def\phv{\ensuremath{\mathcal{P}(\h_V)}}
\def\l{\mathcal{L}}
\def\z{\mathbf{0}}
\def\C{\mathbb{C}}
\newcommand{\subs}[2]{{#2}/{#1}}
\def \Rm#1{\mbox{\rm #1}}
\def \lsem      {\raise1pt\hbox{\Rm {[\kern-.12em[}}}
\def \rsem      {\raise1pt\hbox{\Rm {]\kern-.12em]}}}
\def \sem#1{\mbox{\lsem$#1$\rsem}}
\title{Verification of Distributed Quantum Programs
	\footnote{This work is partially supported by the National Key R\&D Program of China (Grant No: 2018YFA0306 701) and the Australian Research Council (Grant Nos: DP180100691, DP210102449).}
}
\author[1]{Yuan Feng}
\author[1]{Sanjiang Li}
\author[1,2,3]{Mingsheng Ying}
\affil[1]{Centre for Quantum Software and Information, University of Technology Sydney, Australia}
\affil[2]{Institute of Software, Chinese Academy of Sciences, Beijing, China}            
\affil[3]{Department of Computer Science, Tsinghua University, Beijing, China}   
\authorrunning{Y. Feng, S. Li, and M. Ying} 
\begin{document}
	
	\maketitle
	
	\begin{abstract}
		Distributed quantum systems and especially the Quantum Internet have the ever-increasing potential to fully demonstrate the power of quantum computation. This is particularly true given that developing a general-purpose quantum computer is much more difficult than connecting many small quantum devices. One major  challenge of implementing distributed quantum systems is programming them and verifying  their correctness. In this paper, we propose a CSP-like distributed programming language to facilitate the specification and verification of such systems. After presenting its operational and denotational semantics, we develop a Hoare-style logic for distributed quantum programs and establish its soundness and (relative) completeness with respect to both partial and total correctness. The effectiveness of the logic is demonstrated by its applications in verification of quantum teleportation and local implementation of non-local CNOT gates, two important algorithms widely used in distributed quantum systems.
	\end{abstract}
	
	\section{Introduction}\label{Intro}
	
	Quantum computers exploit quantum phenomena such as superposition and entanglement to perform computation. The past five years have seen exciting progresses in building small-scale quantum processors and the two state-of-the-arts, Google's Sycamore and IBM Q Rochester, both have 53 qubits. While these small 
	quantum devices already demonstrate certain advantages over classical supercomputers, large scale general-purpose quantum computers are still far from reach. 
	
	The Quantum Internet has been  proposed as a  key strategy to provide large-scale quantum computing \cite{Kimble08,Wehner18internet,KozlowskiW19,Cuomo+20}. The idea is to connect many small quantum devices 
	by using quantum communications and this network of quantum devices will then have the functionality of a (virtual) large-scale quantum computer. 
	On July 3, 2020, the Department of Energy of the United States proposed a 10-year roadmap for a national Quantum Internet under the \$1.2 billion \emph{National Quantum Initiative Act}. Several important steps have been experimented in the past two years. In February 2020, scientists from Argonne and the University of Chicago successfully entangled photons across a 52-mile underground network of optical fibre. 
	In April 2021, a team of researchers from QuTech in the Netherlands reported realisation of the first entanglement-based quantum network (connecting three quantum processors) \cite{Pompili-3node}.
	
	

	As pointed out in \cite {KozlowskiW19}, software-defined networking (SDN) technology is particularly important for quantum networks, because under current technical conditions, quantum memories have a very short lifespan.	
	On the other hand, programming quantum networks is much harder and more error-prone than programming classical ones   
	due to the possible existence of entanglement between different systems and non-commutativity of quantum observables and operations.
	
	Inspired by Apt's work~\cite{Apt86} on distributed programming based upon Hoare's CSP (Concurrent Sequential Processes)~\cite{hoare1978communicating}, 
	we define in this paper a programming language for distributed quantum systems. Recall that a distributed system consists of a number of spatially separated processes that work independently using their private storage, but communicate by explicit message passing. Our language supports both classical and quantum operations of individual processes. However, to make the presentation simpler, we only consider classical communication between different processes. Note that this is not a serious limitation, as generic quantum communication can be achieved by using the teleportation protocol~\cite{bennett1993teleporting} provided that entanglement is pre-shared between relevant parties.
	Furthermore, communication is achieved in a handshaking (or rendezvous) way; that is, the sender can deliver a message only when the receiver is ready to accept it at the same moment. We leave the asynchronous communication of quantum states as future work.
	%
	Based on the notion of classical-quantum assertions defined in \cite{feng2020quantum}, we propose Hoare-style logic systems for both partial and total correctness of distributed quantum programs, and prove their soundness and (relative) completeness. The effectiveness of these logic systems are demonstrated through the verification of 
	{quantum teleportation and local implementation of non-local CNOT gates,}
	two important algorithms widely used in distributed quantum systems. 
	It is worth noting that since the language we consider includes probabilistic assignments, this paper actually provides a sound and relatively complete Hoare logic for distributed probabilistic programs as a by-product.

	\textbf{Technical Contributions}:	While the semantics and proof systems in this paper are defined in a way similar to that of \cite{feng2020quantum}, the extension from sequential quantum programs to distributed quantum programs is challenging. 
	
	Firstly, the operational semantics of quantum measurements and probabilistic assignments in~\cite{feng2020quantum} are given in a `nondeterministic' way, with the probabilities of different branches being encoded in the quantum part of the configurations. This follows a tradition originated in~\cite{selinger2004towards} and adopted in~\cite{ying2012floyd,ying2016foundations} that  simplifies both notationally and conceptually the semantics of (deterministic) quantum languages,  especially the description of non-termination. However, distributed quantum programs investigated in this paper exhibit real nondeterminism (in the transition systems for operational semantics) due to the possible interleaving of local actions and communication of different sequential processes. To distinguish these two types of nondeterminism, we model quantum measurements and probabilistic assignments in a (standard) probabilistic way. Accordingly, the transition relation between configurations has to be lifted to probability distributions of configurations.

	Secondly, 
	despite that the entire distributed program may exhibit nondeterminism even if each individual process is deterministic, we show that different computations from a given configuration actually obtain the same classical-quantum state, thanks to the disjointness of the (classical changeable and quantum) variables accessible by different processes. This result clears the obstacle in defining the denotational semantics of distributed quantum programs and ensures that a distributed program can be sequentialised into a deterministic one without affecting its semantics. 
	
	Thirdly, the proof systems presented in~\cite{feng2020quantum} are designed for sequential quantum programs. New techniques are developed in this paper in extending them to distributed programs and proving their soundness and relative completeness.
	
	
	\textbf{Organisation of the paper}: In the rest of this section, we briefly discuss some related works and present quantum teleportation as a motivating example. The remainder of this paper is organised as follows.  In Sec.~\ref{sec:syntax}, we present the three layers of the syntax of the distributed quantum programming language, which is followed by its operational and denotational semantics 
	in Sec.~\ref{sec:semantics}. In particular, we prove that distributed quantum programs are semantically deterministic in the sense that different computations from a given configuration always give the same classical-quantum state. We then show in Sec.~\ref{sec:sequentialisation} how a distributed quantum program can be sequentialised without affecting its semantics. Based on the notion of classical-quantum assertion, we present a Hoare-style logic in Sec.~\ref{sec:verification} for distributed quantum programs and establish its soundness and (relative) completeness for both partial and total correctness. The last section concludes this paper with an outline of future works. Due to space limitation, we omit all proofs as well as the verification of  
	{quantum teleportation and local implementation of non-local CNOT gates}.
	Interested readers may find these details in the appendix.

	\subsection{Related Works}
	
	The {following three} lines of previous works are closely related to this paper.
	
	\textbf{Quantum Process Algebras}: 
	Process algebra is the mainstream approach to formally model and reason about quantum communication systems. Since 2004, several quantum process algebras such as QPAlg \cite{JorrandL04}, CQP \cite{GayN05}, and qCCS \cite{feng2007probabilistic,ying2009algebra,FengDY12} have been introduced and adopted in verification of popular quantum communication protocols such as teleportation~\cite{bennett1993teleporting} and superdense coding~\cite{bennett1992communication}. 
	Following~\cite{Apt86} (also see~\cite{apt2010verification}, Chapter 11), we choose to use (a subset of) a quantum extension of process algebra CSP as our language for programming distributed quantum systems, but use a Hoare-style logic to reason about their correctness. 
	
	\textbf{Quantum Hoare Logic}: 
	Hoare logic provides a syntax-oriented proof system to reason about program correctness~\cite{hoare1969axiomatic}. In recent years, Hoare-style logics for  quantum programs have been developed in~\cite{chadha2006reasoning,feng2007proof,Kakutani:2009,ying2012floyd,unruh2019quantum,feng2020quantum}. However, these logic systems are designed for the verification of sequential quantum programs, thus are not suitable for the distributed ones considered in the current paper.  
	Nevertheless, our definition of semantics of distributed quantum programs is based on the key notions such as classical-quantum states and assertions introduced in~\cite{feng2020quantum}. 
	
	
	
	\textbf{Programming with Quantum Communication}:  The authors of~\cite{tafliovich2009programming} presented some interesting ideas of specifying and analysing quantum communication in a predicative programming language. However, the key technique for verification of quantum communication protocols developed in~\cite{tafliovich2009programming} (and in predicative programming~\cite{hehner2012practical} in general) is refinement, while we use a Hoare-style logic here.
	
	\subsection{Motivating Example --- Quantum Teleportation}\label{sec:telep}
	Quantum teleportation was proposed by Bennett et al. \cite{bennett1993teleporting} for transmitting \textit{quantum information} (e.g. the exact state of an atom or photon) via only \textit{classical communication} but with the help of previously shared \textit{quantum entanglement} between the sender and the receiver. It is one of the most surprising examples where entanglement helps to accomplish a certain task that is impossible in the classical world. A large number of quantum communication protocols 
	such as quantum gate teleportation~\cite{gottesman1999demonstrating}, port-based 
	teleportation~\cite{ishizaka2008asymptotic}, quantum repeaters~\cite{briegel1998quantum}, and measurement based quantum computing~\cite{raussendorf2001one}	have been designed based on it, and some of them have been experimentally implemented~\cite{pirandola2015advances}.
	
	
	
	Let us consider the simplest case of teleporting a qubit.  Assume that Alice and Bob live far apart and there is only a classical communication channel between them. But Alice wants to send quantum information, say a state $|\psi\rangle\define \alpha_0|0\> + \alpha_1|1\>$ of qubit $q$, to Bob. How can she do it? This seems a task impossible for her to accomplish because it may take infinite amount of classical information to describe the complex amplitudes $\alpha_0$ and $\alpha_1$. However, if Alice and Bob share entanglement; more precisely, if they possess qubits $q_1$ and $q_2$ respectively and these two qubits are in the Bell state $|\beta\rangle\define \frac{1}{\sqrt{2}}(|00\> + |11\>)$ (also called EPR pair), then they can accomplish the task using the following protocol, called \textit{teleportation}:
	\begin{enumerate}\item Alice interacts qubit $q$ in state $|\psi\rangle$ and her half $q_1$ of the shared EPR pair $|\beta\rangle$ by performing first the controlled NOT (CNOT for short) on $q,q_1$ and then the Hadamard gate $H$ on $q$, where: 
		\begin{itemize}
			\item the CNOT acts as follows: if the control qubit $q$ is in $|0\rangle$ then the target qubit $q_1$ is left unchanged, and if $q$ is in $|1\rangle$ then $q_1$ is flipped between $|0\>$ and $|1\>$;
			\item the $H$ gate turns basis states $|0\rangle$ and $|1\rangle$ to their equal superposition $|+\>$ and $|-\>$, where $|\pm\rangle=\frac{1}{\sqrt{2}}(|0\rangle\pm |1\rangle)$, respectively.  
		\end{itemize}
		\item Alice measures her qubits $q,q_1$ (in the standard basis), and sends the obtained results -- classical bits $z,x$ through the classical channel to Bob.
		\item On his half $q_2$ of the EPR pair, Bob performs operation $X$ whenever the received classical information $x=1$, and then $Z$ whenever $z=1$. Here $X$ and $Z$ are Pauli operations with $X|i\> = |1-i\>$ and $Z|i\> = (-1)^i |i\>$ for $i=0,1$.
	\end{enumerate}           
	Quantum teleportation can be visualised as the quantum circuit in Figure \ref{fig-teleportation}. 
	\begin{figure}[t]\centering
		\tikzset{
			my label/.append style={above right,xshift=0.3cm}
		}
		\begin{quantikz}[row sep=0.3cm,column sep=1cm]
			\lstick{$|\psi\>$} &\ctrl{1} & \gate{H} &  \meter{$z$} & \cw & \cwbend{3}\\	  
			\lstick[3]{$|\beta\>$} &\targ{}  &   \qw &  \meter{$x$} & \cwbend{2}\\
			&\wave&&&&&&\\
			&\qw  &  \qw &  \qw & \gate{X} &\gate{Z} &\qw 
		\end{quantikz}
		\caption{Quantum Teleportation. The wires from top to bottom represent qubits $q$, $q_1$, and $q_2$ respectively. Furthermore, $q$ and $q_1$ belong to Alice while $q_2$ belongs to Bob.}\label{fig-teleportation}
	\end{figure}
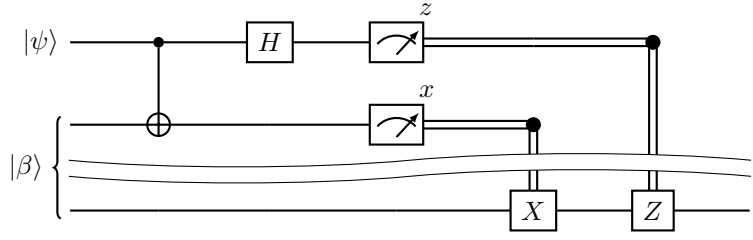
	What surprises us is that at the end Bob's qubit $q_2$ is in state $|\psi\rangle$. In other words, Alice sends the quantum information $|\psi\rangle$ to Bob only by classical communication of two bits in step (2), even without knowing the amplitudes $\alpha_0$ and $\alpha_1$ of $|\psi\rangle$. Of course, this is achieved by consuming some entanglement (At the end of the protocol, qubits $q_1$ and $q_2$ are no longer entangled).  
	%
	%
	%
	%

	\section{A Language for Programming Distributed Quantum Systems}
	\label{sec:syntax}
	
	We propose a programming language to describe distributed quantum systems. The syntax has three layers, introduced in the following three subsections respectively. 
	
	\subsection{Sequential quantum programs}\label{sec:seq}
	
	For the first layer, we extend the classical-quantum while language defined in~\cite{feng2020quantum} with alternative and repetitive commands~\cite{dijkstra1976discipline}. We assume two basic types for classical variables: $\tybool$ with the corresponding domain $D_{\tybool} \define \{\true, \false\}$ and $\tyint$ with $D_{\tyint} \define\Z$. For each integer $d\geq 1$, we assume a basic quantum type $\tyqudit$ with domain $\h_{\tyqudit}$, which is a $d$-dimensional Hilbert space with an orthonormal basis $\{|0\>, \ldots, |d-1\>\}$. In particular, we denote the quantum type for $d=2$ as $\tyqubit$. Let $\CVar$, ranged over by $x,y,\cdots$, and $\QVar$, ranged over by $q, r, \cdots$, be countably infinite sets of classical and quantum variables, respectively. Denote by $\mathit{type}(v)$ the type of a (classical or quantum) variable $v$. For any
	finite subset $V$ of $\QVar$, let
	$\h_V \define \bigotimes_{q\in V} \h_{q},
	$
	where $\h_{q} \define \h_{\type(q)}$. In this paper, when we refer to a subset of $\QVar$, it is always assumed to be finite.
	
	With the above notions, a sequential quantum program is defined by the following syntactic rules:
	\begin{align*} S::= &\ \sskip\ |\ \abort\ |\ x:= e\ |\ x\rassign g\ | \ x:= \mymeas\ \m[\bar{q}]\ | \ q:=0\ |\
		\bar{q}\apply U\ |\ S_0;S_1\ |\\ &\ \alterex\ |\ \repex
	\end{align*}
	where $S$ and $S_i$ are sequential quantum programs, $x$ a classical variable in $\CVar$, $e$ a classical expression with the same type as $x$, $g$ a discrete probability distribution over $D_{\mathit{type}(x)}$, $B_i$ a $\tybool$-type expression, $q$ a quantum variable and $\bar{q} \define q_1, \ldots, q_n$ a (ordered) tuple of distinct quantum variables in $\QVar$, $\m$ a measurement and $U$ a unitary operator on
	$d_{\bar{q}}$-dimensional Hilbert space with $$d_{\bar{q}}\define \dim(\h_{\bar{q}}) = \prod_{i=1}^{n} \dim(\h_{q_i}).$$
	Sometimes we also use $\bar{q}$ to denote the (unordered) set $\{q_1,q_2,\dots,q_n\}$. Let $|\bar{q}|\define n$ be the size of $\bar{q}$. 
	We write $x:= \mymeas\ \bar{q}$ for  $x:= \mymeas\ \m_{\mathit{com}}[\bar{q}]$ 
	where $\m_{\mathit{com}} \define \{P_k \define |k\>\<k| : 0\leq k<d_{\bar{q}}\}$ is the projective measurement according to the computational basis of $\h_{\bar{q}}$. We always write $|k\>$ for the product state $|k_1\>\cdots|k_n\>$, where $k = \sum_{i=1}^{n}k_i  d_{q_{i+1}}\ldots d_{q_n}$.

	The alternative and repetitive commands above are sometimes abbreviated as 
	\[
	\altercom\hspace{2em} \mbox{  and  }\hspace{2em} \repcom
	\]
	respectively. For simplicity, we only consider deterministic sequential quantum programs in this paper. To this end, we assume that the $B_i$'s are mutually exclusive; that is, for each $i$, $B_i\ra \bigwedge_{j\neq i} \neg B_j$ is a tautology. However, we do not require $\bigvee_{i=1}^n B_i \leftrightarrow \true$.
	Under this assumption, a guarded command $B_i\ra S_i$ in $\altercom$ will be chosen to execute once its guard $B_i$ evaluates to $\true$.  If all guards evaluate to $\false$, the alternative command will lead to a (classical) \emph{failure} state, which is a feature introduced in~\cite{dijkstra1976discipline} but does not exist in the while language of~\cite{feng2020quantum}. The selection of guarded commands in $\repcom$ follows a similar way, with the only difference that after termination of a selected $S_i$ the whole command is repeated. Moreover, in contrast with the alternative command, the repetitive command properly terminates if all the guards evaluate to false.

	\subsection{Sequential quantum process}
	To describe the second syntactic layer for distributed quantum programs, we adopt a subset of Hoare's CSP (Communicating Sequential Processes)~\cite{hoare1978communicating,brookes1984theory}, following the approach in~\cite{apt2010verification}.  Let $\Chan$
	be a set of (classical) channel names, ranged over by $c,d,\dots$.
	An \emph{input command} is of the form $c?x$, while an \emph{output command} is of the form $c!e$,
	where $c\in \Chan$ is a communication channel, $x\in \CVar$ a classical variable, and $e$ an expression.
	Intuitively,  $c?x$ expresses the request to receive a classical value along channel $c$. Upon reception this value is assigned to variable $x$. In contrast, $c!e$ expresses the request to send the value of expression $e$ along channel $c$. 
	A \emph{generalised guard} is of the form $g\define B;\alpha$ where $B$ is a Boolean expression, and $\alpha$ an input or output command. In particular, if $B\equiv\mathbf{true}$, then we denote $g$ simply as $\alpha$.

	Let $\alpha_1$ and $\alpha_2$ be two input/output (i/o) commands. They are said to \emph{match} if they refer to the same channel, one of them is an input, and the other one output with the same type. Given two matched i/o commands  $\alpha_1\define c?x$ and $\alpha_2\define c!e$, the \emph{communication effect} of $\alpha_1$ and $\alpha_2$ is defined to be the program statement $x:=e$; that is, $$\mathit{Effect}(\alpha_1,\alpha_2)= \mathit{Effect}(\alpha_2,\alpha_1)\define x:=e.$$
	

	\begin{definition} A \emph{sequential quantum process} has the form:
		$$S::= S_0; \mathbf{do}\ \square_{j=1}^m B_j;\alpha_j\rightarrow S_j\ \mathbf{od}$$
		where $m\geq 0$, $S_0,S_1,\ldots,S_m$ are sequential quantum programs defined in the previous subsection. {Again, we assume that  $B_j$'s are mutually exclusive.} 
		We call $S_0$ the \emph{initialisation part}, and $\mathbf{do}\ \square_{j=1}^m B_j;\alpha_j\rightarrow S_j\ \mathbf{od}$ the \emph{main loop} of $S$. If $m=0$, then we let $S=S_0$. In this way, any sequential quantum program is a sequential process. 
		If $S_0 \equiv \sskip$, we drop $S_0$ from $S$ unless $m=0$.
	\end{definition}

	We have the following notations for sequential quantum process $S$.
	\begin{itemize}\item Denote by $\cVar(S)$ and $\qVar(S)$ the sets of classical and quantum variables appearing in $S$, respectively. Note that we do not distinguish between free and bound variables; that is, the classical variable appearing in an input command of $S$ is also included in $\cVar(S)$. Let $\Var(S) \define \cVar(S)\cup \qVar(S)$.
		\item Denote by $\change(S)$ the set of classical variables that appear on the left-hand side of an assignment or in an input command in $S$. Note that the only way to retrieve information from a quantum system is to measure it, which may change its state. Thus $\qVar(S)$ is also the set of changeable quantum variables in $S$. 
		\item Denote by $\Chan(S)$ the set of channel names appearing in $S$.
	\end{itemize}

	\subsection{Distributed quantum programs}
	
	Now we are ready to define the syntax for distributed quantum programs.
	\begin{definition}\label{def:distprog}A \emph{distributed quantum program} is a parallel composition $S::=S_1\|\cdots\|S_n$ where $n\geq 1$ and $S_1,\ldots,S_n$ are sequential quantum processes defined in the above subsection which satisfy
		\begin{itemize}\item Pairwise disjointness: for all $1\leq i\neq j\leq n$, 
			$\Var(S_i)\cap (\change(S_j) \cup \qVar(S_j))=\emptyset;$
			\item Point-to-point connection: for all $1\leq i<j<k\leq n$,
			$\Chan(S_i)\cap\Chan(S_j)\cap\Chan(S_k)=\emptyset.$
		\end{itemize}
		Let $\cVar(S) \define \bigcup_{i=1}^n \cVar(S_i)$, and $\change(S)$, $\qv(S)$, and $\Var(S)$ be similarly defined.
	\end{definition}
	
	Essentially, the first clause requires that (1) classical variables in any process cannot be changed by other processes; (2) quantum variables in any process do not appear in other processes. The second clause in Definition~\ref{def:distprog} implies that each communication channel is shared by at most two processes. This constraint, together with the assumption that sequential processes are deterministic, means that at any moment, each process is only able to communicate with at most one other process. Note also that we disallow nested parallelism in distributed programs. Finally, any sequential quantum process is a distributed quantum program with $n=1$.
	
	The constraints in Definition~\ref{def:distprog} look very strict at the first glance. However, using similar approaches presented in~\cite{apt1987two,zoebel1988normalform}, more general distributed quantum systems can be transformed into this special form by introducing control variables (say, $stage_A$ and $stage_B$ in the following example).
	\begin{example}[Quantum Teleportation as a Distributed Program]
		The quantum teleportation protocol presented in Sec.~\ref{sec:telep} can be written as a distributed program
		$\textit{Teleport}\define \mathit{Alice}\ \|\ \mathit{Bob}$
		where $\mathit{Alice}\define$
		\begin{align*}
			& q,q_1 \apply \textit{CNOT};\ q\apply H;\  z_A := \mymeas\ q;\  x_A := \mymeas\ q_1;\ 
			stage_A := 0;\\
			& \mathbf{do}\ stage_A = 0; c!x_A \ra stage_A := 1
			\  \square\  stage_A = 1; d!z_A \ra stage_A := 2\ 
			\mathbf{od}
		\end{align*}
		and $\textit{Bob}\define$
		\begin{align*}
			& stage_B := 0;\\
			& \mathbf{do}\ stage_B = 0; c?x_B \ra stage_B := 1; \ \measstm{x_B=1}{q_2\apply X}{\sskip} \\
			&\ \  \square\  stage_B = 1; d?z_B \ra stage_B := 2; \ \measstm{z_B=1}{q_2\apply Z}{\sskip} \\
			& \mathbf{od}
		\end{align*}
		
	\end{example}
	
	\section{Operational and Denotational Semantics}
	\label{sec:semantics}
	
	We recall some basic notions from~\cite{feng2020quantum} to define the semantics of distributed quantum programs.
	\subsection{Classical-quantum states}

	Let $\Sigma \define \CVar \rightarrow D$ be the (uncountably infinite) set of \emph{classical states}, where $D\define D_{\tybool}\cup D_{\tyint}$.  We further require that states in $\cstates$ respect the types of classical variables; that is, $\cstate(x) \in D_{\mathit{type}(x)}$ for all $\cstate\in \cstates$ and $x\in \CVar$. For $V\subseteq \QVar$, let $\dhv$ be the set of partial density operators on $\h_V$; that is, positive linear operators with the trace being less than or equal to 1. Furthermore, let $\z_{\h_V}\in \dhv$ be the zero operator on $\h_{V}$.
	
	\begin{definition}\label{def:cqstate}
		Given $V\subseteq \QVar$, a \emph{classical-quantum state} (cq-state for short) $\qstate$ on $V$ is a function in $\cstates\rightarrow\dhv$ such that  
		\begin{enumerate}
			\item the support of $\qstate$, denoted $\supp{\qstate}$, is countable. That is, $\qstate(\cstate) \neq \z_{\h_V}$ for at most countably infinite many $\cstate\in \cstates$;
			\item $	\tr(\qstate) \define \sum_{\cstate\in \supp{\qstate}}\tr[\qstate(\cstate)] \leq 1$.
		\end{enumerate}
	\end{definition}
	
	Denote by $\qVar(\qstate)$ the set $V$ of quantum variables in $\qstate$ defined in Definition~\ref{def:cqstate}.
	Sometimes it is convenient to denote a cq-state $\qstate$ by the explicit form $\bigoplus_{i\in I}\<\cstate_i, \rho_i\>$ where
	$\supp{\qstate}= \{\cstate_i : i\in I\}$ 
	and $\qstate(\cstate_i)=\rho_i$ for each $i\in I$. 
	When $\qstate$ is a simple function such that $\supp \qstate=\{\cstate\}$ for
	some $\cstate$ and $\qstate(\cstate)=\rho$, we denote $\qstate$ simply by $\<\cstate, \rho\>$. 
	Let  $\{\qstate_i : i\in I\}$ be a countable set of cq-states over $V$ such that  for any $\cstate$,
	$\sum_{i\in I} \qstate_i(\cstate) = \rho_\cstate$ for some $ \rho_\cstate\in \dhv$ and $\sum_{i\in I} \tr(\qstate_i) \leq 1$. Then the summation of them, denoted $\sum_{i\in I} \qstate_i$, is a cq-state $\qstate$ over $V$
	such that for any $\cstate\in \cstates$, $\qstate(\cstate)= \rho_\cstate$. Obviously, $\supp{\qstate} = \bigcup_{i\in I}\supp{\qstate_i}$. 
	It is worth noting the difference between $\sum_{i\in I} \<\cstate_i, \rho_i\>$, the summation of some (simple) cq-states, and  $\bigoplus_{i\in I} \<\cstate_i, \rho_i\>$, the explicit form of a single one: in the latter $\cstate_i$'s must be distinct while in the former they may not.

	Let $\qstatesh{V}$ be the set of all cq-states over $V$, and $\s$ the set of all cq-states; that is, 
	$\s \define \bigcup_{V\subseteq \QVar} \qstatesh{V}.$ 
	We extend the L\"{o}wner order $\le_V$ for $\d(\h_V)$ pointwisely to $\s$ by letting $\qstate\le \qstate'$ iff $qv(\qstate) = qv(\qstate')$ and for all $\cstate\in \Sigma$, $\qstate(\cstate) \le_{qv(\qstate)} \qstate'(\cstate)$. 
	Then $\qstatesh{V}$ is a pointed $\omega$-CPO under $\le$, with the least element being the constant $\z_{\h_V}$ function, denoted $\emptydis_V$. Furthermore, $\s$ as a whole is an $\omega$-CPO under $\le$.
	When $\qstate \le \qstate'$, there exists a unique $\qstate''\in \qstatesh{qv(\qstate)}$, denoted $\qstate' - \qstate$, such that $\qstate'' + \qstate = \qstate'$. 
	For any real numbers $\lambda_i$, $i\in I$, if both $\qstate_+ \define \sum_{\lambda_i >0}\lambda_i  \qstate_i$ and $\qstate_- \define \sum_{\lambda_i <0}(-\lambda_i)  \qstate_i$ are well-defined and $\qstate_- \le  \qstate_+$ ,
	then the linear-sum $\sum_{i\in I}\lambda_i  \qstate_i$ is defined to be $\qstate_+ - \qstate_-$. In the rest of this paper, whenever we write $\sum_{i\in I}\lambda_i  \qstate_i$ we always assume that it is well-defined.
	Finally, let $\e$ be a completely positive and trace-nonincreasing super-operator from $\l(\h_V)$ to $\l(\h_W)$. We extend it to $\qstatesh{V}$ in a pointwise way: $\e(\qstate)(\cstate) = \e(\qstate(\cstate))$ for all $\cstate$.
	\subsection{Operational Semantics}

	\newsavebox{\tablebox}
	\begin{table}[t]
		\begin{lrbox}{\tablebox}
			\centering
			\begin{tabular}{c}
				{\renewcommand{\arraystretch}{2.5}
					\begin{tabular}{ll}
						$\<\sskip, \cstate, \rho\> \ra \<E, \cstate, \rho\>$ & $\<x:= e, \cstate, \rho\> \ra \<E, \cstate[\subs{x}{\cstate(e)}], \rho\>$ \\
						$\<q:=0, \cstate, \rho\> \ra \<E, \cstate, \sum_{i=0}^{d_{q}-1}\qzi \rho\qiz\>$ & $\<\bar{q}\apply U, \cstate, \rho\> \ra \<E, \cstate, U_{\bar{q}} \rho U_{\bar{q}}^\dag\>$\\
						$\displaystyle\frac{}{\<x\rassign g, \cstate, \rho\> \ra \sum_{d\in D_{\mathit{type}(x)}}g(d)\cdot \<E, \cstate[\subs{x}{d}], \rho\>}$ &
						$\displaystyle\frac{\m = \{M_i : i\in I\},\ \rho_i = M_i \rho M_i^\dag,\ p_i = \tr(\rho_i) }{\<x:= \mymeas\ \m[\bar{q}], \cstate, \rho\> \ra \sum_{p_i>0} p_i \cdot \<E, \cstate[\subs{x}{i}], \rho_i/p_i\>}$\\
						$\displaystyle\frac{\cstate \models B_i, 1\leq i\leq n}{\<\altercom, \cstate, \rho\> \ra  \<S_i, \cstate, \rho\>}$
						& $\displaystyle\frac{\cstate \models \bigwedge_{i=1}^n \neg B_i}{\<\altercom, \cstate, \rho\> \ra  \<E, \fail, \rho\>}$\\
						$\displaystyle\frac{\cstate \models B_i, 1\leq i\leq n}{\<\repcom, \cstate, \rho\> \ra  \<S_i; \repcom, \cstate, \rho\>}$&
						$\displaystyle\frac{\cstate \models \bigwedge_{i=1}^n \neg B_i}{\<\repcom, \cstate, \rho\> \ra  \<E, \cstate, \rho\>}$\\
						$\displaystyle\frac{\<S_0, \cstate, \rho\> \ra \sum_{i\in I}p_i\cdot \<S_i, \cstate_i, \rho_i\>}{\<S_0; S_1, \cstate, \rho\> \ra \sum_{i\in I}p_i\cdot \<S_i; S_1, \cstate_i, \rho_i\>}$\ where $E; S_1 \equiv S_1$ & $\displaystyle\frac{ \sigma\models\bigwedge_{j=1}^m\neg B_j}{\<\mathbf{do}\ \square_{j=1}^m B_j;\alpha_j\rightarrow S_j\ \mathbf{od},\sigma, \rho\>\rightarrow\<E,\sigma,\rho\>}$\\
				\end{tabular} }\\
				\\
				{\renewcommand{\arraystretch}{2.5}
					(Paral)\ \ $\displaystyle\frac{\<S_k, \cstate, \rho\> \ra \sum_{i\in I}p_i\cdot \<S_{k,i}, \cstate_i, \rho_i\>,\ 1\leq k\leq n}{\< S_1\|\ldots\|S_k\|\ldots\|S_n,\sigma,\rho\>\rightarrow \sum_{i\in I}p_i\cdot \< S_1\|\ldots\|S_{k,i}\|\ldots\|S_n, \sigma_i,\rho_i\>
					}$}\\
				\\
				{\renewcommand{\arraystretch}{1.5}
					(Comm)\ \ $\frac{\begin{array}{cc}S_k\equiv \mathbf{do}\ \square_{j=1}^m B_{k,j}; \alpha_{k,j}\rightarrow S_{k,j}\ \mathbf{od},\quad S_\ell\equiv\mathbf{do}\ \square_{j=1}^{m'} B_{\ell,j}; \alpha_{\ell,j}\rightarrow S_{\ell,j}\ \mathbf{od},\ 1\leq k<\ell\leq n\\
							\sigma\models B_{k,j_1}\wedge B_{\ell,j_2},\  \mbox{$\alpha_{k,j_1}$ and $\alpha_{\ell,j_2}$ match}, \mathit{Effect}(\alpha_{k,j_1}, \alpha_{\ell,j_2}) \equiv x := e,\ 1\leq j_1\leq m, 1\leq j_2\leq m'
					\end{array}}{\begin{array}{cc}\< S_1\|\ldots\|S_n,\sigma,\rho\>\rightarrow \< S_1'\|\ldots\|S_n', \sigma[\sigma(e)/x],\rho\>\\
							{\rm where}\ S_k'\define S_{k,j_1}; S_k,\ S_\ell'\define S_{\ell,j_2};S_\ell,\ {\rm and}\ S_i'\define S_i\ {\rm  for}\ i\neq k, \ell 
					\end{array}}$}
			\end{tabular} 
		\end{lrbox}
		\resizebox{\textwidth}{!}{\usebox{\tablebox}}\\
		\vspace{4mm}
		\caption{Operational semantics for distributed quantum programs, where $\cstate$ is a proper classical state; i.e., $\cstate\not\equiv \fail$. 
		}
		\label{tbl:opsemantics}
	\end{table}
	
	Let $\prog$ be the set of all distributed quantum programs. A \emph{configuration} is a triple $\<S, \cstate, \rho\>$ where
	$S\in \prog \cup \{E\}$ with $E$ being a special symbol to denote termination, $\cstate\in \Sigma\cup \{\fail\}$ with $\fail$ being another special symbol to denote the failure state, and $\rho\in \d(\h_V)$ for some $V$ subsuming $\qv(S)$ with $\tr(\rho) =1$. We always identify $E\|\ldots \| E$ with $E$. The operational semantics of programs in $\prog$ is defined as the smallest transition relation $\rightarrow$  given in Table~\ref{tbl:opsemantics}. 
	
	\begin{remark}
		The transition rules presented in Table~\ref{tbl:opsemantics}  for sequential quantum programs follows the same spirit as in~\cite{feng2020quantum}, except for the newly introduced alternative and repetitive commands whose semantics definitions are also standard~\cite{dijkstra1976discipline}. The rules (Paral) and (Comm) are similar to their analogy for classical non-probabilistic programs~\cite{apt2010verification}.
		
		It is worth noting that the transitions for quantum measurements and probabilistic assignments in~\cite{feng2020quantum} are given in a `non-deterministic' way, with the probabilities of different branches being encoded in the quantum part of the configurations (by allowing partial density operators instead of density operators in configurations). Note that it is only a matter of notational convenience to represent probabilistic choices with non-determinism. However, distributed quantum programs investigated in this paper exhibit real non-determinism due to the possible interleaving of local actions and communication of different sequential processes. To distinguish these two types of non-determinism, we decide to model quantum measurements and probabilistic assignments in a (standard) probabilistic way.	\qed
	\end{remark}

	The following lemma, which can be easily proved by inspecting the transition rules in Table~\ref{tbl:opsemantics}, shows that $\rightarrow$ is indeed a relation from configurations to probability distributions of configurations.

	\begin{lemma}
		Let $\<S, \cstate, \rho\>$ be a configuration and $\<S, \cstate, \rho\>\ra \sum_{i\in I}p_i\cdot \<S_{i}, \cstate_i, \rho_i\>$. Then $\sum_{i\in I} p_i =1$.
	\end{lemma}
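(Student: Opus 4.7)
The plan is a straightforward case analysis on the transition rule used to derive $\langle S,\sigma,\rho\rangle \to \sum_{i\in I} p_i \cdot \langle S_i,\sigma_i,\rho_i\rangle$, together with structural induction for the compositional rules (sequential composition and the parallel rule \textbf{(Paral)}). The underlying principle is that the distribution on the right-hand side either has a single atom (weight $1$) or is obtained from a known probability distribution attached to the quantum/probabilistic primitive.

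First I would dispatch the deterministic rules, namely $\sskip$, $x:=e$, $q:=0$, $\bar{q}\apply U$, the two branches of the alternative command (including the failure transition), the two branches of the repetitive command, and the communication rule \textbf{(Comm)}. Each of these rules produces a single successor configuration, which is understood as the Dirac distribution with total mass $1$, so the claim is immediate. Next I would handle the two genuinely probabilistic primitives. For $x\rassign g$ the weights are exactly $g(d)$ for $d\in D_{\mathit{type}(x)}$, and they sum to $1$ because $g$ is by definition a discrete probability distribution on $D_{\mathit{type}(x)}$. For $x:=\mymeas\ \m[\bar{q}]$ with $\m=\{M_i:i\in I\}$, the weights are $p_i = \tr(M_i\rho M_i^\dagger) = \tr(M_i^\dagger M_i\rho)$, so
\[
\sum_{i\in I} p_i \;=\; \tr\!\Bigl(\bigl(\textstyle\sum_{i\in I} M_i^\dagger M_i\bigr)\rho\Bigr) \;=\; \tr(\rho) \;=\; 1,
\]
using the completeness relation $\sum_i M_i^\dagger M_i = I$ for a quantum measurement and the standing assumption in the definition of a configuration that $\tr(\rho)=1$.

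For the compositional rules I would argue by structural induction on $S$. If the transition of $S_0;S_1$ is derived from $\langle S_0,\sigma,\rho\rangle \to \sum_{i\in I} p_i\cdot \langle S_i,\sigma_i,\rho_i\rangle$, then the successor distribution of $\langle S_0;S_1,\sigma,\rho\rangle$ has exactly the same weights $p_i$, and the induction hypothesis applied to the premise gives $\sum_i p_i = 1$. The same is true of rule \textbf{(Paral)}: the weights in the conclusion are identical to those produced by the component transition $\langle S_k,\sigma,\rho\rangle\to\sum_{i\in I}p_i\cdot\langle S_{k,i},\sigma_i,\rho_i\rangle$, which again sum to $1$ by the induction hypothesis.

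The only non-routine ingredient is the measurement case, but this is essentially just the standard fact that quantum measurements preserve the trace, so there is no real obstacle. The case analysis is otherwise mechanical; the main thing to be careful about is to read off correctly, for each syntactic construct, which transition rule(s) in Table~\ref{tbl:opsemantics} can actually apply to $S$ in the configuration $\langle S,\sigma,\rho\rangle$, and to use the assumption $\tr(\rho)=1$ built into the definition of a configuration when handling the measurement rule.
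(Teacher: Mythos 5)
Your proposal is correct and matches the paper's approach: the paper dispatches this lemma with the remark that it follows ``by inspecting the transition rules in Table~\ref{tbl:opsemantics}'', and your case analysis (Dirac distributions for the deterministic rules, $\sum_d g(d)=1$ for probabilistic assignment, the completeness equation $\sum_{i\in I} M_i^\dag M_i = I$ together with $\tr(\rho)=1$ for measurement, and weight-preserving structural induction for sequencing and \textbf{(Paral)}) is precisely that inspection carried out in detail. No gaps; the proposal is a faithful expansion of the paper's one-line argument.
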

	
	The next lemma extends the Change and Access lemma for classical programs by considering the effects of transitions on quantum states.
	\begin{lemma}[Change and Access]\label{lem:supstep}
		Let $\<S, \cstate, \rho\> \ra \mu$. Then there exist a set $\{S_i: i\in I\}$ of distributed programs with $v(S_i)\subseteq v(S)$ for $v\in \{\change, \qv, \cVar\}$, a set $\{f_i : i\in I\}$ of functions over $\cstates$, and a set $\{\e_i: i\in I\}$ of super-operators acting on $\h_{\qVar(S)}$ such that
		\begin{enumerate}
			\item for each $i$, $f_i$ does not change the value of variables outside $\change(S)$. That is, for all $\tau\in \Sigma$, $f_i(\tau)|_V =  \tau|_V$ where $V\define \CVar\backslash\change(S)$;		
			\item for each $i$, $f_i$ depends only on $\cVar(S)$. That is, $f_i(\cstate)|_{\cVar(S)} = f_i(\tau)|_{\cVar(S)}$ whenever  $\cstate|_{\cVar(S)} =  \tau|_{\cVar(S)}$;
			\item $\sum_{i\in I}\e_i$ is trace-nonincreasing; 
			\item $\mu = \sum_{i\in I, p_i>0}p_i\cdot \left\<S_{i}, f_i(\cstate), \e_i(\rho)/p_i\right\>$ where $p_i = \tr(\e_i(\rho))$;
			\item for any $\cstate'$ which agrees with $\cstate$ on $\cVar(S)$, i.e. $\cstate'|_{\cVar(S)} = \cstate|_{\cVar(S)}$, and $\rho'\in \dhv$ with $V\supseteq \qVar(S)$,  
			\begin{equation}\label{eq:tranform}
				\left\<S, \cstate', \rho'\right\>\ra \sum_{i\in I, p_i'>0}p_i'\cdot \left\<S_{i}, f_i(\cstate'), \e_i(\rho')/p_i'\right\>
			\end{equation} 
			where $p_i' = \tr(\e_i(\rho'))$. 
		\end{enumerate} 
	\end{lemma}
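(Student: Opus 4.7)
The plan is to prove Lemma~\ref{lem:supstep} by structural induction on $S$, exploiting the syntax-directedness of the rules in Table~\ref{tbl:opsemantics}. For each transition rule, the candidate $S_i$, $f_i$, and $\e_i$ can be read almost verbatim off the rule; the main work is verifying the five conditions and propagating them across the inductive cases. Since the quantum carrier $\rho'$ in condition (5) lives on a possibly larger register $V\supseteq\qv(S)$, I will always interpret each $\e_i$ as a super-operator on $\l(\h_{\qv(S)})$ and extend it as $\e_i\otimes\id_{V\setminus\qv(S)}$ when applied to $\rho'$; trace-nonincrease is preserved by tensoring with identity.

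I would handle the atomic cases first. For $\sskip$, $x:=e$, $q:=0$, and $\bar q\apply U$, the decomposition is a single summand and $f_i$ (resp.\ $\e_i$) either acts as identity or touches only $x\in\change(S)$ (resp.\ only $\bar q\subseteq\qv(S)$); trace-nonincrease is obvious. For $x\rassign g$, the summands are indexed by $d\in D_{\type(x)}$ with $f_d(\sigma)=\sigma[\subs{x}{d}]$ and $\e_d = g(d)\cdot\id$, and $\sum_d g(d)\leq 1$. For $x:=\mymeas\ \m[\bar q]$, take $f_i(\sigma)=\sigma[\subs{x}{i}]$ and $\e_i(\rho)=M_i\rho M_i^\dag$; trace-nonincrease is $\sum_i M_i^\dag M_i\leq I$. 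In each case $f_i$ is either constant on $x$ or evaluates an expression over $\cVar(S)$, giving conditions (1) and (2); uniformity over $(\sigma',\rho')$ in condition (5) follows because the rule inspects only $\sigma|_{\cVar(S)}$ and acts on $\rho$ through the fixed super-operator.

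For the alternative/repetitive commands, mutual exclusivity of the $B_j$ means that for each $\sigma$ exactly one rule applies; the single summand is given by $f=\id$, $\e=\id$, and $S_i$ being a syntactic subterm, and the choice itself depends only on $\sigma|_{\cVar(S)}$. Sequential composition $S_0;S_1$ is handled by applying the IH to the transition of $S_0$, post-composing the resulting programs with $S_1$ (collapsing $E;S_1$ to $S_1$), and noting that $\change(S_0)\subseteq\change(S)$, $\cVar(S_0)\subseteq\cVar(S)$, and $\qv(S_0)\subseteq\qv(S)$, so the conditions for $S_0$ imply those for $S$. For (Paral), applying the IH to the moving component $S_k$ gives functions and super-operators supported on $\change(S_k)$ and $\qv(S_k)$; the pairwise-disjointness clause of Definition~\ref{def:distprog} guarantees that these sets are contained in $\change(S)$ and $\qv(S)$ and do not interfere with the variables of the other components. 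For (Comm), the effect is a purely classical assignment $x:=e$ with $\e=\id$, and the constraints on $x$ and the free variables of $e$ follow from the pairwise-disjointness of the two participating processes.

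I expect the main obstacle to lie in the bookkeeping for the composite cases rather than in any single deep observation: one has to track, simultaneously, (i) that $v(S_i)\subseteq v(S)$ for $v\in\{\change,\qv,\cVar\}$ under each form of composition (which is immediate from the fact that no rule introduces fresh variables or channels), (ii) that extending each $\e_i$ by $\id$ on ancillary qubits keeps $\sum_i\e_i$ trace-nonincreasing, and (iii) that the uniformity property (5) lifts through the inductive rules---which it does precisely because the underlying transition rules never read classical values outside $\cVar(S)$ and never act on quantum registers outside $\qv(S)$. Once these invariants are set up, the case analysis reduces to routine pattern-matching against the rules of Table~\ref{tbl:opsemantics}.
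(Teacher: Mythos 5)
Your proposal is correct and takes essentially the same route as the paper: the paper's entire proof of Lemma~\ref{lem:supstep} is a one-line appeal to induction on the structure of $S$, which is precisely the structural induction over the rules of Table~\ref{tbl:opsemantics} that you carry out. Your case analysis---reading $S_i$, $f_i$, $\e_i$ off each rule, extending each $\e_i$ by the identity on ancillary registers, and lifting conditions (1)--(5) through sequential composition, (Paral), and (Comm)---simply supplies the bookkeeping that the paper leaves implicit.
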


	A configuration is called a \emph{terminal} if it has no successor distributions. Because of the communication constraints, distributed programs can also end up with a \emph{deadlock} configuration, in which not all the processes terminate properly (become $E$), and none of them has led to a failure (the classical state becomes $\fail$). In other words, $\<S, \cstate, \rho\>$ is a terminal iff $S\equiv E$, $\cstate \equiv \fail$, or it is a deadlock. For a distribution $\mu= \sum_{i\in I} p_i \cdot  \<S_{i}, \cstate_i, \rho_i\>$ of configurations, we denote by 
	$$\qstate_\mu\define \sum_{i\in I, S_i \equiv E, \cstate_i\not\equiv \fail} \<\cstate_i, p_i \rho_i\>$$ 
	the cq-state obtained by restricting $\mu$ on the properly terminated configurations. Let $\Pr_\mu(E) \define \tr(\qstate_\mu)$ be the probability of $\mu$ having properly terminated.

	The transition relation $\ra$ defined above can be further extended to distributions of configurations by letting $\mu \ra \nu$ where $\mu = \sum_{i\in I} p_i \cdot c_i$ if 
	(1) for each $i$, $c_i\ra \nu_i$ for some $\nu_i$ whenever $c_i$ is not a terminal; otherwise, let $\nu_i \define c_i$, and (2) $\nu = \sum_{i\in I} p_i\cdot \nu_i$. It is easy to check that such a $\nu$ is a valid distribution over configurations. 
	Let $\ra^k$ be the $k$-fold composition of $\ra$, and ${\ra^*} \define \bigcup_{k\geq 0} \ra^k$ the reflexive and transitive closure of $\ra$.

	Let $S\in \prog$, and $\<\cstate, \rho\>\in \qstatesh{V}$ with $V\supseteq \qv(S)$ and $\tr(\rho)=1$.
	A \emph{computation} of $S$ starting in $\<\cstate, \rho\>$ is an infinite sequence $\pi \define \{\mu_i: i\geq 0\}$ of distributions over configurations where $\mu_0 = \<S, \cstate, \rho\>$ and for each $i\geq 0$, $\mu_i \ra \mu_{i+1}$.

	\begin{lemma}\label{lem:infcom}
		Let $\pi \define \{\mu_i: i\geq 0\}$ be a computation starting in $\<\cstate, \rho\>$. Then $\qstate_{\mu_0}\le \qstate_{\mu_1} \le \ldots$.
	\end{lemma}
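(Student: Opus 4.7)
The plan is to prove the claim by showing $\qstate_{\mu_i}\le \qstate_{\mu_{i+1}}$ for each $i$, from which the chain follows by transitivity of the L\"owner order. The key intuition is that once a configuration becomes properly terminated (of the form $\<E,\cstate,\rho\>$ with $\cstate\not\equiv \fail$), it is a terminal in the extended transition relation and hence its contribution to the properly-terminated mass is preserved in all later distributions; non-terminal configurations can only add new proper terminations in the next step, never remove existing ones.

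First, I would write $\mu_i = \sum_{j\in J} p_j\cdot c_j$ with $c_j = \<S_j,\cstate_j,\rho_j\>$, and partition the index set $J$ into three classes:
(a) properly terminated configurations with $S_j\equiv E$ and $\cstate_j\not\equiv \fail$;
(b) improperly terminated configurations (i.e.\ with $\cstate_j\equiv \fail$ or deadlocked);
(c) non-terminal configurations.
By the definition of the lifted relation $\mu_i\ra \mu_{i+1}$, we have $\mu_{i+1}=\sum_j p_j\cdot \nu_j$, where $\nu_j = c_j$ for $j$ in classes (a) and (b), and $c_j\ra \nu_j$ for $j$ in class (c).

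Next I would analyse the three classes' contributions to $\qstate_{\mu_{i+1}}$. For (a), each $c_j$ survives unchanged and so contributes exactly $\<\cstate_j, p_j\rho_j\>$ to $\qstate_{\mu_{i+1}}$, matching its contribution to $\qstate_{\mu_i}$. For (b), the contribution to both $\qstate_{\mu_i}$ and $\qstate_{\mu_{i+1}}$ is zero, since these configurations are filtered out in the definition of $\qstate_\mu$. For (c), $\nu_j$ is a probability distribution over configurations, and any properly terminated configurations appearing in $\nu_j$ yield a non-negative cq-state $\eta_j\in \qstatesh{V}$ contributing to $\qstate_{\mu_{i+1}}$; since $c_j$ itself is not properly terminated, its contribution to $\qstate_{\mu_i}$ was zero. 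Collecting everything gives
\begin{equation*}
\qstate_{\mu_{i+1}} = \qstate_{\mu_i} + \sum_{j\in(c)} p_j\cdot \eta_j,
\end{equation*}
with the second summand a cq-state (in particular a positive operator-valued function pointwise). Hence $\qstate_{\mu_i}\le \qstate_{\mu_{i+1}}$.

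The only subtlety is bookkeeping: one needs to check that all cq-states involved live on the same quantum variable set $V$, which follows because the quantum variables of a computation are fixed by the initial configuration and preserved by every transition rule in Table~\ref{tbl:opsemantics} (they do not grow along a computation); and that the linear sum is well-defined, which reduces to $\tr(\mu_{i+1})\le 1$, guaranteed by the fact that $\ra$ on single configurations produces a probability distribution (by the previous lemma) and the extended $\ra$ inherits this. No single step is deep; the only place where care is required is the case analysis on terminal versus non-terminal configurations, together with a clean splitting of $\qstate_\mu$ into contributions from the three classes above.
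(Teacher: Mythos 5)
Your proof is correct and takes essentially the same route as the paper, whose entire argument is the one-line observation that a terminal configuration's only successor under the lifted relation $\ra$ is itself, so properly terminated mass can only accumulate; your three-way case analysis on classes (a), (b), (c) simply makes this explicit. The bookkeeping points you raise (fixed quantum variable set $V$ along a computation, well-definedness of the cq-state summation) are handled correctly and are indeed the only details the paper leaves implicit.
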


	With Lemma~\ref{lem:infcom}, we can define for any computation  $\pi \define \{\mu_i: i\geq 0\}$ the cq-state computed by $\pi$ as $\qstate_\pi\define\bigvee_{i\geq 0} \qstate_{\mu_i}$, the least upper bound of $\qstate_{\mu_i}$ according to $\le$.
	
	\begin{example}[Operational Semantics of Quantum Teleportation]\label{exa:ostelep}
		Let $\cstate$ be a classical state and $|\psi\>$ a pure state in $\h_2$. Then one of the computations, denoted $\pi$, of $\mathit{Teleport}$ starting in $\left\<\cstate, |\psi\>_q\<\psi|\otimes |\beta\>_{q_1, q_2}\<\beta|\right\>$ is shown as follows:
		\begin{eqnarray*}
			& &\conf{\mathit{Teleport}}{\cstate}{ [|\psi, \beta\>]}\\
			&\ra^5 & \sum_{i,j=0,1} \frac 14 \cdot  \conf{\mathbf{do}_a\| \mathbf{do}_b}{\cstate[i/x_A, j/z_A, 0/stage_A, 0/stage_B}{[|j,i, X^iZ^j\psi\>]}\\
			&\ra & \sum_{i,j=0,1} \frac 14 \cdot  \left\<stage_A := 1;\mathbf{do}_a\| stage_B := 1;  \iif\ {x_B=1} \ra {q_2\apply X}\ \square\ \neg (x_B =1)\ra  \right. \\
			& & \qquad\qquad\left. {\sskip};\ \mathbf{do}_b, \cstate[i/x_A, j/z_A, 0/stage_A, 0/stage_B, i/x_B], [|j,i, X^iZ^j\psi\>]\right\>\\
			&\ra^4 & \sum_{i,j=0,1} \frac 14 \cdot  \left\<\mathbf{do}_a\|  \mathbf{do}_b, \cstate[i/x_A, j/z_A, 1/stage_A, 1/stage_B, i/x_B], [|j,i, Z^j\psi\>]\right\>\\	
			&\ra^5 & \sum_{i,j=0,1} \frac 14 \cdot  \left\<\mathbf{do}_a\|  \mathbf{do}_b, \cstate[i/x_A, j/z_A, 2/stage_A, 2/stage_B, i/x_B, j/z_B], [|j,i, \psi\>]\right\>\\	
			&\ra^2 & \mu \define \sum_{i,j=0,1} \frac 14 \cdot  \left\<E, \cstate[i/x_A, j/z_A, 2/stage_A, 2/stage_B, i/x_B, j/z_B], [|j,i, \psi\>]\right\>\\
			&\ra & \mu  \ra \cdots
		\end{eqnarray*}
		where $\mathbf{do}_a$ and $\mathbf{do}_b$ are the $\mathbf{do}$-loops of $\textit{Alice}$ and $\textit{Bob}$, respectively. For pure state $|\phi\>$, we denote by $[|\phi\>]$ its corresponding density operator $|\phi\>\<\phi|$. Thus $$\qstate_\pi = \sum_{i,j=0,1}  \cqs{\cstate[i/x_A, j/z_A, 2/stage_A, 2/stage_B, i/x_B, j/z_B]}{\frac 14[|j,i, \psi\>]}.$$
	\end{example}
	
	Note that although each component process of a distributed program is deterministic, the whole program can still exhibit nondeterminism. This is due to the
	interleaving nature of local actions of individual processes and communication between disjoint pairs of processes; see Rules (Paral) and (Comm) in Table~\ref{tbl:opsemantics}. However, the following theorem shows that these different computations actually compute the same cq-state.

	\begin{theorem}[Determinism]\label{thm:determ}
		Let $S\in \prog$ be a distributed quantum program, and $\<\cstate, \rho\>\in \qstatesh{V}$ with $V\supseteq \qv(S)$ and $\tr(\rho)=1$. Then the set $$\{\qstate_\pi : \pi \mbox{ is a computation of $S$ starting in $\<\cstate, \rho\>$}\}$$ has exactly one element.
	\end{theorem}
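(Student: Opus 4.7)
The plan is to establish a one-step diamond (or commutation) property and then lift it to entire (possibly infinite) computations so that every computation converges to the same least upper bound $\qstate_\pi$.

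First, I would isolate the precise footprint of each transition rule. Using Lemma~\ref{lem:supstep}, a (Paral) step taken inside component $S_k$ only writes variables in $\change(S_k)\cup\qv(S_k)$ and only reads variables in $\cVar(S_k)\cup\qv(S_k)$; a (Comm) step between $S_k$ and $S_\ell$ only writes a single variable in $\change(S_k)\cup\change(S_\ell)$ and touches no quantum register at all. By the pairwise-disjointness clause of Definition~\ref{def:distprog}, any two transitions that originate in disjoint sets of components therefore act on disjoint classical variables and on disjoint quantum registers. The induced super-operators consequently factor through a tensor product on $\h_{\qv(S)}$ and commute, and the classical state updates commute too.

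Next, I would prove a Diamond Lemma for one-step transitions: if $\<S,\cstate,\rho\>\ra\nu_1$ and $\<S,\cstate,\rho\>\ra\nu_2$ are two \emph{distinct} applications of the rules, then there is a distribution $\nu$ with $\nu_1\ra\nu$ and $\nu_2\ra\nu$. The case analysis is (i) two (Paral) steps in different components; (ii) a (Paral) step in $S_k$ and a (Comm) between $S_\ell, S_m$ with $k\notin\{\ell,m\}$; and (iii) two (Comm) steps on disjoint process pairs, which the point-to-point-connection clause of Definition~\ref{def:distprog} guarantees are possible only on different channels. In each case the footprints are disjoint, so the two transitions can be performed in either order and yield the same resulting distribution; the fact that each sequential process is deterministic (mutually-exclusive guards) ensures that taking one step never enables or disables the other. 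The potentially subtle subcase is when one transition is a (Paral) step in $S_k$ and the other is a (Comm) involving $S_k$; here one checks that for $S_k$ in the form $\mathbf{do}\ \square_j B_{k,j};\alpha_{k,j}\to S_{k,j}\ \mathbf{od}$ no (Paral) step is available from that configuration at all, so no conflict arises.

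Third, I would lift the diamond to the level of distributions and iterate. A standard inductive argument (sometimes called the strip lemma) turns the one-step diamond into a finite-confluence statement: given any two computations $\pi^{(1)}=\{\mu_i^{(1)}\}$ and $\pi^{(2)}=\{\mu_j^{(2)}\}$ from $\<S,\cstate,\rho\>$, for every $i,j$ there exist $i',j'$ with $\mu_i^{(1)}\ra^{i'}\nu$ and $\mu_j^{(2)}\ra^{j'}\nu$ for a common $\nu$. By Lemma~\ref{lem:infcom}, the cq-state $\qstate$ is monotone along $\ra$, so $\qstate_{\mu_i^{(1)}}\le\qstate_\nu$ and $\qstate_{\mu_j^{(2)}}\le\qstate_\nu$. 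Taking suprema in both directions gives $\qstate_{\pi^{(1)}}=\qstate_{\pi^{(2)}}$, which is the theorem.

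The main obstacle will be the Diamond Lemma in the presence of probabilistic branching caused by measurement (rule for $x:=\mymeas\ \m[\bar q]$) and random assignment (rule for $x\rassign g$). Such a transition produces a convex combination of successor configurations, so the commuting step must be replayed independently in each branch, and one must verify that the resulting two-level distributions coincide. Because the other step's super-operator acts on qubits disjoint from $\bar q$, its Kraus operators commute with the $M_i$'s, which gives the required equality of the interleaved Kraus expansions; but tracking this carefully through the distribution-valued notation of Table~\ref{tbl:opsemantics} — in particular, ensuring that the probabilities $p_i$ and the residual operators $\rho_i/p_i$ recombine correctly after swapping the order of the two transitions — is the step that demands the most care.
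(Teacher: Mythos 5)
Your commutation core is sound and matches what the paper actually uses: by Lemma~\ref{lem:supstep} and the pairwise-disjointness clause of Definition~\ref{def:distprog}, independent transitions have disjoint classical footprints and their super-operators act on disjoint quantum registers, hence commute; and since each process is deterministic, executing one independent action neither enables nor disables the other. (One small slip: a do-loop \emph{does} admit a (Paral) step, namely the exit transition when all guards are false; the reason a (Paral) step and a (Comm) step cannot both be enabled in the same component is that the exit needs $\bigwedge_j \neg B_{k,j}$ while (Comm) needs some $B_{k,j_1}$ true --- mutual exclusivity, not absence of (Paral) steps.)

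The genuine gap is in your final lifting step. Joinability plus monotonicity (Lemma~\ref{lem:infcom}) gives you, for each $i,j$, a common reduct $\nu_{ij}$ with $\qstate_{\mu_i^{(1)}} \le \qstate_{\nu_{ij}}$ and $\qstate_{\mu_j^{(2)}} \le \qstate_{\nu_{ij}}$; but $\nu_{ij}$ lies on \emph{neither} computation, so this only says the two monotone families have pairwise common upper bounds, which does not force their suprema to be equal (an abstract monotone grid $g(i,j)$ can have $\sup_i g(i,0) \neq \sup_j g(0,j)$). To conclude $\qstate_{\pi^{(1)}} = \qstate_{\pi^{(2)}}$ you would need something like: for every $i$ there exists $j$ with $\qstate_{\mu_i^{(1)}} \le \qstate_{\mu_j^{(2)}}$, and this is exactly where the missing idea sits. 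The paper does not argue by confluence at all: it fixes a linear priority on transition labels, observes that there is a \emph{unique} ``good'' computation (always firing the minimal enabled label), and proves that any computation can be transformed into the good one while preserving $\qstate_\pi$ \emph{exactly}. The key lemma enabling this surgery --- which your proposal never states or proves --- is that the good computation's chosen action $B$ must eventually be executed along \emph{every terminating run} of an arbitrary computation (because no other process can alter $\cVar(S_k)$, $B$ stays continuously enabled, and a run cannot reach $E$ with $B$ pending); runs on which $B$ never fires are non-terminating and contribute nothing to $\qstate_\pi$, since $\qstate_\mu$ counts only fully terminated mass. The derivative-tree surgery then commutes the first occurrence of $B$ to the root, branch by probabilistic branch, with Lemma~\ref{lem:supstep}(5) guaranteeing the same $f_j$, $\e_j$ apply at the delayed occurrence, and iterating this normalisation yields equality with the good computation's cq-state. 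Without this ``pending actions fire on every terminating run'' argument (or an equivalent random-descent-style strengthening of your diamond), your proof establishes local confluence of the transition system but not the determinism of the computed cq-state.
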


	\subsection{Denotational Semantics}

	With Theorem~\ref{thm:determ}, the denotational semantics of distributed quantum programs can be defined using the operational one. Let $\qstatesh{\supseteq \qv(S)} \define\bigcup_{V\supseteq qv(S)} \qstatesh{V}$.
	
	\begin{definition}\label{def:denotational}
		Let $S\in \prog$. The \emph{denotational semantics} of $S$ is a mapping 
		$\sem{S} : \qstatesh{\supseteq \qv(S)}\ra  \qstatesh{\supseteq \qv(S)}$
		such that 
		\begin{enumerate}
			\item for any $\<\cstate, \rho\>\in  \qstatesh{V}$ with $V\supseteq qv(S)$ and $\tr(\rho) =1$,
			$$\sem{S}(\cstate, \rho) \define \mbox{the unique element in } \{\qstate_\pi : \pi \mbox{ is a computation of $S$ starting in $\<\cstate, \rho\>$}\};$$
			\item for any  $\qstate = \bigoplus_{i\in I}\<\cstate_i, \rho_i\>$ (thus $\tr(\rho_i) >0$ for any $i\in I$), 
			$$\sem{S}(\qstate) \define \sum_{i\in I}\tr(\rho_i) \cdot \sem{S}\left(\cstate_i, \frac{\rho_i}{\tr(\rho_i)}\right).$$
		\end{enumerate}
	\end{definition}
	
	To simplify notation, we always write $(\cstate, \rho)$ for  $(\<\cstate, \rho\>)$ when $\<\cstate, \rho\>$ appears as a parameter of some function. 
	The next lemma guarantees the well-definedness of Definition~\ref{def:denotational}.
	
	\begin{lemma}\label{lem:welldef}
		Let $S\in \prog$ and $\qstate\in \qstatesh{V}$ with $V\supseteq \qv(S)$. Then 
		\begin{enumerate}
			\item  $\sem{S}(\qstate)$ has countable support, and $\tr(\sem{S}(\qstate)) \leq \tr(\qstate)$. Hence $\sem{S}(\qstate) \in \qstates$ as well;
			\item for any $\lambda_i\in \R$, $\sem{S}(\qstate) = \sum_i \lambda_i \cdot \sem{S}(\qstate_i)$ whenever $\qstate = \sum_i \lambda_i\cdot\qstate_i$.
		\end{enumerate}
	\end{lemma}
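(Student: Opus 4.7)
My plan is to attack the two parts in order, relying on the operational semantics and especially Lemma~\ref{lem:supstep} to push linearity properties through the computation tree.

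For part (1), I would start from the explicit form $\qstate = \bigoplus_{i\in I}\<\cstate_i, \rho_i\>$, where $I$ is countable by Definition~\ref{def:cqstate}. Unrolling Definition~\ref{def:denotational} expresses $\sem{S}(\qstate)$ as a countable sum of the $\sem{S}(\cstate_i, \rho_i/\tr(\rho_i))$, so it suffices to handle each summand. For a computation $\pi = \{\mu_j\}_{j\ge 0}$ starting from a single configuration $\<S, \cstate, \rho\>$, I would argue by induction on $j$ that $\mu_j$ is a countable sum of configurations: each transition rule in Table~\ref{tbl:opsemantics} produces at most countably many successor configurations (the probabilistic assignment and measurement rules are the only cases where countability, rather than finiteness, is needed). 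Consequently each $\qstate_{\mu_j}$ has countable support, and so does $\qstate_\pi = \bigvee_j \qstate_{\mu_j}$, hence also $\sem{S}(\qstate)$. The trace bound follows because each distribution $\mu_j$ has total mass $1$, so $\tr(\qstate_{\mu_j})\le 1$, giving $\tr(\qstate_\pi)\le 1$ and thereby $\tr(\sem{S}(\qstate)) \le \sum_i \tr(\rho_i) = \tr(\qstate)$.

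For part (2), I would first establish single-configuration linearity, namely $\sem{S}(\cstate, \rho_1 + \rho_2) = \sem{S}(\cstate, \rho_1) + \sem{S}(\cstate, \rho_2)$ whenever the trace condition is met, and $\sem{S}(\cstate, \lambda\rho) = \lambda\,\sem{S}(\cstate, \rho)$ for $\lambda \ge 0$. The key is to show by induction on $j$ that $\qstate_{\mu_j}$ depends linearly on the initial density operator $\rho$. Item~(5) of Lemma~\ref{lem:supstep} decomposes one-step transitions in terms of super-operators $\e_i$ acting linearly on $\rho$; although the distribution is normalised by $p_i = \tr(\e_i(\rho))$, the unnormalised contribution $p_i \cdot \e_i(\rho)/p_i = \e_i(\rho)$ is linear in $\rho$, and the classical label $f_i(\cstate)$ does not depend on $\rho$ at all. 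Unfolding the recursion for $\qstate_{\mu_{j+1}}$ in terms of $\qstate_{\mu_j}$ using these super-operators makes the induction go through, and passing to the lub preserves linearity.

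Once single-configuration linearity is established, scalar multiplication by $\lambda\ge 0$ for a general $\qstate$ follows by inspection of Definition~\ref{def:denotational}. Additivity $\sem{S}(\qstate+\qstate') = \sem{S}(\qstate) + \sem{S}(\qstate')$ on positive sums reduces to merging explicit forms, grouping entries sharing a classical state, and invoking single-configuration additivity there. Finally, for a signed linear combination $\qstate = \sum_i \lambda_i \qstate_i$ that is well-defined in the sense of Section~\ref{sec:semantics}, we split into $\qstate_+$ and $\qstate_-$; the identity $\qstate + \qstate_- = \qstate_+$ together with additivity gives $\sem{S}(\qstate) + \sem{S}(\qstate_-) = \sem{S}(\qstate_+)$, and then scalar homogeneity on $\qstate_\pm$ yields the required identity. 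The main obstacle will be the single-configuration linearity argument: the bookkeeping around the cancellation of the normalising factors $p_i$ must be done carefully so that the probabilistic rescaling does not break additivity, and one also needs to check that the lub over steps commutes with finite sums of monotone chains of cq-states. Once that is settled, the remaining reductions are routine.
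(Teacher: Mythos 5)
Your proposal is correct and takes essentially the same route as the paper: for clause (1) the paper likewise reduces to a simple state $\<\cstate,\rho\>$ with $\tr(\rho)=1$ and proves by induction on the step index that each $\qstate_{\mu_i}$ has countable support and trace at most $\tr(\rho)$, then extends to general states via Definition~\ref{def:denotational}(2). For clause (2) the paper only remarks that it is easy, and your fill-in --- single-configuration linearity via the super-operator decomposition of Lemma~\ref{lem:supstep}(5), with the $p_i$-normalisation cancelling in the unnormalised contributions $\e_i(\rho)$, followed by the $\qstate_+ - \qstate_-$ splitting for signed sums --- is the natural way to make that remark precise and matches how the paper deploys Lemma~\ref{lem:supstep} elsewhere (e.g., in the proof of Theorem~\ref{thm:determ}).
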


	%
	
	\section{Transformation to sequential quantum programs}\label{sec:sequentialisation}
	
	Throughout this section, we consider a distributed quantum program
	$S \define S_1\|\cdots\|S_n$ where for each $i$,
	\[
	S_i \define S_{i,0}; \mathbf{do}\ \square_{j=1}^{m_i} B_{i,j}; \alpha_{i,j}\rightarrow S_{i,j}\ \mathbf{od}.
	\]
	The transformation of $S$ into a sequential one follows the standard approach for classical (non-probabilistic) programs~\cite{apt2010verification}. 
	
	Let
	$
	\Gamma \define \{(i,j,k,\ell) : \alpha_{i,j} \mbox{ and } \alpha_{k,\ell} \mbox{ match, and } i< k\} 
	$. That is, $\Gamma$ collects all the pairs of generalised guards in the component processes which are able to communicate. The \emph{sequentialisation} of $S$ is defined as
	\begin{align*}
		T(S) \define\ & S_{1,0}; \ldots;  S_{n,0};\\
		&\mathbf{do}\ \square_{(i,j,k,\ell)\in \Gamma}\ B_{i,j}\wedge B_{k,\ell} \wedge B_i\rightarrow \mathit{Effect}( \alpha_{i,j}, \alpha_{k,\ell}); S_{i,j}; S_{k,\ell}\\
		&\mathbf{od}
	\end{align*}
	where $B_i \define \bigwedge_{(t,j,k,\ell)\in \Gamma, t<i} \neg (B_{t,j}\wedge B_{k,\ell})$. When $\Gamma$ is empty, we simply drop the $\mathbf{do}$ loop in the definition.

	
	Note that we introduce an additional condition $B_i$ here to guarantee that the resultant quantum program is deterministic (so that it can be described in the language presented in Sec.~\ref{sec:seq}). This is unnecessary for classical programs in~\cite{apt2010verification}, since verification of nondeterministic classical programs has been well investigated. 
	However, from Theorem~\ref{thm:determ} the nondeterministic choices in $S$ do not really matter in computing the final cq-state. Therefore, introducing the additional condition $B_i$ does not put any restriction on the expressiveness of the sequentialised program $T(S)$; this will be more rigorously shown with Theorem~\ref{thm:sequentialisation} below.

	It is obvious that  $S$ and $T(S)$ are not semantically equivalent: at least they have different conditions for termination. To see this, let
	\[
	\mathit{TERM} \define \bigwedge_{i=1}^n \bigwedge_{j=1}^{m_i} \neg B_{i,j},\quad 	\mathit{BLOCK} \define \bigwedge_{(i,j,k,\ell)\in \Gamma} \neg (B_{i,j}\wedge B_{k,\ell}) = \bigwedge_{(i,j,k,\ell)\in \Gamma} \neg (B_{i,j}\wedge B_{k,\ell}\wedge B_i).
	\]
	Then $S$ terminates iff $\mathit{TERM}$ holds while $T(S)$ terminates iff $\mathit{BLOCK}$ holds. Note that $\mathit{TERM} \ra \mathit{BLOCK}$ but generally the reverse direction is not true.
	
	The following theorem shows that $S$ and $T(S)$ are indeed equivalent conditioning on $\mathit{TERM}$.
	
	\begin{theorem}\label{thm:sequentialisation} For any cq-state  $\qstate\in \qstatesh{V}$ with $V\supseteq \qv(S)$,
		$
		\sem{S}(\qstate) = \sem{T(S)}(\qstate)|_{\mathit{TERM}},
		$
		the restriction of $\sem{T(S)}(\qstate)$ on the set of classical states $\cstate$ with $\cstate\models \mathit{TERM}$.
	\end{theorem}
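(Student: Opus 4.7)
The plan is to prove the identity in three stages: reduction to simple cq-states, construction of a canonical computation of $S$ that simulates $T(S)$, and matching of properly-terminated branches.

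First, by Lemma~\ref{lem:welldef}(2), both $\sem{S}$ and $\sem{T(S)}$ are linear in $\qstate$, and restriction to $\mathit{TERM}$ commutes with summation; so it suffices to prove the identity for $\qstate = \<\sigma, \rho\>$ with $\tr(\rho) = 1$. Second, I exhibit a canonical computation $\pi^*$ of $S$ starting in $\<S, \sigma, \rho\>$ that mimics the (deterministic) computation of $T(S)$ step for step. The computation proceeds in phases. Phase 0 runs $S_{1,0}$ to completion using only Rule (Paral) on process 1, then $S_{2,0}$, and so on through $S_{n,0}$. Pairwise disjointness from Definition~\ref{def:distprog} (formalised via Lemma~\ref{lem:supstep}) guarantees that the super-operators and classical updates from $S_{i,0}$ and $S_{k,0}$ act on disjoint variables and therefore commute, so the resulting distribution of configurations coincides with the one produced by the sequential composition $S_{1,0};\ldots;S_{n,0}$ at the start of $T(S)$. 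Each subsequent phase handles one iteration of the do-loop of $T(S)$: on every branch of the current distribution, if some enabled pair $(i,j,k,\ell)\in\Gamma$ satisfies $B_{i,j}\wedge B_{k,\ell}\wedge B_i$, Rule (Comm) is applied with that pair, and then $S_{i,j}$ and $S_{k,\ell}$ are run to completion via (Paral). The extra conjunct $B_i$ in $T(S)$ only selects a unique pair among several enabled ones; by Theorem~\ref{thm:determ} any choice yields the same cq-state, so aligning $\pi^*$ with $T(S)$'s choice is legitimate. Disjointness again ensures that the combined effect equals $\mathit{Effect}(\alpha_{i,j},\alpha_{k,\ell}); S_{i,j}; S_{k,\ell}$ evaluated inside $T(S)$'s loop body.

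Third, I match terminals. A branch of $\pi^*$ reaches $E\|\cdots\|E$ iff every main loop terminates via its last rule in Table~\ref{tbl:opsemantics}, i.e. iff the current classical state satisfies $\bigwedge_{i,j}\neg B_{i,j} = \mathit{TERM}$. The corresponding branch of $T(S)$ exits its do-loop whenever $\mathit{BLOCK}$ holds; branches satisfying $\mathit{BLOCK}\wedge\neg\mathit{TERM}$ are exactly the deadlocks of $S$, which by definition of $\qstate_\mu$ contribute nothing to $\sem{S}(\sigma,\rho)$. Hence $\sem{T(S)}(\sigma,\rho)|_{\mathit{TERM}} = \qstate_{\pi^*} = \sem{S}(\sigma,\rho)$, where the second equality uses Theorem~\ref{thm:determ}.

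The main obstacle will be the phase-by-phase simulation bookkeeping: one transition in $T(S)$ typically corresponds to many transitions in $\pi^*$ (several local steps plus branching through measurements and probabilistic assignments), and one must track how the distributions co-evolve on both sides. The clean way to handle this is an induction on the number of do-loop iterations of $T(S)$, where the inductive step invokes Lemma~\ref{lem:supstep} to extract the super-operators and classical updates produced by each phase of $\pi^*$, checks that they act on disjoint registers thanks to Definition~\ref{def:distprog}, and concludes that the joint super-operator and classical update agree with the single transition taken by $T(S)$; linearity of super-operators then lifts the equality from individual branches to the whole distribution.
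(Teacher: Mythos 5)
Your proposal is correct and takes essentially the same route as the paper: your canonical computation $\pi^*$ is precisely the paper's \emph{good} computation of $S$ (local steps in process-index order, then the $B_i$-selected communication), which the paper matches against the deterministic computation of $T(S)$ via a one-step correspondence lemma (Lemma~\ref{lem:transsts}), invoking Theorem~\ref{thm:determ} to identify $\qstate_{\pi^*}$ with $\sem{S}(\qstate)$ and observing that the extra terminating branches of $T(S)$ are exactly the deadlocks of $S$, whose classical states satisfy $\mathit{BLOCK}\wedge\neg\mathit{TERM}$ and are therefore removed by the restriction to $\mathit{TERM}$. The only difference is granularity (the paper argues per transition, you per do-loop iteration), which does not change the substance of the argument.
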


	\begin{example}[Sequentialisation of Teleportation]\label{ex:seqtel}
		The sequentialisation of $Teleport$, denoted $T(\mathit{Teleport}$), is as follows: 
		\begin{align*}
			& q,q_1 \apply \textit{CNOT};\ q\apply H; \ z_A := \mymeas\ q;\  x_A := \mymeas\ q_1; \\
			& stage_A := 0;\ stage_B := 0;\\
			& \mathbf{do}\ stage_A = 0 \wedge stage_B = 0 \ra x_B := x_A;\\
			& \qquad \qquad stage_A := 1;\ stage_B := 1; \ \measstm{x_B=1}{q_2\apply X}{\sskip} \\
			&\ \  \square\  \ stage_A = 1 \wedge stage_B = 1 \ra z_B := z_A;\\
			& \qquad \qquad stage_A := 2;\  stage_B := 2; \ \measstm{z_B=1}{q_2\apply Z}{\sskip} \\
			& \mathbf{od}
		\end{align*}
		It is easy to see that 
		\[
		\sem{Teleport}(\qstate) = \sem{T(Teleport)}(\qstate) |_{stage_A \not\in \{0,1\} \wedge stage_B \not\in \{0,1\}}.
		\]
	\end{example}
	
	\section{Verification of distributed quantum programs}
	\label{sec:verification}
	
	The basic notion for verification of distributed quantum programs is classical-quantum assertion  from~\cite{feng2020quantum}.
	
	\subsection{Classical-quantum assertions} 
	
	Recall that  assertions for classical program states are usually represented as first order logic formulas over $\CVar$.
	For any classical assertion $\cassert$, denote by $\sem{\cassert} \define \{\cstate\in \Sigma : \cstate\models \cassert\}$ the set of classical states that satisfy $\cassert$. Two assertions $\cassert$ and $\cassert'$ are equivalent, written $\cassert \equiv \cassert'$, iff $\sem{\cassert} = \sem{\cassert'}$. Let $\phv$ be the set of Hermitian operators on $\h$ whose eigenvalues lie between 0 and 1.
	
	\begin{definition}
		Given $V\subseteq \QVar$, a \emph{classical-quantum assertion} (cq-assertion for short) $\qassert$ over $V$ is a function in $\cstates\rightarrow\phv$ such that  
		\begin{enumerate}
			\item the image set $\qassert(\Sigma)$ of $\qassert$ is countable;
			\item for each $M\in \qassert(\Sigma)$, the preimage $\qassert^{-1}(M)$ is definable by a classical assertion $\cassert$ in the sense that $\sem{\cassert} =  \qassert^{-1}(M)$.
		\end{enumerate}
	\end{definition}
	
	Denote by $\qVar(\qassert)$ the set $V$ of quantum variables in $\qassert$. We write $\bigoplus_{i\in I}\<\cassert_i, M_i\>$ instead of $\bigoplus_{i\in I}\<\sem{\cassert_i}, M_i\>$ for a cq-assertion  $\qassert$ whenever
	$\qassert(\cstates) = \{M_i : i\in I\}$ 
	and $\qassert^{-1}(M_i)=\sem{\cassert_i}$ for each $i\in I$. Note that this representation is not unique: the representative assertion $\cassert_i$ can be replaced by $\cassert'_i$ whenever $\cassert_i\equiv \cassert'_i$.
	Furthermore, the summand with zero operator $\z_{\h_V}$ is always omitted.
	In particular, when $\qassert(\cstates) = \{\z_\h, M\}$ or $\{M\}$ for some $M\neq \z_{\h_V}$, we simply denote $\qassert$ by $\<\cassert, M\>$ for some $\cassert$ with $\qassert^{-1}(M)=\sem{\cassert}$.

	Let $\qassertsh{V}$ be the set of all cq-assertions over $V$, and $\a$ the set of all cq-assertions.
	Again, we extend the L\"{o}wner order $\le_V$ for $\l(\h_V)$ pointwisely to $\a$ by letting $\qassert\le \qassert'$ iff $qv(\qassert) = qv(\qassert')$ and for all $\cstate\in \Sigma$, $\qassert(\cstate) \le_{qv(\qassert)} \qassert'(\cstate)$. 
	It is easy to see that 
	$\qasserts$ is also a pointed $\omega$-CPO under $\le$, with the least element being $\emptydis_{V}$. Furthermore, it has the largest element $\top_{V} \define \<\true, I_{\h_V}\>$. 	
	When $\qassert \le \qassert'$,  we denote by $\qassert' - \qassert$ the unique $\qassert''\in \qassertsh{qv(\qassert)}$ such that $\qassert'' + \qassert = \qassert'$. With these notions, summation and linear-sum of cq-assertions can be defined similarly as for cq-states. Let $V_1, V_2$ be two subsets of $\QVar$, and $\qassert_i\in \a_{V_i}$, $i=1,2$. We say $\qassert_1\lesssim \qassert_2$ whenever $\qassert_1\otimes I_{\h_{V_2\backslash V_1}} \le I_{\h_{V_1\backslash V_2}} \otimes \qassert_2$. Obviously, when restricted on some given set of quantum variables, $\lesssim$ coincides with $\le$. 
	
	Given a classical assertion $\cassert$, we denote by $\cassert \bowtie \sum_{i} \<\cassert_i, M_i\>$ the cq-assertion $\sum_{i} \<\cassert \bowtie \cassert_i, M_i\>$  (if it is valid) where $\bowtie$ can be any logic connective such as $\wedge$, $\vee$, $\Rightarrow$, $\Leftrightarrow$, etc. Let $\f$ be a completely positive and sub-unital linear map from $\phv$ to $\mathcal{P}(\h_W)$. We extend it to $\qassertsh{V}$ in a pointwise way. In particular, when $qv(\qassert) \cap W = \emptyset$, $\qassert\otimes I_{\h_W}$ is a cq-assertion which maps any $\cstate\in \cstates$ to $\qassert(\cstate)\otimes I_{\h_W}$.

	\begin{definition}\label{def:satisfaction}
		Given a cq-state $\qstate$ and a cq-assertion $\qassert$ with $\qv(\qstate) \supseteq \qv(\qassert)$, the \emph{expectation} of $\qstate$ satisfying $\qassert$ is defined to be 
		\[
		\Exp(\qstate \models \qassert) \define 
		\sum_{\cstate\in \supp{\qstate}} \tr\left[\left(\qassert(\cstate)\otimes I_{\h_{V}} \right) \cdot \qstate(\cstate)\right]
		= 
		\sum_{\cstate\in \supp{\qstate}} \tr\left[\qassert(\cstate) \cdot \tr_{\h_{V}}(\qstate(\cstate))\right]
		\] 
		where $V = \qv(\qstate) \backslash \qv(\qassert)$ and the dot $\cdot$ denotes matrix multiplication.
	\end{definition}
	
	\subsection{Correctness formula}
	
	As usual, program correctness is expressed by \emph{correctness formulas} with the form
	$\ass{\qassert}{S}{\qassertp}$
	where $S$ is a distribute quantum program, and $\qassert$ and $\qassertp$ are both cq-assertions.
	We do not put any requirement on the quantum variables which $\qassert$ and $\qassertp$ are acting on. In fact, the sets $\qv(S)$, $\qv(\qassert)$, and $\qv(\qassertp)$ can be all different.

	\begin{definition}
		Let $S\in \prog$, and $\qassert$ and $\qassertp$ be cq-assertions.
		\begin{enumerate}
			\item We say the correctness formula $\ass{\qassert}{S}{\qassertp}$ is true in the sense of \emph{total correctness}, written $\models_{\tot} \ass{\qassert}{S}{\qassertp}$, if for any 
			$V\supseteq \qv(S, \qassert, \qassertp)$ and
			$\qstate\in \qstatesh{V}$,
			$$\Exp(\qstate\models \qassert) \leq \Exp(\sem{S}(\qstate) \models \qassertp).$$
			\item We say the correctness formula $\ass{\qassert}{S}{\qassertp}$ is true in the sense of \emph{partial correctness}, written $\models_{\pal} \ass{\qassert}{S}{\qassertp}$, if for any 
			$V\supseteq \qv(S, \qassert, \qassertp)$ and
			$\qstate\in \qstatesh{V}$, 
			$$\Exp(\qstate\models \qassert) \leq \Exp(\sem{S}(\qstate) \models \qassertp)+  \tr(\qstate) - \tr(\sem{S}(\qstate)).$$
		\end{enumerate}
	\end{definition}
	
	%
	
	\begin{example}
		The correctness of quantum teleportation can be stated as  follows: for any $|\psi\>\in \h_2$,
		\[
		\models_{\tot}\ass{|\psi\>_q\otimes |\beta\>_{q_1,q_2}}{\mathit{Teleport}}{|\psi\>_{q_2}},
		\]
		which claims that the (arbitrary) quantum state of qubit $q$ is successfully
		transmitted to qubit $q_2$ by $\mathit{Teleport}$. 
		Note that the postcondition $|\psi\>_{q_2}$ does not refer to $q$ and $q_1$, meaning that the post-measurement state of these quantum systems is irrelevant. 
	\end{example}
	
	%
	%
	
	\subsection{Proof systems}

	{\renewcommand{\arraystretch}{2.5}
		\begin{table}[t]
			\begin{lrbox}{\tablebox}
				\centering
				\begin{tabular}{l}
					\begin{tabular}{lclc}
						(Skip)	& $\ass{\qassert}{\sskip}{\qassert}$ & 
						(Abort) 	& $\ass{\top_{V}}{\abort}{\bot_V}$\\
						(Assn)	&
						$\ass{\qassert[\subs{x}{e}]}{x:=e}{\qassert}$ &
						(Rassn) &
						$\displaystyle\left\{\sum_{d\in D_{\mathit{type}(x)}} g(d)\cdot \qassert[\subs{x}{d}]\right\}{x\rassign g}\{\qassert\}$ \\
						(Init)	& $\displaystyle\frac{q\in \qv(\qassert)}{\ass{\sum_{i=0}^{d_q-1} \qiz \qassert\qzi}{q:=0}{\qassert}}$ &
						(Unit)	&
						$\displaystyle\frac{\bar{q}\subseteq \qv(\qassert)}{\ass{U_{\bar{q}}^\dag \qassert U_{\bar{q}}}{\bar{q}\apply U}{\qassert}}$ \\
						(Meas)	&
						$\displaystyle\frac{\bar{q}\subseteq \qv(\qassert), \m = \{M_i : i\in I\}}{\ass{\sum_{i\in I}M_i^\dag\qassert[\subs{x}{i}]M_i}{x:=\mymeas\ \m[\bar{q}]}{\qassert}}$  &
						(Seq)	&
						$\displaystyle\frac{\ass{\qassert}{S_0}{\qassert'},\ \ass{\qassert'}{S_1}{\qassertp}}{\ass{\qassert}{S_0; S_1}{\qassertp}}$\\
						(Alt)	&
						$\displaystyle\frac{\ass{B_i\wedge \qassert}{S_i}{\qassertp},\ \forall i\in \{1,\ldots, n\}}{\ass{\qassert}{\altercom}{\qassertp}}$
						&
						(Rep)	& $\displaystyle\frac{\ass{B_i\wedge \qassert}{S_i}{\qassert},\ \forall i\in \{1,\ldots, n\}}{\ass{\qassert}{\repcom}{\qassert \wedge \bigwedge_{i=1}^n \neg B_i}}$ \\
						(Imp)	&
						$\displaystyle\frac{\qassert\lesssim \qassert',\ \ass{\qassert'}{S}{\qassertp'},\ \qassertp'\lesssim \qassertp}{\ass{\qassert}{S}{\qassertp}}$& &
					\end{tabular}\\
					\begin{tabular}{lc}
						(Dist) & $\displaystyle\frac{
							\ass{\qassert}{S_{1,0}; \ldots;  S_{n,0}}{\qassertp}, \ 				\ass{B_{i,j}\wedge B_{k,\ell}\wedge\qassertp}{\mathit{Effect}(\alpha_{i,j}, \alpha_{k,\ell}); S_{i,j}; S_{k,\ell}}{\qassertp}, \forall (i,j,k,\ell)\in \Gamma
						}
						{\ass{\qassert}{S_1\|\ldots\|S_n}{\qassertp\wedge \mathit{TERM}}}$\\
						& where $\Gamma$ and $\mathit{TERM}$ are defined as in Sec.~\ref{sec:sequentialisation}.
					\end{tabular}		
				\end{tabular}
			\end{lrbox}
			\resizebox{\textwidth}{!}{\usebox{\tablebox}}\\
			\vspace{4mm}
			\caption{Proof system for partial correctness. 
			}
			\label{tbl:psystem}
		\end{table}
	}

	The core of Hoare logic is a proof system consisting of axioms and proof rules which enable syntax-oriented and modular reasoning of program correctness. In this section, we propose a Hoare logic for distributed quantum programs.
	
	\textbf{Partial correctness}. We propose in Table~\ref{tbl:psystem} a proof system for partial correctness of distributed quantum programs, which is a natural extension of the quantum Hoare logic introduced in~\cite{feng2020quantum} for deterministic while programs. We write $\vdash_{\pal}\ass{\qassert}{S}{\qassertp}$ if the correctness formula $\ass{\qassert}{S}{\qassertp}$ can be derived from the system.

	\begin{theorem}\label{thm:psc}
		The proof system in Table~\ref{tbl:psystem} is both sound and (relatively)  complete with respect to the partial correctness of distributed quantum programs.
	\end{theorem}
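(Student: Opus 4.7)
I proceed by induction on the derivation of $\vdash_{\pal}\ass{\qassert}{S}{\qassertp}$. For the rules (Skip)--(Rep) and (Imp), the arguments extend those of~\cite{feng2020quantum}: each case unfolds Definition~\ref{def:satisfaction} together with the probabilistic operational semantics, using the linearity of $\sem{S}$ provided by Lemma~\ref{lem:welldef}; (Rep) additionally exploits the $\omega$-continuity of the loop denotation and the CPO structure on $\qasserts$. The only genuinely new case is (Dist). Its premises are precisely those required to apply (Seq) followed by (Rep) to the sequentialisation $T(S)$, so within the sequential subsystem one derives $\vdash_{\pal}\ass{\qassert}{T(S)}{\qassertp\wedge\mathit{BLOCK}}$, and by the inductive soundness of that subsystem, $\models_{\pal}\ass{\qassert}{T(S)}{\qassertp\wedge\mathit{BLOCK}}$. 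A short calculation based on Theorem~\ref{thm:sequentialisation} --- using $\sem{S}(\qstate)=\sem{T(S)}(\qstate)|_{\mathit{TERM}}$, the implication $\mathit{TERM}\Rightarrow\mathit{BLOCK}$, and the bound $\qassertp\otimes I\le I$ --- converts this to $\models_{\pal}\ass{\qassert}{S}{\qassertp\wedge\mathit{TERM}}$, since the discrepancy of expectations on the ``blocked but not properly terminated'' classical states is exactly absorbed by the corresponding discrepancy of trace losses.

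\textbf{Completeness.} I adopt the weakest liberal precondition strategy. For each distributed program $S$ and cq-assertion $\qassertp$, define $\mathit{wlp}(S,\qassertp)\in\qassertsh{\qv(S)\cup\qv(\qassertp)}$ via
\[
\Exp(\qstate\models \mathit{wlp}(S,\qassertp)) \;\define\; \Exp(\sem{S}(\qstate)\models\qassertp)+\tr(\qstate)-\tr(\sem{S}(\qstate)),
\]
whose well-definedness follows from the linearity of $\sem{S}$ and from $\mathbf{0}\le\qassertp\le I$. Two semantic facts are then immediate: $\models_{\pal}\ass{\mathit{wlp}(S,\qassertp)}{S}{\qassertp}$, and any $\qassert$ with $\models_{\pal}\ass{\qassert}{S}{\qassertp}$ satisfies $\qassert\lesssim \mathit{wlp}(S,\qassertp)$. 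Relative completeness therefore reduces, via (Imp), to establishing $\vdash_{\pal}\ass{\mathit{wlp}(S,\qassertp)}{S}{\qassertp}$ for every $S,\qassertp$. I prove this by structural induction on $S$: for atomic commands it matches the backward axioms of Table~\ref{tbl:psystem}; (Seq), (Alt), and (Rep) follow as in~\cite{feng2020quantum} once the loop invariant $\bigvee_k \mathit{wlp}^{(k)}$ is shown expressible by appealing to the CPO-completeness of $\qasserts$. For distributed $S$, Theorem~\ref{thm:sequentialisation} yields the identity $\mathit{wlp}(S,\qassertp\wedge\mathit{TERM})=\mathit{wlp}(T(S),\qassertp\wedge\mathit{TERM})$, so a sequential derivation for $T(S)$ can be re-organised into the shape required by (Dist).

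\textbf{Main obstacle.} The crux of this re-organisation is the choice of the single cq-assertion that must simultaneously serve as the postcondition of the initialisation block $S_{1,0};\ldots;S_{n,0}$ and as the invariant of every guarded body $\mathit{Effect}(\alpha_{i,j},\alpha_{k,\ell}); S_{i,j}; S_{k,\ell}$ indexed by $\Gamma$. Setting it to the wlp of the main \ddo-loop of $T(S)$ with respect to $\qassertp\wedge\mathit{TERM}$ immediately yields all body premises of (Dist), and the initialisation premise then follows by one application of (Imp). The bookkeeping required to align quantum variable sets across these premises --- some refer only to $\qv(S_{i,j})\cup\qv(S_{k,\ell})$, others to strictly larger sets --- is handled by tensoring with identities and invoking the $\lesssim$ order, mirroring Apt's treatment of classical CSP~\cite{apt2010verification,Apt86} but with Hermitian-valued rather than Boolean-valued assertions, compared under the L\"owner order.
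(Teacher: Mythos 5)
Your soundness argument follows essentially the same route as the paper: prove the sequential rules sound by structural induction, then handle (Dist) by transferring the derivation to $T(S)$ and observing that the expectation contributed by blocked-but-not-properly-terminated states is dominated by the extra trace loss of $S$ relative to $T(S)$ — this is precisely the content of the paper's Lemma~\ref{lem:reldistseq} (one small omission: the loop bodies of $T(S)$ carry the extra determinising guard $B_i$, so an (Imp) step is needed before (Seq)/(Rep) apply). Your completeness argument, however, contains a genuine error. The claimed identity $\mathit{wlp}(S,\qassertp\wedge \mathit{TERM})=\mathit{wlp}(T(S),\qassertp\wedge \mathit{TERM})$ is false whenever $S$ can deadlock. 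A deadlock of $S$ is \emph{non-termination}, which partial correctness forgives, but the corresponding run of $T(S)$ \emph{terminates properly} in a classical state satisfying $\neg\mathit{TERM}\wedge\mathit{BLOCK}$, where partial correctness then demands the postcondition. Concretely, take $S\define S_1\|S_2$ with $S_1\define \mathbf{do}\ \true; c!0\ra\sskip\ \mathbf{od}$ and $S_2\define\sskip$: then $\Gamma=\emptyset$, $T(S)\equiv\sskip;\sskip$, $\mathit{TERM}\equiv\false$, and
\[
\mathit{wlp}(S,\qassertp\wedge\mathit{TERM})=\mathit{wlp}(S,\bot_V)=\top_V
\ \neq\ \bot_V=\mathit{wlp}(T(S),\qassertp\wedge\mathit{TERM}),
\]
since $S$ never properly terminates while $T(S)$ always does. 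With your identity, the initialisation premise of (Dist), namely $\qassert\lesssim \mathit{wlp}(T(S),\qassertp\wedge\mathit{TERM})$, simply fails for such programs even though $\models_{\pal}\ass{\qassert}{S}{\qassertp}$ holds, so your re-organisation of the sequential derivation cannot be completed.

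The paper repairs exactly this asymmetry: Lemma~\ref{lem:reldistseq} relates $\models_{\pal}\ass{\qassert}{S}{\qassertp\wedge\mathit{TERM}}$ to $\models_{\pal}\ass{\qassert}{T(S)}{\qassertp'}$ with the \emph{augmented} postcondition $\qassertp'\define \qassertp\wedge\mathit{TERM}+\neg\mathit{TERM}\wedge\mathit{BLOCK}$, which awards full credit to deadlock states of $S$; the loop invariant fed to (Dist) is then $\mathit{wlp}.\mathbf{do}.\qassertp'$ (not the wlp w.r.t.\ $\qassertp\wedge\mathit{TERM}$, as you propose), and one concludes via $\qassertp''\wedge\mathit{TERM}=\qassertp'\wedge\mathit{TERM}=\qassertp\wedge\mathit{TERM}\lesssim\qassertp$. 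Two secondary points: defining $\mathit{wlp}$ purely semantically by the expectation equation leaves open whether it is a legitimate cq-assertion (countable image, classically definable preimages) — this expressibility is the crux of \emph{relative} completeness, and the paper secures it by the explicit syntactic construction in Table~\ref{tbl:wpsemantics} together with the duality Lemma~\ref{lem:wpwlp}; and for repetitive commands the liberal precondition is the \emph{decreasing} limit $\bigwedge_k\qassert_k$ starting from $\top$, not $\bigvee_k$ as you wrote (that is the $wp$, i.e.\ total-correctness, iteration).
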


	\textbf{Total correctness}. 
	Ranking functions play a central role in proving total correctness of while loop programs. Recall that in the classical case, a ranking function maps each reachable state in the loop body to an element of a well-founded ordered set (say, the set $\N$ of nonnegative integers), such that the value decreases strictly after each iteration of the loop. Our proof rules for total correctness of repetitive commands and distributed quantum programs also heavily relies on the notion of ranking assertions.

	\begin{definition}
		Let $\qassert\in \qasserts$. A decreasing sequence (w.r.t. $\le$) of cq-assertions $\{\qassert_k : k\geq 0\}$ in $\qasserts$ with  $\qassert \le \qassert_0$ and $\bigwedge_k \qassert_k = \emptydis_V$ are $\qassert$-\emph{ranking assertions} for $\repcom$ if for any $k\geq 0$, $1\leq i\leq n$, and $\qstate\in \qstatesh{W}$, $W\define \bigcup_{i=1}^n qv(S_i)\cup V$,
		\begin{equation}\label{eq:rankfun}
			\Exp(\sem{S_i}(\qstate|_{B_i}) \models \qassert_k ) \leq \Exp(\qstate \models \qassert_{k+1}).
		\end{equation}
		They are said to be $\qassert$-\emph{ranking assertions} for $S_1\|\ldots\|S_n$ if, for any $k\geq 0$, $(i,j,t,\ell)\in \Gamma$, and $\qstate\in \qstatesh{W}$, $W\define \bigcup_{i=1}^n qv(S_i)\cup V$, we have
		\[
		\Exp(\sem{S_{i,j}^{t,\ell}}(\qstate|_{B_{i,j}\wedge B_{t,\ell}}) \models \qassert_k ) \leq \Exp(\qstate \models \qassert_{k+1})
		\]
		where $S_{i,j}^{t,\ell} \define \mathit{Effect}(\alpha_{i,j}, \alpha_{t,\ell}); S_{i,j}; S_{t,\ell}$.
	\end{definition}

	{\renewcommand{\arraystretch}{1}
		\begin{table}[t]
			\begin{lrbox}{\tablebox}
				\centering
				\begin{tabular}{lc}
					(Abort-T) & $\ass{\bot_{V}}{\abort}{\bot_{V}}$\\
					\\
					(Alt-T)	&
					$\displaystyle\frac{\qassert \lesssim \bigvee_{i=1}^n B_i, \ \ass{B_i\wedge \qassert}{S_i}{\qassertp},\ \forall i\in \{1,\ldots, n\}}{\ass{\qassert}{\altercom}{\qassertp}}$\\
					\\
					(Rep-T)	& $\displaystyle\frac{
						\begin{tabular}{l}
							$\ass{B_i\wedge \qassert}{S_i}{\qassert},\ \forall i\in \{1,\ldots, n\}$\\
							$\qassert$-ranking assertions exist for $\repcom$	
						\end{tabular}	
					}{\ass{\qassert}{\repcom}{\qassert \wedge \bigwedge_{i=1}^n \neg B_i}}$ \\
					\\
					(Dist-T) & $\displaystyle\frac{
						\begin{tabular}{l}
							$\ass{\qassert}{S_{1,0}; \ldots;  S_{n,0}}{\qassertp}$, and $\qassertp$-ranking assertions exist for $S_1\|\ldots\|S_n$\\ $\ass{B_{i,j}\wedge B_{k,\ell}\wedge\qassertp}{\mathit{Effect}(\alpha_{i,j}, \alpha_{k,\ell}); S_{i,j}; S_{k,\ell}}{\qassertp}, \forall (i,j,k,\ell)\in \Gamma$\\
							$\qassertp \wedge \mathit{BLOCK} \lesssim \mathit{TERM}$	
						\end{tabular}
					}	
					{\ass{\qassert}{S_1\|\ldots\|S_n}{\qassertp\wedge \mathit{TERM}}}$\\ \\
					& where $\Gamma$ and $\mathit{TERM}$ are defined as in Sec.~\ref{sec:sequentialisation}.
				\end{tabular}		
			\end{lrbox}
			\resizebox{.85\textwidth}{!}{\usebox{\tablebox}}\\
			\vspace{4mm}
			\caption{Some proof rules for total correctness. 
			}
			\label{tbl:tsystem}
		\end{table}
	}

	The {proof system for total correctness} is then defined as for partial correctness, except that the rules (Abort), (Alt), (Rep), and (Dist) are replaced by their corresponding total correctness version shown in Table~\ref{tbl:tsystem}.
	We write $\vdash_{\tot}\ass{\qassert}{S}{\qassertp}$ if the correctness formula $\ass{\qassert}{S}{\qassertp}$ can be derived using this proof system. 
	
	\begin{theorem}\label{thm:total}
		The proof system for total correctness is both sound and (relatively) complete with respect to the total correctness of distributed quantum programs.
	\end{theorem}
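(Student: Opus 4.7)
I would prove soundness and (relative) completeness separately, following the strategy of~\cite{feng2020quantum} with additional care for nondeterministic interleaving and communication. The axioms and rules inherited from Table~\ref{tbl:psystem}, namely Skip, Assn, Rassn, Init, Unit, Meas, Seq, and Imp, remain sound for total correctness without change, because the commands they govern terminate with probability equal to the trace of the input cq-state, so partial and total correctness coincide for them. The genuine work lies with (Abort-T), (Alt-T), (Rep-T), and (Dist-T) for soundness, and with a weakest-precondition construction for completeness.

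\textbf{Soundness of the new rules.} The axiom (Abort-T) forces the precondition to $\bot_V$, which trivially satisfies the total-correctness inequality since both sides vanish. For (Alt-T), the premise $\qassert\lesssim\bigvee_i B_i$ rules out the $\fail$ branch of the operational semantics, so no mass is lost to failure. For (Rep-T), I prove by induction on $k$ that for every cq-state $\qstate$,
\[
\sum_{j=0}^{k-1}\Exp\bigl(\mu^{\mathrm{exit}}_j(\qstate)\models\qassert\wedge\bigwedge\nolimits_i\neg B_i\bigr)+\Exp\bigl(\qstate^{(k)}_{\mathrm{stay}}\models\qassert_0\bigr)\leq\Exp(\qstate\models\qassert_k),
\]
where $\mu^{\mathrm{exit}}_j(\qstate)$ is the mass exiting the loop at iteration $j$ and $\qstate^{(k)}_{\mathrm{stay}}$ is the mass still inside the loop after $k$ iterations. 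The inductive step combines the body premise $\ass{B_i\wedge\qassert}{S_i}{\qassert}$ with~\eqref{eq:rankfun}. Since $\bigwedge_k\qassert_k=\emptydis_V$, the right-hand side tends to $0$, forcing $\qstate^{(k)}_{\mathrm{stay}}\to\emptydis$ and yielding the total-correctness conclusion in the limit. For (Dist-T) I invoke Theorem~\ref{thm:sequentialisation} to reduce the parallel program to its sequentialization $T(S)$: handle the initialization with (Seq), apply soundness of (Rep-T) to the main loop using the given ranking assertions, and use the premise $\qassertp\wedge\mathit{BLOCK}\lesssim\mathit{TERM}$ to convert $\mathit{BLOCK}$-based termination into $\mathit{TERM}$, ruling out deadlock.

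\textbf{Relative completeness.} I adopt Cook's weakest-precondition method. Define $\mathit{wp}_\tot(S,\qassertp)$ as the unique cq-assertion satisfying $\Exp(\qstate\models\mathit{wp}_\tot(S,\qassertp))=\Exp(\sem{S}(\qstate)\models\qassertp)$ for every $\qstate$; its existence follows from the linearity of $\sem{S}$ (Lemma~\ref{lem:welldef}(2)) and a Riesz-type representation on the cq-state space. By (Imp) it then suffices to establish $\vdash_\tot\{\mathit{wp}_\tot(S,\qassertp)\}\,S\,\{\qassertp\}$ by structural induction on $S$; the atomic, sequential, and alternative cases follow routinely by matching the backward transformer in each axiom/rule with $\mathit{wp}_\tot$.

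\textbf{Main obstacle.} The technical crux is constructing witnessing ranking assertions in the (Rep-T) and (Dist-T) inductive cases. For $W\equiv\repcom$ with postcondition $\qassertp\equiv\qassert\wedge\bigwedge_i\neg B_i$ and $\qassert\equiv\mathit{wp}_\tot(W,\qassertp)$, I propose
\[
\qassert_k\equiv\mathit{wp}_\tot(W,\qassertp)-\mathit{wp}_\tot(W^{[k]},\qassertp),
\]
where $W^{[k]}$ is the truncation of $W$ to at most $k$ iterations (aborting beyond that). This quantifies the contribution of executions requiring strictly more than $k$ iterations. The least-fixed-point characterization $\sem{W}=\bigvee_k\sem{W^{[k]}}$ on the $\omega$-CPO of cq-states then forces $\bigwedge_k\qassert_k=\emptydis_V$, while a one-step unfolding verifies~\eqref{eq:rankfun}. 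The delicate points are (i) well-definedness of the subtraction, which needs monotonicity of $\mathit{wp}_\tot$ in the program; and (ii) the parallel construction for (Dist-T), which proceeds via $T(S)$ from Sec.~\ref{sec:sequentialisation} and must additionally argue that the ranking behaviour transfers back from $T(S)$ to $S$. With these in hand the induction closes and relative completeness follows.
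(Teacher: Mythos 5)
Your overall route coincides with the paper's: sound\-ness of the new rules via the ranking-assertion limit, completeness via a weakest-precondition transformer with explicitly constructed ranking assertions, and reduction of (Dist-T) to the sequentialisation $T(S)$ through Theorem~\ref{thm:sequentialisation} (your truncation assertions $\qassert_k\define wp_\tot(W,\qassertp)-wp_\tot(W^{[k]},\qassertp)$ are, up to replacing $\qassertp$ by $\top$, exactly the paper's $\qassert_0\define wp.S.\top_{qv(\qassert)}$, $\qassert_{k+1}\define\sum_i B_i\wedge wp.S_i.\qassert_k$, for which $\Exp(\qstate\models\qassert_k)=\tr(\sem{S}(\qstate))-\tr(\sem{S^k}(\qstate))$). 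However, your displayed induction invariant for (Rep-T) is false as stated: the exit-mass sum cannot sit on the left of the bound by $\Exp(\qstate\models\qassert_k)$. The right-hand side tends to $\Exp(\qstate\models\bigwedge_k\qassert_k)=0$ while the accumulated exit mass does not; e.g.\ for $\ddo\ x=0\ra x:=1\ \od$ the sequence $\qassert_0=\top$, $\qassert_1=(x=0)\wedge\top$, $\qassert_k=\emptydis$ for $k\geq 2$ satisfies all conditions of Eq.~\eqref{eq:rankfun}, yet with $\qassert=\top$ and input mass on $x=0$ your inequality at $k=2$ reads $1\leq 0$. The correct invariant puts the precondition on the left,
\begin{equation*}
\Exp(\qstate\models\qassert)\ \leq\ \sum_{j=0}^{k-1}\Exp\bigl(\mu^{\mathrm{exit}}_j(\qstate)\models\qassert\wedge\bigwedge\nolimits_i\neg B_i\bigr)+\Exp(\qstate\models\qassert_k),
\end{equation*}
which is the semantic form of the paper's $\qassert\le\qassert_k+\qassertp_k$ with $\qassert_k$ the $wp$-approximants of Table~\ref{tbl:wpsemantics} and $\qassertp_k$ the ranking assertions. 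Relatedly, ``forcing $\qstate^{(k)}_{\mathrm{stay}}\to\emptydis$'' overclaims: ranking assertions only force $\Exp(\qstate^{(k)}_{\mathrm{stay}}\models\qassert_0)\to 0$, and the loop may diverge with positive probability on mass where $\qassert$ vanishes (take $\ddo\ x=0\ra\sskip\ \od$ with $\qassert_k=(x\neq 0)\wedge 2^{-k}\top$); total correctness needs only the expectation to vanish, which is what the repaired invariant delivers.

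The second gap is in completeness: appealing to ``linearity plus a Riesz-type representation'' to define $\mathit{wp}_\tot(S,\qassertp)$ dodges the crux of \emph{relative} completeness, namely expressibility. One must show the weakest precondition is a cq-\emph{assertion} in the paper's sense — countable image and preimages definable by classical assertions — and a pointwise representation argument yields a Hermitian operator for each classical state with no control over either condition. The paper avoids this by defining $wp$ and $wlp$ \emph{inductively on program syntax} (Table~\ref{tbl:wpsemantics}), where every clause manifestly preserves cq-assertion-hood, and then proving the duality $\Exp(\qstate\models wp.S.\qassert)=\Exp(\sem{S}(\qstate)\models\qassert)$ (Lemma~\ref{lem:wpwlp}); once $wp$ is so defined, your truncation construction is legitimate, since $W^{[k]}$ is expressible in the language via $\abort$ and alternative commands (cf.\ $S^k$ in Lemma~\ref{lem:iout}). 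Two smaller omissions: for (Dist-T) soundness the paper also uses the premise $\qassertp\wedge\mathit{BLOCK}\lesssim\mathit{TERM}$ to conclude $\qassertp\wedge\mathit{BLOCK}=\qassertp\wedge\mathit{TERM}$ (excluding deadlocked mass), and for (Dist-T) completeness one must check that this third premise actually holds for the constructed intermediate assertion — your sketch mentions transferring ranking behaviour between $S$ and the $\ddo$-loop of $T(S)$ (which the paper likewise asserts without detail) but not this verification.
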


	\section{Conclusion and future works}

	In this paper, we propose a distributed programming language for the purpose of formal description and verification of distributed quantum systems. A Hoare-style logic, which turns out to be sound and (relatively) complete for both partial and total correctness,  is introduced to help analysis of quantum programs written in this language. Effectiveness of the logic is demonstrated by its application in verification of quantum teleportation and local implementation of non-local CNOT gates, two important protocols widely used in distributed quantum systems.
	
	The distributed language investigated in this paper only allows local quantum operations and classical communication (LOCC). Although LOCC is a widely used quantum communication model, there are also important quantum communication protocols, such as Quantum Key Distribution~\cite{bennett1984quantum} and Quantum Leader Election~\cite{tani2005exact}, which do require transmission of quantum states. It is well known that this kind of quantum communication can be achieved by employing the teleportation protocol (provided that enough entanglement is pre-shared between relevant parties), and thus in principle these  protocols can be verified using the logic presented in this paper, but their verification in this way will be clumsy and inconvenient. Therefore, it is desirable  to extend our language to include quantum communication in future works. 
	To this end, we have to trace the ownership of each quantum system so that the no-cloning property~\cite{wootters1982single} of quantum information is not violated. We expect that the verification of such distributed quantum programs will be much more challenging.
	

	%
	%
	%
	%
	
	\bibliography{ref}
	
	\newpage
	\appendix
	
	\section{Preliminaries}
	
	This section is devoted to fixing some notations from linear algebra and quantum mechanics that will be used in this paper. For a thorough introduction of relevant backgrounds, we refer to~\cite[Chapter 2]{nielsen2002quantum}.
	
	\subsection{Basic linear algebra}
	Let $\h$ be a Hilbert space. In the finite-dimensional case which we are concerned with here, it is merely a complex linear space equipped with an inner product. Consequently, it is isomorphic to $\C^d$ where $d=\dim(\h)$, the dimension of $\h$.
	Following the tradition in quantum computing, vectors in $\h$ are denoted in the Dirac form $|\psi\>$. The inner product of $|\psi\>$ and $|\phi\>$ is written $\<\psi|\phi\>$, and they are \emph{orthogonal} if $\<\psi|\phi\> = 0$. The \emph{outer product} of them, denoted $|\psi\>\<\phi|$, is a rank-one linear operator which maps any $|\psi'\>$ in $\h$ to $\<\phi|\psi'\> |\psi\>$.
	The \emph{length} of $|\psi\>$ is defined to be $\||\psi\>\| \define\sqrt{\<\psi|\psi\>}$ and it is called \emph{normalised} if $\||\psi\>\|=1$. A set of vectors $B\define\{|i\> : i\in I\}$ in $\h$ is \emph{orthonormal} if each $|i\>$ is normalised and every two of them are orthogonal. Furthermore, if they span the whole space $\h$; that is, any vector in $\h$ can be written as a linear combination of vectors in $B$, then $B$ is called an \emph{orthonormal basis} of $\h$. 
	
	Let $\lh$ be the set of linear operators on $\h$, and $\z_\h$ and $I_\h$ the zero and identity operators respectively. Let $A\in \lh$. The \emph{trace} of $A$ is defined to be $\tr(A) \define \sum_{i\in I} \<i|A|i\>$ for some (or, equivalently, any) orthonormal basis $\{|i\> : i\in I\}$ of $\h$. The \emph{adjoint} of $A$, denoted $A^\dag$, is the unique linear operator in $\lh$ such that $\<\psi|A|\phi\> = \<\phi|A^\dag |\psi\>^*$ for all $|\psi\>, |\phi\>\in \h$. Here for a complex number $z$, $z^*$ denotes its conjugate. Operator $A$ is said to be \emph{normal} if $A^\dag  A = A A^\dag$, \emph{hermitian} if $A^\dag = A$, \emph{unitary} if $A^\dag A = I_\h$, and \emph{positive} if for all $|\psi\>\in \h$, $\<\psi|A|\psi\>\geq 0$. Obviously,  hermitian operators are normal, and both unitary operators and positive ones are hermitian. Any normal operator $A$ can be written into a \emph{spectral decomposition} form $A  = \sum_{i\in I} \lambda_i |i\>\<i|$ where $\{|i\> : i\in I\}$ constitute some orthonormal basis of $\h$. Furthermore, if $A$ is hermitian, then all $\lambda_i$'s are real; if $A$ is unitary, then all $\lambda_i$'s have unit length; if $A$ is positive, then all $\lambda_i$'s are non-negative.  The L\"owner (partial) order $\le_\h$ on the set of hermitian operators on $\h$ is defined by letting $A\le_\h B$ iff $B-A$ is positive. 
	
	Let $\h_1$ and $\h_2$ be two finite dimensional Hilbert spaces, and $\h_1\otimes \h_2$ their tensor product. 
	Let $A_i\in \l(\h_i)$. The tensor product of $A_1$ and $A_2$, denoted $A_1\otimes A_2$ is a linear operator in $\l(\h_1\otimes \h_2)$ such that
	$(A_1\otimes A_2)|(\psi_1\>\otimes |\psi_2)\> = (A_1|\psi_1\>)\otimes (A_2|\psi_2\>)$ for all $|\psi_i\> \in \h_i$. To simplify notations, we often write $|\psi_1\> |\psi_2\>$ for $|\psi_1\>\otimes |\psi_2\>$.
	Given $\h_1$ and $\h_2$, the \emph{partial trace} with respect to $\h_2$, denoted $\tr_{\h_2}$, is a linear mapping from
	$\l(\h_1\otimes \h_2)$ to $\l(\h_1)$ such that for any $|\psi_i\>, |\phi_i\> \in \h_i$, $i=1,2$,
	$$\tr_{\h_2}(|\psi_1\>\<\phi_1|\otimes |\phi_1\>\<\phi_2|) = 
	\<\phi_2|\phi_1\> |\psi_1\>\<\phi_1|.$$
	The definition is extended to $\l(\h_1\otimes \h_2)$ by linearity.
	
	A linear operator $\e$ from $\l(\h_1)$ to $\l(\h_2)$ is called a \emph{super-operator}.  It is said to be (1) \emph{positive} if it maps positive operators to positive operators; (2) \emph{completely positive} if all the cylinder extension $\mathcal{I}_\h\otimes \e$ is positive for all finite dimensional Hilbert space $\h$, where $\mathcal{I}_\h$ is the identity super-operator on $\lh$; (3) \emph{trace-preserving} (resp. \emph{trace-nonincreasing}) if 
	$\tr(\e(A)) = \tr(A)$ (resp. $\tr(\e(A)) \leq \tr(A)$ for any positive operator $A\in \l(\h_1)$; (4) \emph{unital} (resp. \emph{sub-unital}) if 
	$\e(I_{\h_1})= I_{\h_2}$ (resp. $\e(I_{\h_1}) \le_{\h_2} I_{\h_2}$).
	From \emph{Kraus representation theorem}~\cite{kraus1983states}, a super-operator $\e$  from $\l(\h_1)$ to $\l(\h_2)$ is completely positive iff there is some set of linear operators, called \emph{Kraus operators}, $\{E_i : i\in I\}$ from $\h_1$ to $\h_2$ such that $\e(A) = \sum_{i\in I} E_i A E_i^\dag$ for all $A\in \l(\h_1)$. 
	It is easy to check that the trace and partial trace operations defined above are both completely positive and trace-preserving super-operators. 
	Given a completely positive super-operator $\e$ from $\l(\h_1)$ to $\l(\h_2)$ with Kraus operators $\{E_i : i\in I\}$, the adjoint of $\e$, denoted $\e^\dag$, is a completely positive super-operator from $\l(\h_2)$ back to $\l(\h_1)$ with Kraus operators $\{E_i^\dag : i\in I\}$. Then we have $(\e^\dagger)^\dag = \e$, and $\e$ is trace-preserving (resp. trace-nonincreasing) iff $\e^\dag$ is unital (resp. sub-unital). Furthermore, for any $A\in \l(\h_1)$ and $B\in \l(\h_2)$, $\tr(\e(A)\cdot B) = \tr(A\cdot \e^\dag(B))$.

	\subsection{Basic quantum mechanics}
	
	According to von Neumann's formalism of quantum mechanics
	\cite{vN55}, any quantum system with finite degrees of freedom is associated with a finite-dimensional Hilbert space $\h$ called its \emph{state space}. When $\dim(\h) = 2$, we call such a system a \emph{qubit}, the analogy of bit in classical computing. A {\it pure state} of the system is described by a normalised vector in $\h$. When the system is in one of an {ensemble} of states $\{|\psi_i\>: i\in I\}$ with respective probabilities $p_i$, we say it is in a \emph{mixed} state, represented by the \emph{density operator} $\sum_{i\in I} p_i|\psi_i\>\<\psi_i|$ on $\h$. Obviously, a density operator is positive and has trace 1. Conversely, by spectral decomposition, any positive operator with unit trace corresponds to some (not necessarily unique) mixed state.
	
	The state space of a composite system (for example, a quantum system
	consisting of multiple qubits) is the tensor product of the state spaces
	of its components. For a mixed state $\rho$ in $\h_1 \otimes \h_2$,
	partial traces of $\rho$ have explicit physical meanings: the
	density operators $\tr_{\h_1}(\rho)$ and $\tr_{\h_2}(\rho)$ are exactly
	the reduced quantum states of $\rho$ on the second and the first
	component systems, respectively. Note that in general, the state of a
	composite system cannot be decomposed into tensor product of the
	reduced states on its component systems. A well-known example is the
	2-qubit state
	$|\Psi\>=\frac{1}{\sqrt{2}}(|00\>+|11\>).
	$
	This kind of state is called {\it entangled state}, and usually is the key to many quantum information processing tasks  such as teleportation
	\cite{bennett1993teleporting} and superdense coding \cite{bennett1992communication}.
	
	The \emph{evolution} of a closed quantum system is described by a unitary
	operator on its state space: if the states of the system at times
	$t_1$ and $t_2$ are $\rho_1$ and $\rho_2$, respectively, then
	$\rho_2=U\rho_1U^{\dag}$ for some unitary operator $U$ which
	depends only on $t_1$ and $t_2$. In contrast, the general dynamics which can occur in a physical system is
	described by a completely positive and trace-preserving super-operator on its state space. 
	Note that the unitary transformation $\e_U(\rho)\define U\rho U^\dag$ is
	such a super-operator. 
	
	A quantum {\it measurement} $\m$ is described by a
	collection $\{M_i : i\in I\}$ of linear operators on $\h$, where $I$ is the set of measurement outcomes. It is required that the
	measurement operators satisfy the completeness equation
	$\sum_{i\in I}M_i^{\dag}M_i = I_\h$. If the system is in state $\rho$, then the probability
	that measurement result $i$ occurs is given by
	$p_i=\tr(M_i^{\dag}M_i\rho),$ and the state of the post-measurement system
	is $\rho_i = M_i\rho M_i^{\dag}/p_i$ whenever $p_i>0$. 
	Note that the super-operator $$\e_\m: \rho\mapsto
	\sum_{i\in I} p_i \rho_i = \sum_{i\in I} M_i\rho M_i^\dag$$
	which maps the initial state to the final (mixed) one when the measurement outcome is ignored is completely positive and trace-preserving.
	A particular case of measurement is {\it projective measurement} which is usually represented by a hermitian operator $M$ in $\lh$ called \emph{observable}.  Let 
	\[
	M=\sum_{m\in \mathit{spec}(M)}mP_m
	\] 
	where $\mathit{spec}(M)$ is the set of eigenvalues of $M$, and $P_m$ the projection onto the eigenspace associated with $m$. 
	Obviously, the projectors  $\{P_m:m\in
	spec(M)\}$ form a quantum measurement. 
	
	In this paper, we are especially concerned with the set 
	\[
	\ph \define \{M\in \lh : \z_\h\le M\le I_\h\}
	\]
	of observables whose eigenvalues lie between 0 and 1, where $\le$ is the L\"owner  order on $\lh$. Furthermore, following Selinger's convention~\cite{selinger2004towards}, we regard the set of \emph{partial density operators} 
	\[
	\dh \define \{\rho\in \lh : \z_\h\le \rho, \tr(\rho)\leq 1\}
	\]
	as (unnormalised) quantum states. Intuitively, the partial density operator $\rho$ means that the legitimate quantum state $\rho/\tr(\rho)$ is reached with probability $\tr(\rho)$.
	As a matter of fact, we note that $\dh\subseteq\ph$.
	
	\section{Some useful lemmas}
	
	We first recall some basic properties of cq-states and cq-assertions from~\cite{feng2020quantum}.
	
	\begin{lemma}[Lemma 3.9, \cite{feng2020quantum}]\label{lem:bpdeg}
		For any cq-state $\qstate\in \qstatesh{V}$, cq-assertion $\qassert\in \qassertsh{W}$ with $W\subseteq V$, and classical assertion $\cassert$,
		\begin{enumerate}
			\item $\Exp(\qstate \models \qassert)\in [0,1]$;
			\item $\Exp(\emptydis_V \models \qassert) = \Exp(\qstate \models \emptydis_{W})=0$, $\Exp(\qstate \models \top_{W}) = \tr(\qstate)$;
			\item $\Exp(\qstate \models \qassert) = \sum_i \lambda_i \Exp(\qstate \models \qassert_i)$ if $\qassert = \sum_i \lambda_i \qassert_i$;
			\item $\Exp(\qstate \models \qassert) = \sum_i \lambda_i \Exp(\qstate_i \models \qassert)$ if $\qstate = \sum_i \lambda_i \qstate_i$;
			\item\label{cl:lem3.9} $\Exp(\qstate |_\cassert \models \qassert) = \Exp(\qstate \models \cassert\wedge\qassert)$;
			{\item\label{cl:super} $\Exp(\qstate \models \f(\qassertp)) = \Exp(\f^\dag(\qstate) \models \qassertp)$ for any $\qassertp\in \qassertsh{W'}$ and any completely positive and sub-unital super-operator $\f$ from $\h_{W'}$ to $\h_W$.}
		\end{enumerate} 
	\end{lemma}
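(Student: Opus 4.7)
The plan is to unfold Definition~\ref{def:satisfaction} of expectation,
\[
\Exp(\qstate \models \qassert) = \sum_{\cstate \in \supp{\qstate}} \tr\bigl[(\qassert(\cstate) \otimes I_{\h_V}) \cdot \qstate(\cstate)\bigr]
\]
where $V = \qv(\qstate) \setminus \qv(\qassert)$, and to reduce each of the six claims to elementary properties of the trace. Throughout, $\qassert(\cstate) \otimes I_{\h_V}$ is a positive operator bounded above by $I_{\h_{\qv(\qstate)}}$ in L\"owner order (because $\qassert(\cstate) \in \mathcal{P}(\h_{\qv(\qassert)})$), and $\qstate(\cstate) \in \mathcal{D}(\h_{\qv(\qstate)})$, so every summand is a nonnegative real and the reorderings and linear manipulations below are all legitimate.

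For (1), the L\"owner bound gives $0 \leq \tr[(\qassert(\cstate) \otimes I_{\h_V}) \cdot \qstate(\cstate)] \leq \tr(\qstate(\cstate))$; summing over $\supp{\qstate}$ yields a value in $[0, \tr(\qstate)] \subseteq [0,1]$. For (2), substituting $\emptydis_V$ (the constant $\z_{\h_V}$ function) makes every summand zero in either argument, while substituting $\qassert = \top_W = \<\true, I_{\h_W}\>$ leaves $\tr(\qstate(\cstate))$, which sums to $\tr(\qstate)$. Claims (3) and (4) are direct consequences of linearity of the trace and of summation, combined with the standing convention that $\sum_i \lambda_i \qassert_i$ and $\sum_i \lambda_i \qstate_i$ are only written when well-defined as cq-assertions and cq-states respectively.

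For (5), note that $\qstate|_\cassert(\cstate)$ equals $\qstate(\cstate)$ when $\cstate \models \cassert$ and $\z_{\h_V}$ otherwise, while by the convention $\cassert \wedge \bigoplus_i \<\cassert_i, M_i\> = \bigoplus_i \<\cassert \wedge \cassert_i, M_i\>$ we have $(\cassert \wedge \qassert)(\cstate) = \qassert(\cstate)$ when $\cstate \models \cassert$ and $\z_{\h_W}$ otherwise. Hence both expectations collapse to the same sum over $\supp{\qstate} \cap \sem{\cassert}$. For (6), the key ingredient is the standard adjoint identity $\tr[\f(A) \cdot B] = \tr[A \cdot \f^\dag(B)]$, applied pointwise in $\cstate$ after cylinder-extending $\f$ on the disjoint portion of $\qv(\qstate)$; one verifies $(\f \otimes \mathcal{I})^\dag = \f^\dag \otimes \mathcal{I}$ directly from a Kraus representation of $\f$, which also confirms that sub-unitality of $\f$ propagates to the extension (so that the right-hand side is itself in $[0,1]$ via (1)).

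The main obstacle is really only notational: carefully tracking which quantum-variable sets the various objects live on, and aligning the cylinder extensions $(\cdot) \otimes I_{\h_{(\cdot)}}$ on both sides of (5) and (6) so that both expressions refer to traces of operators acting on a common Hilbert space. Once that bookkeeping is in place, the six statements reduce mechanically to linearity of trace plus the Kraus-level adjoint identity.
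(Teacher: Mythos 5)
Your proof is correct, and it follows the only natural route: each clause reduces, by unfolding Definition~\ref{def:satisfaction}, to positivity and (countable) linearity of the trace — with the splitting into positive and negative coefficient parts justifying the sum interchanges in (3) and (4) — together with the adjoint identity $\tr[\f(A)\cdot B]=\tr[A\cdot\f^\dag(B)]$ and the Kraus-level fact $(\f\otimes\mathcal{I})^\dag=\f^\dag\otimes\mathcal{I}$ for clause (6). Note that this paper supplies no proof of its own for this lemma — it is imported verbatim from~\cite{feng2020quantum} (Lemma 3.9 there), where the argument is the same direct computation — so your attempt matches the intended proof and there is no divergence to report.
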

	
	\begin{lemma}[Lemma 3.10, \cite{feng2020quantum}]\label{lem:qassetorder}\quad
		\begin{enumerate}
			\item For any cq-states $\qstate$ and $\qstate'$ in $\qstatesh{V}$,
			\begin{itemize}
				\item if $\qstate \le \qstate'$, then $\Exp(\qstate \models \qassert)\leq \Exp(\qstate' \models \qassert)$ for all $\qassert\in \qassertsh{W}$ with $W\subseteq V$;
				\item conversely, if $\Exp(\qstate \models \qassert)\leq \Exp(\qstate' \models \qassert)$ for all $\qassert\in \qassertsh{V}$, then $\qstate \le \qstate'$.
			\end{itemize} 
			\item For any cq-assertions $\qassert$ and $\qassert'$ with $W=qv(\qassert)\cup qv(\qassert')$,
			\begin{itemize}
				\item if $\qassert \lesssim \qassert'$, then $\Exp(\qstate \models \qassert)\leq \Exp(\qstate \models \qassert')$ for all $\qstate\in \qstatesh{V}$ with $W\subseteq V$;
				\item conversely, if $\Exp(\qstate \models \qassert)\leq \Exp(\qstate \models \qassert')$ for all $\qstate\in \qstatesh{W}$, then $\qassert \lesssim \qassert'$.
			\end{itemize} 
		\end{enumerate} 
	\end{lemma}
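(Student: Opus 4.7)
The plan is to establish the four implications of the statement separately, reducing each to standard facts about the L\"owner order: a Hermitian operator $A$ on a finite-dimensional $\h_W$ satisfies $A\le_W 0$ iff $\tr(A\rho)\le 0$ for every positive $\rho$, and in consequence $B\le_W C$ and $\rho$ positive give $\tr(B\rho)\le\tr(C\rho)$. The \emph{forward} direction of (1) is then pointwise: $\qstate\le\qstate'$ forces $qv(\qstate)=qv(\qstate')=V$ and makes $\qstate'(\cstate)-\qstate(\cstate)$ a positive operator for every $\cstate$, so for any $\qassert\in\qassertsh{W}$ with $W\subseteq V$ the positivity of $\qassert(\cstate)\otimes I_{V\setminus W}$ (since $\qassert(\cstate)$ has eigenvalues in $[0,1]$) yields $\tr[(\qassert(\cstate)\otimes I)\qstate(\cstate)]\le\tr[(\qassert(\cstate)\otimes I)\qstate'(\cstate)]$; summing over the combined countable support concludes. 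The forward direction of (2) is dual: $\qassert\lesssim\qassert'$ unfolds pointwise to $\qassert(\cstate)\otimes I\le_W I\otimes\qassert'(\cstate)$, tensoring with identities on disjoint qubits does not change an expectation, and the same trace inequality applied summand-by-summand against the positive $\qstate(\cstate)$ gives the conclusion.

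The converse of (1) is the principal technical step. Fix $\cstate_0\in\cstates$; I need $\qstate(\cstate_0)\le_V\qstate'(\cstate_0)$. The idea is to manufacture cq-assertions that asymptotically isolate $\cstate_0$. Enumerate $(\supp{\qstate}\cup\supp{\qstate'})\setminus\{\cstate_0\}$ as a sequence $\cstate_1,\cstate_2,\ldots$, pick for each $i$ a classical variable $x_i$ with $\cstate_0(x_i)\neq\cstate_i(x_i)$, and form the first-order formula $\cassert_N\equiv\bigwedge_{i=1}^N (x_i=\cstate_0(x_i))$, which is true at $\cstate_0$ and false at $\cstate_1,\ldots,\cstate_N$. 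For every $M\in\phv$, the function $\qassert_N\define\<\cassert_N,M\>$ is a valid cq-assertion (its image $\{M,\z_{\h_V}\}$ is finite and both preimages are definable), so the hypothesis gives $\Exp(\qstate\models\qassert_N)\le\Exp(\qstate'\models\qassert_N)$. Because $\sum_{i\ge 1}\tr(\qstate(\cstate_i))+\sum_{i\ge 1}\tr(\qstate'(\cstate_i))\le 2$ and $0\le_V M\le_V I_{\h_V}$ bounds each summand by the corresponding trace, the tail contributions from $\cstate_i$ with $i>N$ vanish uniformly in $M$ as $N\to\infty$, leaving $\tr(M\cdot\qstate(\cstate_0))\le\tr(M\cdot\qstate'(\cstate_0))$ for every $M\in\phv$. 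Scaling extends this to all positive $M$, and testing on rank-one projectors $|\psi\>\<\psi|$ delivers $\qstate(\cstate_0)\le_V\qstate'(\cstate_0)$.

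The converse of (2) is cleaner because any single classical state is already isolated by a one-point cq-state. Fixing $\cstate_0$ and taking $\qstate\define\<\cstate_0,\rho\>$ for arbitrary $\rho\in\d(\h_W)$, the hypothesis collapses to $\tr[(\qassert(\cstate_0)\otimes I_{W\setminus V_1})\rho]\le\tr[(I_{W\setminus V_2}\otimes\qassert'(\cstate_0))\rho]$; rescaling $\rho$ extends this to every positive $\rho$, and the spectral characterisation of the L\"owner order then forces $\qassert(\cstate_0)\otimes I\le_W I\otimes\qassert'(\cstate_0)$, i.e., $\qassert\lesssim\qassert'$. The principal obstacle is therefore concentrated in the converse of (1): verifying that the classical assertion language (first-order logic with equality) does admit each finite conjunction $\cassert_N$ as a well-formed formula, and showing that the tail of the expectation tends to zero uniformly in the test operator $M$, which is precisely why bounding $\|M\|\le 1$ via the restriction $M\in\phv$, together with the sub-probability normalisation $\tr(\qstate),\tr(\qstate')\le 1$, is essential to the argument.
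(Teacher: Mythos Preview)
The paper does not supply its own proof of this lemma: it is stated in the appendix under ``Some useful lemmas'' and attributed verbatim to \cite{feng2020quantum} (as Lemma~3.10 there), so there is no in-paper argument to compare against.

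Your proposal is correct as a self-contained proof. The two forward implications are indeed routine consequences of positivity and monotonicity of the trace pairing, and your converse of~(2) via the one-point cq-states $\langle\sigma_0,\rho\rangle$ is the natural argument. The only part requiring genuine care is the converse of~(1), where the obstruction---that a single first-order formula over the infinite variable set $\CVar$ cannot define a singleton $\{\sigma_0\}$---is handled correctly by your approximation scheme: the formulas $p_N$ separate $\sigma_0$ from any prescribed finite subset of the joint support, and the uniform bound $0\le M\le I$ together with $\tr(\Delta)+\tr(\Delta')\le 2$ forces the tail of the expectation to vanish. One small point worth stating explicitly is that the constants $\sigma_0(x_i)\in D_{\tybool}\cup D_{\tyint}$ are indeed nameable in the assertion language, so each $p_N$ is a well-formed formula; you allude to this but it is the one place where the definability requirement on cq-assertions actually bites.
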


	\begin{lemma}[Lemma 3.11, \cite{feng2020quantum}]\label{lem:qasset}
		For any cq-states $\qstate, \qstate_n\in \qstatesh{V}$ and cq-assertions $\qassert, \qassert_{n} \in \qassertsh{W}$ with $W\subseteq V$, $n=1, 2, \cdots$,
		\begin{enumerate}
			\item $\Exp(\bigvee_{n\geq 0}\qstate_n \models \qassert) = \sup_{n\geq 0}\Exp(\qstate_n\models \qassert)$ for increasing sequence $\{\qstate_n\}_n$;		
			\item $\Exp(\bigwedge_{n\geq 0}\qstate_n \models \qassert) = \inf_{n\geq 0}\Exp(\qstate_n\models \qassert)$ for decreasing sequence $\{\qstate_n\}_n$;
			\item $\Exp(\qstate \models \bigvee_{n\geq 0}\qassert_n) = \sup_{n\geq 0}\Exp(\qstate\models \qassert_n)$ for increasing sequence $\{\qassert_n\}_n$;
			\item $\Exp(\qstate \models \bigwedge_{n\geq 0}\qassert_n) = \inf_{n\geq 0}\Exp(\qstate\models \qassert_n)$ for decreasing sequence $\{\qassert_n\}_n$.	
		\end{enumerate} 
	\end{lemma}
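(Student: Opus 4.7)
The plan is to prove all four parts by combining pointwise monotone convergence (in the L\"owner order on finite-dimensional positive operators) with an interchange of the countable sum over $\supp{\qstate}$ and the monotone limit in $n$, justified by the classical monotone convergence theorem for non-negative series. The two decreasing cases (parts (2) and (4)) will be routed through the two increasing cases (parts (1) and (3)) by a subtraction argument inside the pointed $\omega$-CPO.

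For part (1), fix an increasing sequence $\{\qstate_n\}$ in $\qstatesh{V}$ with $\qstate \define \bigvee_n \qstate_n$. Because the CPO structure on $\qstatesh{V}$ is defined pointwise, $\qstate(\cstate) = \sup_n \qstate_n(\cstate)$ in $\dhv$ for every $\cstate$, and $\supp{\qstate}$ equals the countable union $\bigcup_n \supp{\qstate_n}$. In finite dimensions, a L\"owner-increasing sequence of positive operators bounded in trace converges in operator (equivalently, trace) norm to its supremum, and $\tr_{\h_{V\setminus W}}$ is a continuous linear map. Hence for every positive operator $N$ on $\h_W$,
\[
\tr\bigl[N\cdot\tr_{\h_{V\setminus W}}(\qstate(\cstate))\bigr]
\;=\;\sup_n \tr\bigl[N\cdot\tr_{\h_{V\setminus W}}(\qstate_n(\cstate))\bigr].
\]
Specialising to $N=\qassert(\cstate)$, the summand indexed by $\cstate$ is non-negative and increases monotonically in $n$ to its limit. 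Summing over the fixed countable set $\supp{\qstate}$ and invoking the monotone convergence theorem for series lets me interchange $\sum_\cstate$ and $\sup_n$, which yields $\Exp(\qstate\models\qassert)=\sup_n \Exp(\qstate_n\models\qassert)$.

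Part (2) follows from part (1) by a subtraction trick. Since $\{\qstate_n\}$ decreases and each $\qstate_n$ has trace at most $1$, the differences $\qstate_0-\qstate_n$ are well-defined cq-states, form an increasing sequence, and have supremum $\qstate_0-\bigwedge_n\qstate_n$ (a straightforward pointwise verification in $\dhv$). Applying part (1) to this sequence and distributing $\Exp(\cdot\models\qassert)$ using Lemma~\ref{lem:bpdeg}(4) (linearity in the cq-state) gives
\[
\Exp(\qstate_0\models\qassert)-\Exp\bigl(\bigwedge_n\qstate_n\models\qassert\bigr)
\;=\;\sup_n\bigl(\Exp(\qstate_0\models\qassert)-\Exp(\qstate_n\models\qassert)\bigr),
\]
and rearranging yields the claim; the rearrangement is safe because Lemma~\ref{lem:bpdeg}(1) keeps every expectation in $[0,1]$, so no indeterminate differences appear. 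Parts (3) and (4) follow the same template on the assertion side. For part (3), the pointwise bounds $\qassert_n(\cstate)\le\qassert_{n+1}(\cstate)\le I_{\h_W}$ force L\"owner convergence in $\ph_W$; setting $\sigma=\tr_{\h_{V\setminus W}}(\qstate(\cstate))$, monotone convergence on the fixed countable index set $\supp{\qstate}$ interchanges the sum and the supremum. Part (4) is then derived by subtracting from $\qassert_0$ inside $\qassertsh{W}$ and applying Lemma~\ref{lem:bpdeg}(3) (linearity in the cq-assertion).

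The only subtle point is the interchange of the infinite sum over $\supp{\qstate}$ with the monotone limit in $n$: monotone convergence directly applies only to increasing non-negative sequences, which is why the decreasing cases must be routed through differences. This detour is legitimate precisely because the pointed $\omega$-CPO structures on $\qstatesh{V}$ and $\qassertsh{W}$ have operators bounded above (trace $\le 1$ and eigenvalues $\le 1$, respectively), ensuring that every difference $\qstate_0-\qstate_n$ and $\qassert_0-\qassert_n$ remains inside the respective CPO and that the corresponding expectations stay in $[0,1]$.
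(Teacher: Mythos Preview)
Your proof is correct and follows the natural route: pointwise L\"owner convergence in finite dimensions, combined with the classical monotone convergence theorem for non-negative series over the countable support, and a clean subtraction trick to reduce the decreasing cases to the increasing ones via Lemma~\ref{lem:bpdeg}(3)--(4).

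The paper itself does not supply a proof of this lemma; it is quoted verbatim from~\cite{feng2020quantum} (as Lemma~3.11 there) and used as a black box. So there is no ``paper's own proof'' to compare against in this manuscript. Your argument is exactly the kind of standard verification one would expect for such a continuity lemma, and nothing in it is in tension with how the result is stated or used here.
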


	The following lemma presents the explicit form for denotational semantics of various constructs for sequential programs, which extends 
	\cite[Lemma 4.6]{feng2020quantum}.
	
	\begin{lemma}\label{lem:iout} For any cq-states $\<\cstate, \rho\>$ and $\qstate$ in $\s_V$ where $V$ contains all quantum variables of the corresponding program,
		\begin{enumerate}
			\item  $\sem{\sskip}(\qstate) = \qstate$, $\sem{\abort}(\qstate) = \bot_{V}$;
			\item  $\sem{x:=e}(\cstate, \rho) = \<\cstate[\subs{x}{\cstate(e)}], \rho\>$;
			\item  $\sem{x\rassign g}(\cstate, \rho) = \sum_{d\in D_{\mathit{type}(x)}} \<\cstate[\subs{x}{d}], g(d) \cdot \rho\>$;
			\item $\sem{x:= \mymeas\ \m[\bar{q}]}(\cstate, \rho) = \sum_{i\in I}  \<\cstate[\subs{x}{i}], M_i \rho M_i^\dag\>$ where $M_i$'s are applied on $\bar{q}$, and $\m=\{M_i : i\in I\}$;
			\item  $\sem{q:=0}(\cstate, \rho) = \<\cstate, \sum_{i=0}^{d_q-1}\qzi \rho\qiz\>\>$;
			\item $\sem{\bar{q}\apply U}(\cstate, \rho) = \<\cstate, U_{\bar{q}} \rho U_{\bar{q}}^\dag\>$.
			\item $\sem{S_0; S_1}(\qstate) = \sem{S_1}(\sem{S_0}(\qstate))$;
			\item $\sem{\altercom}(\qstate) = \sum_{i=1}^{n}\sem{S_i}(\qstate|_{B_i})$;
			\item\label{item:loop} $\sem{\repcom}(\qstate)= \bigvee_k \sem{S^k}(\qstate)$, where
			$S\define \repcom$,
			$S^0 \define \abort$, and for any $k\geq 0$,
			\[S^{k+1} \define \iif\ \square_{i=1}^n B_i\ra S_i; S^k\ \square\ B_0 \ra \sskip\ \fii.\]
			Here $B_0\define \bigwedge_{i=1}^n \neg B_i$. 
			Thus $
			\sem{S}(\qstate) = \qstate |_{B_0} + \sum_{i=1}^{n}\sem{S}(\sem{S_i}(\qstate |_{B_i})).
			$
		\end{enumerate}
	\end{lemma}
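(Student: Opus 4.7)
\textbf{Proof proposal for Lemma~\ref{lem:iout}.} The plan is to treat the items in order, leveraging Theorem~\ref{thm:determ} (so that any single computation starting from $\<\cstate,\rho\>$ computes the unique value $\sem{S}(\cstate,\rho)$) together with the linearity clause Lemma~\ref{lem:welldef}(2), which reduces the case of an arbitrary $\qstate = \bigoplus_{i}\<\cstate_i,\rho_i\>$ to the pointwise case $\<\cstate,\rho\>$ with $\tr(\rho)=1$.

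For items (1)--(6), each program is a single atomic command. Consulting Table~\ref{tbl:opsemantics}, a single transition from $\<S,\cstate,\rho\>$ produces a distribution $\mu_1$ whose support consists entirely of terminal configurations of the form $\<E,\tau,\sigma\>$. All further transitions of $\mu_1$ leave it unchanged (terminals have no successors and are counted via $\qstate_{\mu_i}$). Hence $\qstate_\pi = \qstate_{\mu_1}$, which by direct reading of the corresponding rule equals the stated expression. Item (6) for $\bar q \apply U$ is immediate, and for (4) one uses $M_i\rho M_i^\dagger = p_i\cdot(M_i\rho M_i^\dagger/p_i)$ and linearity to absorb the normalisation; branches with $p_i=0$ contribute the zero operator and may be included without harm.

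Item (7) follows from a routine induction on the structure of $S_0$, using the sequential composition rule and the observation that once $S_0$ is replaced by $E$ in some branch, the continuation $E;S_1\equiv S_1$ then runs from the intermediate cq-state computed by $S_0$; combine this with Theorem~\ref{thm:determ} to conclude that any computation of $S_0;S_1$ yields $\sem{S_1}(\sem{S_0}(\qstate))$. Item (8) is proved by decomposing $\qstate = \qstate|_{\mathit{fail}} + \sum_{i=1}^n \qstate|_{B_i}$ (using the mutual exclusiveness of the $B_i$ and Lemma~\ref{lem:bpdeg}\eqref{cl:lem3.9}); on each piece $\qstate|_{B_i}$ the first transition selects branch $S_i$, on the $\qstate|_{\bigwedge_i \neg B_i}$ piece the transition enters a $\fail$ configuration which is terminal but \emph{not properly} terminated, hence contributes $\emptydis_V$ to $\qstate_\pi$ and drops out of the sum.

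The main obstacle is item (9), the fixed-point characterisation of the repetitive command. The plan is as follows. First, verify by a direct operational argument that $\sem{S}$ satisfies the unfolding identity
\[
\sem{S}(\qstate) = \qstate|_{B_0} + \sum_{i=1}^{n}\sem{S}\bigl(\sem{S_i}(\qstate|_{B_i})\bigr),
\]
by splitting $\qstate$ along the guards exactly as in (8) and noting that on each $\qstate|_{B_i}$ the first transition rewrites $\repcom$ to $S_i;\repcom$, whose semantics by (7) is $\sem{S}\circ \sem{S_i}$. Second, show by induction on $k$ that $\sem{S^k}(\qstate) \le \sem{S}(\qstate)$: the base $k=0$ is trivial since $\sem{\abort}=\bot_V$, and the inductive step uses (8), the inductive hypothesis, the unfolding identity, and monotonicity of $\sem{S_i}$ (another induction, or a direct consequence of the fact that each rule in Table~\ref{tbl:opsemantics} is monotone in $\rho$). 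Third, show the reverse bound $\sem{S}(\qstate) \le \bigvee_k \sem{S^k}(\qstate)$ by tracing an arbitrary computation $\pi = \{\mu_i\}_{i\geq 0}$ of $S$: each properly terminated branch at step $i$ must have executed only finitely many iterations of the loop, so the properly terminated mass in $\mu_i$ is bounded above by $\sem{S^{k_i}}(\qstate)$ for some $k_i$, hence $\qstate_\pi = \bigvee_i \qstate_{\mu_i} \le \bigvee_k \sem{S^k}(\qstate)$. Combining these two directions with the $\omega$-CPO structure of $\qstatesh{V}$ gives $\sem{S}(\qstate) = \bigvee_k \sem{S^k}(\qstate)$, and the closed-form unfolding identity at the end of the lemma is then immediate. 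The hardest bookkeeping will be in the third step, where one must make precise how the probability mass in $\mu_i$ splits between branches that have already terminated and those still executing some finite unfolding $S^{k}$.
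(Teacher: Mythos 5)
Your proposal is correct and follows essentially the same route as the paper, whose proof is simply deferred to that of \cite[Lemma 4.6]{feng2020quantum}: the atomic cases (1)--(6) read directly off Table~\ref{tbl:opsemantics}, the structural cases (7)--(8) by guard decomposition, linearity (Lemma~\ref{lem:welldef}) and determinism, and the loop case (9) by the increasing approximants $S^k$ with the two inequalities and the unfolding identity exactly as you describe. The only cosmetic slips do not affect the argument: $\abort$ performs \emph{no} transition at all (its configuration is terminal with $S\not\equiv E$, hence contributes $\bot_V$ directly, rather than via a step to a terminal of the form $\<E,\tau,\sigma\>$), and in item (8) the all-guards-false branch steps to $\<E,\fail,\rho\>$, which is excluded from $\qstate_\mu$ by the condition $\cstate_i\not\equiv\fail$ rather than by failing to reach $E$.
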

	\begin{proof}
		Similar to that of~\cite[Lemma 4.6]{feng2020quantum}.
	\end{proof}
	\section{Omitted proofs}
	
	\begin{proof}[Proof of Lemma~\ref{lem:supstep}]
		Induction on the structure of $S$.
	\end{proof}
	
	\begin{proof}[Proof of Lemma~\ref{lem:infcom}]
		This can be easily seen from the fact that the only successor configuration of a terminal one under $\ra$ is itself.
	\end{proof}
	
	\subsection{Proof of Theorem~\ref{thm:determ}}
	To prove Theorem~\ref{thm:determ}, we first introduce some notions. Note that from Table~\ref{tbl:opsemantics}, any transition $\conf{S}{\cstate}{\rho}\ra \mu$ of a distributed program $S$ must be obtained by using (Paral) or (Comm). To make it clear which processes are involved in the transition, we write $\conf{S}{\cstate}{\rho}\rto{k} \mu$ if it is caused by a local action of process $S_k$. Similarly, we write $\conf{S}{\cstate}{\rho}\rto{(k,\ell)} \mu$ if it is caused by a communication  between processes $S_k$ and $S_\ell$ with $k<\ell$. Let $\t \define [n] \cup \{(k,\ell)\in [n]^2: k<\ell\}$, where $[n] \define \{1,\ldots, n\}$, be the set of possible transition labels.
	\begin{definition} Let $\pi = \{\mu_i : i\geq 0\}$ be a computation of $\<S, \sigma, \rho\>$. The (infinite) \emph{derivative tree} $T$ induced by $\pi$ is defined as follows: for all $i\geq 0$,
		\begin{enumerate}
			\item nodes at the $i$-th level of $T$ are support configurations of $\mu_i$. In particular, the root node of $T$ is $\<S, \cstate, \rho\>$;
			\item for any $i$-th level node $c$ (thus $c\in \supp{\mu_i}$) which is not a terminal, if $c\rto{A}  \nu$, $A\in \t$, is the transition from $c$ which contributes to the evolvement from $\mu_i$ to $\mu_{i+1}$, then there is an edge in $T$ from $c$ to each support configuration of $\nu$. Furthermore, these edges are labelled by action $A$ and their corresponding probabilities in $\nu$;
			\item for any terminal configuration $c$ at the $i$-th level, note that $c$ also appears at the $i+1$-th level. Then 
			there is an edge in $T$ from the $i$-th level $c$ to the $i+1$-th level $c$. Furthermore, this edge is labelled by a special symbol $\ast$ and probability 1.
		\end{enumerate}
	\end{definition}	
	Note that from a derivative tree $T$, we can easily recover the computation $\{\mu_i : i\geq 0\}$ as follows: for each $i\geq 0$, let $N_i$ be the set of nodes at the $i$-th level of $T$. Then
	\[
	\mu_i = \sum_{c\in N_i} p_c\cdot c
	\]
	where $p_c$ is the product of all the probabilities along the path from the root to $c$.
	
	\begin{definition}
		Let $\pi$ be a computation of $\<S, \sigma, \rho\>$, and $T$ its derivative tree.
		\begin{enumerate}
			\item A \emph{run} $r = \{c_i : i\geq 0\}$ of $\pi$ is a path of $T$ starting from the root node (thus $c_0 = \<S, \sigma, \rho\>$).
			\item The \emph{history} of a run $r = \{c_i : i\geq 0\}$ is a sequence $\{(E_i, A_i) \in 2^{\t} \times (\t\cup \{\ast\}) : i\geq 0\}$ such that $E_i$ is the set of transition labels that are enabled in $c_i$, while $A_i$ is the label on the edge $(c_i, c_{i+1})$ in $T$. Note that $A_i \in E_i$ whenever $E_i\neq \emptyset$.
		\end{enumerate}
	\end{definition}
	
	Fix arbitrarily a linear order $\le$ over $\t$. For example, we may let $A\le B$ if (1) $A\in [n]$ and $B\in [n]^2$, or (2) $A<B$ when both $A$ and $B$ are in $[n]$, or (3) $i<j$ when $A = (i, k)$ and $B=(j,\ell)$. 
	
	\begin{definition}
		A run is \emph{good} if its history $\{(E_i, A_i) : i\geq 0\}$ satisfies the following condition:
		\[
		\forall i\geq 0: (E_i \neq \emptyset \ra A_i = \min E_i)
		\]
		where $\min E_i$ is the minimum element in $E_i$ according to the linear order $\le$.
		A computation $\pi$ is good if all of its runs are good.
	\end{definition}
	
	We are now ready to prove the main theorem of this section, which says that all computations from a given input computes the same cq-state.
	
	\begin{proof}[Proof of Theorem~\ref{thm:determ}]
		Note that from any configuration $\<S, \sigma, \rho\>$, there exists a unique good computation. The main idea of the proof is that we can always transform the derivative tree $T$ of any computation into that of the good one starting from the same configuration, using some `commutativity' properties of transitions from different processes. Furthermore, this transformation does not change the computed cq-state. 
		
		Let $\pi = \{\mu_i : i\geq 0\}$ be a computation with $\mu_0 = \<S, \sigma, \rho\>\rto{A} \mu_1$, and $T$ its derivative tree. Suppose the good computation from $\<S, \sigma, \rho\>$ would choose $B$, $B\neq A$, as the first action. We show in the following how to transform $\pi$ into another (not necessarily good)  computation $\pi'$ with the first action being $B$, and they compute the same cq-state. To simplify the presentation, we assume $B = k$ for some $k\in [n]$ (the case when $B \in [n]^2$ is similar). 
		
		First, we prove that the $B$-transition must appear along every terminating run of $\pi$. To see this, suppose on the contrary there is a successful run $r$ in which no $B$-transition is executed. Note that any transition which does not involve $k$ cannot change the value of variables in $\cVar(S_k)$, and since $S_k$ is deterministic, at most one of the actions in $\t$ which involve $k$ is enabled at any moment. Consequently, $B$ will be continuously enabled along $r$, which is a contradiction since the quantum program in the last configuration of $r$ must be $E$.
		
		Now for any terminating run $r = \{c_i : i\geq 0\}$ of $\pi$ (thus $c_0 = \<S, \sigma, \rho\>$) with history $\{(E_i, A_i) : i\geq 0\}$, let $c_{i_B}\define \<R_{i_B}, \sigma_{i_B}, \rho_{i_B}\>$ be the first configuration in which $B$ is executed; that is, $A_{i_B} = B$, and $A_i \neq B$ for all $i<i_B$. From transition rule (Paral) in Table~\ref{tbl:opsemantics} and Lemma~\ref{lem:supstep}, let 
		\begin{equation}\label{eq:SB}
			\<S, \sigma, \rho\>\rto{B} \sum_{j\in J} p_j \cdot \<R_j, f_j(\sigma), \e_j(\rho)/p_j\>
		\end{equation}
		where $R_j = S_1\|\ldots \|S_{k,j}\| \ldots\|S_n$ for some $S_{k,j}$, $f_j$ only depends on $\cVar(S_k)$ but does not change the variables outside $\change(S_k)$, $\e_j$ is a super-operator acting on $\h_{\qVar(S_k)}$, and $p_j = \tr(\e_j(\rho))$.
		Then from the fact that along the path $c_0, c_1, \ldots, c_{i_B}$, no action involving $k$ is performed, the transition that happens at  $c_{i_B}$ in the computation $\pi$ has the form
		\begin{equation}\label{eq:RiB}
			\<R_{i_B}, \sigma_{i_B}, \rho_{i_B}\>\rto{B} \sum_{j\in J} p_{i_B,j} \cdot \<R_{i_B,j}, f_j(\sigma_{i_B}), \e_j(\rho_{i_B})/p_{i_B,j}\>
		\end{equation}
		where $R_{i_B,j} = S_1^{i_B}\|\ldots \|S_{k,j}\| \ldots\|S_n^{i_B}$
		whenever $R_{i_B} = S_1^{i_B}\|\ldots \|S_k\| \ldots\|S_n^{i_B}$, and $p_{i_B,j} = \tr(\e_j(\rho_{i_B}))$.
		
		For any $j\in J$, we are going to construct from $T$ a derivative tree $T_j$ where the first execution of the $B$-transition along any terminating run of $T$ is replaced by the corresponding $j$-th child in the $B$-transition; that is, $c_{i_B}$ is replaced by $\<R_{i_B,j}, f_j(\sigma_{i_B}), \e_j(\rho_{i_B})/p_{i_B,j}\>$. To be more specific, $T_j$ is constructed as follows.
		\begin{enumerate}
			\item Let the root of $T_j$ be $\<R_j, f_j(\sigma), \e_j(\rho)/p_j\>$. 
			\item To unfold $T_j$ from the root, we follow precisely the transitions taken by $T$ along each run $r$ until the configuration $c_{i_B}$ is reached. For such a finite path $c_0, c_1, \ldots, c_{i_B}$ in $T$, it is easy to see that the corresponding path in $T_j$ is $c_0', c_1', \ldots, c_{i_B}'$, where 
			\[
			c_i' \define \<S_1^i\|\ldots \|S_{k,j}\| \ldots\|S_n^i, f_j(\sigma_i), \e_j(\rho_i)/\tr(\e_j(\rho_i))\>
			\]
			whenever 
			\[
			c_i = \<S_1^i\|\ldots \|S_k\| \ldots\|S_n^i, \sigma_i, \rho_i\>.
			\]
			Here in each $c_i$ the $k$-th process must be $S_k$ since along the path $c_0, c_1, \ldots, c_{i_B}$ in $T$, no $B$-transition is executed. In particular, $c_{i_B}'$
			is precisely the $j$-th support configuration of the right-hand side distribution in Eq.~\eqref{eq:RiB}. Furthermore, it is easy to check that each pair of the corresponding edges in $T$ and $T_j$ along each run up to the respective $c_{i_B}$ are labelled with the same probability. 
			\item The subtree of $T_j$ rooted at $c_{i_B}'$ is the same as the subtree of $T$ rooted at $c_{i_B}'$ (from the above clause, $c_{i_B}'$ indeed appears in $T$ as a child node of $c_{i_B}$). 
		\end{enumerate}

		Finally, let $T'$ be a derivative tree where the root is $\<S, \sigma, \rho\>$, the action executed by the root is given in Eq.~\eqref{eq:SB}, and for each $j\in J$,  $T_j$ is the subtree starting from $\<R_j, f_j(\sigma), \e_j(\rho)/p_j\>$. Note also that  the above procedure transforms non-terminating runs to non-terminating runs. Thus obviously, the induced computation $\pi' = \{\mu_i' : i\geq 0\}$ computes the same cq-state as $\pi$.
		
		Repeat the above procedure, we will eventually transform any computation to the good one without changing the cq-state computed. That concludes the proof of the theorem.
	\end{proof}
	
	\subsection{Proof of Lemma~\ref{lem:welldef}}
	Clause (2) is easy. For (1), let $\qstate = \<\cstate, \rho\>$ with $\tr(\rho)=1$, and 
	$\pi \define \{\mu_i: i\geq 0\}$ a computation of $S$ starting in $\qstate$. We prove by induction on $i$ that $\qstate_{\mu_i}$ has countable support and
	$\tr(\qstate_{\mu_i}) \leq \tr(\rho)$.
	Thus the result holds for simple cq-states. The general case follows easily.

	\subsection{Proof of Theorem~\ref{thm:sequentialisation}}
	
	We first show a close relationship between the \emph{good} transitions of $S$ and the transitions of $T(S)$.
	\begin{lemma}\label{lem:transsts}
		For any configuration $\<S, \sigma, \rho\>$ where $S$ is a distributed quantum program,
		\begin{enumerate}
			\item  if the transition
			$
			\<S, \sigma, \rho\> \ra \sum_{i\in I} p_i\cdot \<S_i, \sigma_i, \rho_i\>
			$
			appears in the derivative tree of a good computation, then 
			$
			\<T(S), \sigma, \rho\> \ra \sum_{i\in I} p_i\cdot  \<T(S_i), \sigma_i, \rho_i\>
			$
			is the (unique) transition from $\<T(S), \sigma, \rho\>$;
			\item conversely, if 
			$
			\<T(S), \sigma, \rho\> \ra \sum_{i\in I}p_i\cdot  \<S_i', \sigma_i, \rho_i\>
			$
			then either $
			\<S, \sigma, \rho\> \ra \sum_{i\in I} p_i\cdot  \<S_i, \sigma_i, \rho_i\>
			$ appears in the derivative tree of a good computation and $S_i' = T(S_i)$,
			or $\<S, \sigma, \rho\>$ is a deadlock. In the latter case, $\sigma \models \mathit{BLOCK}\wedge \neg \mathit{TERM}$.
		\end{enumerate}
	\end{lemma}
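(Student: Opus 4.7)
The plan is that good computations of $S$ simulate $T(S)$ step-by-step, exploiting two alignments built into $T$: (a) the initialisation prefix $S_{1,0};\ldots;S_{n,0}$ of $T(S)$ matches the good-computation preference for smaller local-action labels, which effectively serialises initialisations by process index; and (b) the side condition $B_i$ in each guard of $T(S)$'s main loop encodes precisely the lexicographic minimality that a good computation imposes on the chosen communication label. I would prove both parts by case analysis on the rule used in the transition together with the current syntactic form of $S$ (or $T(S)$).

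For part (1), suppose $\<S,\sigma,\rho\>\ra\mu$ occurs in the derivative tree of a good computation. If (Paral) fires at process $S_k$, minimality of the label $k$ forbids any local action in processes $S_{k'}$ with $k'<k$, so in $T(S)$ we must be executing the corresponding fragment of $S_k$---either inside the remainder of $S_{k,0}$ (after $S_{1,0},\ldots,S_{k-1,0}$ have terminated) or inside the remainder of some body $S_{i,j};S_{k,\ell}$ inside a do-iteration. Lemma~\ref{lem:supstep} ensures that the functional and super-operator data of the transition are determined by this purely local fragment, so they coincide in $S$ and $T(S)$, and applying $T$ to each residual program yields the desired matching. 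If (Comm) fires between $S_k$ and $S_\ell$ with matching guards $\alpha_{k,j_1}$ and $\alpha_{\ell,j_2}$, minimality of the label $(k,\ell)$ forbids every local action (hence all $S_{i,0}$ have already terminated and $T(S)$ is at the head of its do-loop) and forbids every communication with smaller first index (which is precisely the condition $B_k$). Thus the guard $B_{k,j_1}\wedge B_{\ell,j_2}\wedge B_k$ fires in $T(S)$ and its body $\mathit{Effect}(\alpha_{k,j_1},\alpha_{\ell,j_2});S_{k,j_1};S_{\ell,j_2}$ reproduces the effect of (Comm) under $T$.

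For part (2) I would perform the symmetric case analysis on $T(S)$. Every non-exit transition of $T(S)$ corresponds to a unique local or communication step of $S$ whose label is minimum among those enabled at $\<S,\sigma,\rho\>$, and hence is the step taken by the good computation; applying $T$ to each residual program on the $S$-side recovers $S_i'$ on the $T(S)$-side. The only remaining transition of $T(S)$ is the exit from its do-loop to $E$, which by the operational rule requires $\sigma\models\mathit{BLOCK}$. In that situation all $S_{i,0}$ have terminated and each $S_i$ is either $E$ or at its own do-loop head; if additionally $\sigma\not\models\mathit{TERM}$, then some $S_i$ has a true $B_{i,j}$ but no matching partner under (Comm) (blocked by $\mathit{BLOCK}$), while (Paral) is inapplicable as well. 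Hence $\<S,\sigma,\rho\>$ has no successor, and $\sigma\models\mathit{BLOCK}\wedge\neg\mathit{TERM}$, as claimed.

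The main obstacle I foresee is the bookkeeping around arbitrary reachable configurations, in which each component $S_i$ may be a partially-consumed version of its canonical form $S_{i,0};\mathbf{do}\cdots\mathbf{od}$. The syntactic definition of $T$ still applies to these intermediate states, but matching ``what $T(S)$ is scheduled to do next'' with ``what the minimum-label action of $S$ actually does'' has to be supported by a careful structural invariant on reachable programs, and separately verified when several processes have just simultaneously entered (or finished) distinct local fragments. This part of the argument is mostly mechanical, but it is where subtle off-by-one mistakes are most likely to creep in.
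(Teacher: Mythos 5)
Your route is the same one the paper takes --- its own proof of this lemma is a one-line appeal to the definitions, and the two alignment facts you isolate (initialisation order matching the preference for small local labels; the side condition $B_i$ matching minimality among communications) are exactly the intended content, correctly worked out for initialisation fragments, loop bodies and (Comm) steps. But your case analysis silently drops one family of transitions, and it is precisely the family for which the step-for-step claim is delicate: the \emph{loop-exit} transition of an individual component. When a component $S_k$ sits at its main loop and $\sigma\models\bigwedge_j \neg B_{k,j}$, the rule $\<\mathbf{do}\ \square_j B_{k,j};\alpha_{k,j}\ra S_{k,j}\ \mathbf{od},\sigma,\rho\>\ra\<E,\sigma,\rho\>$ fires through (Paral) as a \emph{local} action with label $k$, so the good computation takes it eagerly (local labels precede communication labels in the fixed order). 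Yet $T(S)$ has no matching single step. Concretely: if process $1$ is exit-enabled while $S_{2,0}$ (nonempty) is still running, the good step turns the program into $E\,\|\,S_{2,0};\mathbf{do}_2$, whose sequentialisation is just $S_{2,0}$ (the $\mathbf{do}$-loop is dropped because $\Gamma$ becomes empty), whereas the unique transition of $T(S)=S_{2,0};\mathbf{do}_\Gamma$ is the first step of $S_{2,0}$ --- not a step to $T$ of the residual. Similarly, when $\sigma\models\mathit{TERM}$ the program $S$ needs $n$ consecutive exit steps to reach $E$ while $T(S)$ exits in one. So part (1) as you argue it fails on exit steps; a complete proof has to treat them explicitly, most naturally as stutter steps (they leave $(\sigma,\rho)$ unchanged and contribute nothing to the computed cq-state), which is all that Theorem~\ref{thm:sequentialisation} actually needs since $\qstate_\pi$ is a limit.

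The same omission breaks your part (2) at the exit of $T(S)$'s $\mathbf{do}$-loop. From $\sigma\models\mathit{BLOCK}\wedge\neg\mathit{TERM}$ you conclude that ``(Paral) is inapplicable as well'', hence deadlock; that inference is wrong. $\mathit{BLOCK}\wedge\neg\mathit{TERM}$ only says some guard is true and no matched pair is jointly enabled; a \emph{different} live component may have all of its guards false and can then exit via (Paral), so $\<S,\sigma,\rho\>$ is not a deadlock, and the good step (that exit) again has no single matching $T(S)$-step --- neither disjunct of the lemma holds literally without a stuttering convention. You also leave the subcase $\sigma\models\mathit{BLOCK}\wedge\mathit{TERM}$ unaddressed, where the claimed correspondent of $T(S)\ra\<E,\sigma,\rho\>$ on the $S$ side is the whole chain of $n$ exits. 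In short, the bookkeeping risk you flag at the end (residual programs and the structural invariant) is routine and your sketch already handles it; the real gap is the exit transitions your case split never reaches, and closing it requires either reformulating the lemma up to finite stuttering of exits or building an absorption convention into $T$ --- a repair the paper itself quietly presupposes when it declares the lemma ``easy from the definitions.''
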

	\begin{proof}
		Easy from the definitions of $T(S)$, which is a deterministic quantum program, and the good computation of $S$. Furthermore, if $\<S, \sigma, \rho\>$ is a deadlock, then the classical state $\sigma$ must satisfy $\mathit{BLOCK}\wedge \neg \mathit{TERM}$.
	\end{proof}
	With this lemma, Theorem~\ref{thm:sequentialisation} can be  proved as follows.
	\begin{proof}[Proof of Theorem~\ref{thm:sequentialisation}]
		We need only prove the theorem for the case when $\qstate = \<\cstate, \rho\>$ with $\tr(\rho) = 1$. Let $\pi \define \{\mu_i : i\geq 0\}$ and $\pi' \define \{\mu_i' : i\geq 0\}$ be the computation of $T(S)$ and the good computation of $S$, both starting in $\<\cstate, \rho\>$, respectively. We are going to show that for any $i\geq 0$, $\qstate_{\mu_i} = \qstate_{\mu_i'}|_{\mathit{TERM}}$. Then the theorem follows by taking the least upper bounds of both sides.
		
		From Lemma~\ref{lem:transsts}, the derivative tree of $\pi$ has the same structure (including the probability weights along the edges) with that of $\pi'$, except for deadlock configurations. 
		However, Lemma~\ref{lem:transsts} also says that classical states in these deadlock configurations must satisfy $\mathit{BLOCK}\wedge \neg \mathit{TERM}$, and thus they will be excluded in computing $\qstate_{\mu_i'}|_{\mathit{TERM}}$. Note further that $\mathit{TERM}$ is satisfied by all the successfully terminating configurations in $\mu_i$; that is, $\qstate_{\mu_i} = \qstate_{\mu_i} |_{\mathit{TERM}}$. Thus $\qstate_{\mu_i} = \qstate_{\mu_i'}|_{\mathit{TERM}}$ as desired.
	\end{proof}
	
	\subsection{Proof of Theorems~\ref{thm:psc} and~\ref{thm:total}}

	The basic idea of proving the soundness and completeness of our proof systems is to employ weakest (liberal) preconditions. To this end, we extend the weakest (liberal) precondition semantics presented in~\cite{feng2020quantum} to sequential programs defined in Sec.~\ref{sec:seq}.
	Note that we do not have to extend it further to distributed programs, thanks to the sequentialisation theorem (Theorem~\ref{thm:sequentialisation}). Let $\qassertsh{\supseteq \qv(S)} \define\bigcup_{V\supseteq qv(S)} \qassertsh{V}$.

	\begin{definition}\label{def:weakest}
		Let $S$ be a sequential quantum program. The \emph{weakest precondition semantics} $wp.S$ and \emph{weakest liberal precondition semantics} $wlp.S$ of $S$ are both mappings 
		$$\qassertsh{\supseteq \qv(S)}\ra  \qassertsh{\supseteq \qv(S)}$$
		defined inductively in Table~\ref{tbl:wpsemantics}. To simplify notation, we use $xp$ to denote both $wp$ and $wlp$ whenever it is applicable for both of them.
	\end{definition}

	The following lemma shows a duality relation between the denotational and weakest (liberal) precondition semantics of sequential programs, which extends~\cite[Lemma 4.14]{feng2020quantum}.
	
	{\renewcommand{\arraystretch}{1.8}
		\begin{table}[t]
			\begin{lrbox}{\tablebox}
				\centering
				\begin{tabular}{l}
					\begin{tabular}{ll}
						$xp.\sskip.\qassert = \qassert$ & $wlp.\abort.\qassert = \top_{V}$ \hspace{3em} $wp.\abort.\qassert = \bot_{V}$\\
						$xp.(x:=e).\qassert =  \qassert[\subs{x}{e}]$ 
						&$xp.(x\rassign g).\qassert =  \sum_{d\in D_{\mathit{type(x)}}}g(d)\cdot \qassert[\subs{x}{d}]$\\
						$xp.(\bar{q}\apply U).\qassert  = U_{\bar{q}}^\dag \qassert U_{\bar{q}}$ &$xp.(q:=0).\qassert =  \sum_{i=0}^{d_{q}-1} \qiz \qassert\qzi$\\
						$xp.(S_0; S_1).\qassert = xp.S_0.(xp.S_1.\qassert)$ &	$xp.(x:= \mymeas\ \m[\bar{q}]).\qassert  = \sum_{i\in I}M_i^\dag \qassert[\subs{x}{i}]M_i$
						\\
					\end{tabular}\\ 
					\begin{tabular}{c}	
						$wlp.(\altercom).\qassert = \sum_{i=1}^{n} B_i\wedge wlp.S_i.\qassert + \bigwedge_{i=1}^n \neg B_i$\\				
						$wp.(\altercom).\qassert = \sum_{i=1}^{n} B_i\wedge wp.S_i.\qassert$\\
						$wlp.(\repcom).\qassert = \bigwedge_{k\geq 0} \qassert_k$, where 
						$\qassert_0 \define \top_{V}$, and for any $k\geq 0$,\\
						$\qassert_{k+1} \define \sum_{i=1}^{n} B_i\wedge wlp.S_i.\qassert_k + \bigwedge_{i=1}^n \neg B_i \wedge \qassert.$\\
						$wp.(\repcom).\qassert = \bigvee_{k\geq 0} \qassert_k$, where 
						$\qassert_0 \define \bot_{V}$, and for any $k\geq 0$,\\
						$\qassert_{k+1} \define \sum_{i=1}^{n} B_i\wedge wp.S_i.\qassert_k + \bigwedge_{i=1}^n \neg B_i \wedge \qassert.$
					\end{tabular}			
				\end{tabular}
			\end{lrbox}
			\resizebox{\textwidth}{!}{\usebox{\tablebox}}\\
			\vspace{4mm}
			\caption{Weakest (liberal) precondition semantics for sequential programs, where $xp\in \{wp, wlp\}$ and $V = \qv(\qassert)$. 
			}
			\label{tbl:wpsemantics}
		\end{table}
	}
	
	\begin{lemma}\label{lem:wpwlp}
		Let $S$ be a sequential quantum program, $\qstate$ a cq-state, and $\qassert$ a cq-assertion with $qv(\qstate) \supseteq qv(\qassert)\supseteq qv(S)$. Then
		\begin{enumerate}
			\item $qv(wp.S.\qassert) = qv(wlp.S.\qassert) = qv(\qassert)$;
			\item $\Exp(\qstate\models wp.S.\qassert) =  \Exp(\sem{S}(\qstate)\models \qassert)$;
			\item $\Exp(\qstate\models wlp.S.\qassert) =  \Exp(\sem{S}(\qstate)\models \qassert) + \tr(\qstate) -  \tr(\sem{S}(\qstate))$.
		\end{enumerate}
	\end{lemma}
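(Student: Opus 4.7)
The plan is to prove the three clauses by structural induction on the sequential quantum program $S$, following the approach of~\cite[Lemma 4.14]{feng2020quantum} but extending it to the alternative and repetitive commands newly introduced in Sec.~\ref{sec:seq}. Clause (1) is a routine induction: inspection of Table~\ref{tbl:wpsemantics} shows that each constructor either acts classically in the classical component or applies a super-operator whose support is confined to quantum variables already in $\qv(\qassert)$, so $\qv$ is preserved. For the loop case, each iterate $\qassert_k$ has the same $\qv$ as $\qassert$, and hence so does the $\omega$-limit.

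For clauses (2) and (3), the base cases ($\sskip$, $\abort$, $x:=e$, $x\rassign g$, $q:=0$, $\bar{q}\apply U$, and $x:=\mymeas\ \m[\bar{q}]$) reduce to direct calculations combining the explicit forms of $\sem{S}$ in Lemma~\ref{lem:iout} with the super-operator duality of Lemma~\ref{lem:bpdeg}(6) and the linearity properties of $\Exp$ in Lemma~\ref{lem:bpdeg}(3)--(4). Sequential composition follows from $\sem{S_0;S_1}=\sem{S_1}\circ\sem{S_0}$. For the alternative command I decompose $\qstate = \sum_{i=1}^n \qstate|_{B_i} + \qstate|_{B_0}$ where $B_0 \equiv \bigwedge_i \neg B_i$, then apply Lemma~\ref{lem:bpdeg}(5) to convert restrictions on cq-states into conjunctions on cq-assertions. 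The $wp$ identity matches Lemma~\ref{lem:iout}(8) directly; for $wlp$, the extra summand $\bigwedge_i \neg B_i$ in the rule contributes exactly $\tr(\qstate|_{B_0})$, which combines with the per-branch non-termination masses delivered by the induction hypothesis to produce the correction $\tr(\qstate) - \tr(\sem{S}(\qstate))$.

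The main obstacle is the repetitive command $S \equiv \repcom$. I plan to use the unfolding $\sem{S}(\qstate) = \bigvee_k \sem{S^k}(\qstate)$ from Lemma~\ref{lem:iout}(9) and prove, by an inner induction on $k$, that
\[
\Exp(\qstate \models \qassert_k) = \Exp(\sem{S^k}(\qstate) \models \qassert)
\]
in the $wp$ case, and
\[
\Exp(\qstate \models \qassert_k) = \Exp(\sem{S^k}(\qstate) \models \qassert) + \tr(\qstate) - \tr(\sem{S^k}(\qstate))
\]
in the $wlp$ case. The inductive step invokes the induction hypothesis for each $S_i$ together with the already-established alternative rule applied to the syntactic unfolding $S^{k+1}$. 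For $wp$, monotone convergence of the increasing chain $\qassert_k$ (Lemma~\ref{lem:qasset}(3)) together with continuity of $\Exp$ in its first argument (Lemma~\ref{lem:qasset}(1)) closes the argument. The delicate part is $wlp$: the chain $\qassert_k$ is now \emph{decreasing}, so I must use Lemma~\ref{lem:qasset}(4), and simultaneously pass to the limit in the correction term using continuity of $\tr$ on the $\omega$-CPO $\qstatesh{V}$. The main calculation is checking that these two monotone limits, one going down and one going up, combine correctly to yield $\Exp(\sem{S}(\qstate) \models \qassert) + \tr(\qstate) - \tr(\sem{S}(\qstate))$.
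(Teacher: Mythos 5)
Your proposal is correct and follows essentially the same route as the paper's proof: structural induction with the alternative case handled by decomposing $\qstate$ into the restrictions $\qstate|_{B_i}$ via Lemma~\ref{lem:bpdeg}, and the repetitive case handled by an inner induction on $k$ establishing exactly the paper's identity $\Exp(\qstate \models \qassert_k) = \Exp(\sem{S^k}(\qstate) \models \qassert) + \tr(\qstate) - \tr(\sem{S^k}(\qstate))$ before passing to the limit with Lemma~\ref{lem:qasset}. Your explicit treatment of the decreasing $wlp$ chain versus the increasing chain $\sem{S^k}(\qstate)$ (with $\tr(\cdot)=\Exp(\cdot\models\top)$ giving continuity of the correction term) just spells out what the paper compresses into its final appeal to Lemma~\ref{lem:qasset}.
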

	\begin{proof}
		We prove this lemma by induction on the structure of $S$. The basis cases are easy from the definition. We only show the following cases for clause (3) as examples. Let $V \define qv(\qassert)$.
		\begin{itemize}
			\item Let $S\define \altercom$. Note that $B_i$'s are mutually exclusive. Then 
			\begin{align*}
				\Exp(\qstate\models wlp.S.\qassert) &=\Exp\left(\qstate\models \sum_{i=1}^{n} B_i\wedge wlp.S_i.\qassert + \bigwedge_{i=1}^n \neg B_i\right)\\
				& = 	\sum_{i=1}^{n} \Exp(\qstate|_{B_i}\models wlp.S_i.\qassert) + \Exp\left(\qstate\models \top_{V} - \sum_{i=1}^{n} B_i\right) \\
				& =	\sum_{i=1}^{n}\left[\Exp\left( \sem{S_i}(\qstate|_{B_i})\models \qassert\right) +\tr(\qstate|_{B_i}) - \tr(\sem{S_i}(\qstate|_{B_i})) \right]\\
				& \hspace{2em}+  \tr(\qstate) - \sum_{i=1}^{n} \tr(\qstate|_{B_i})\\
				& = \Exp(\sem{S}(\qstate)\models \qassert) +  \tr(\qstate) - \tr(\sem{S}(\qstate)).
			\end{align*}
			Here the second equality follows from Lemma~\ref{lem:bpdeg}, the third one from the inductive hypothesis, and the last one from Lemma~\ref{lem:iout}.
			\item Let $S\define \repcom$ and $\qassert_k$, $k\geq 0$, be defined as in Table~\ref{tbl:wpsemantics} for the $wlp$ semantics of $\repcom$.
			First, we show by induction that for any $k\geq 0$ and $\qstate'\in \qstatesh{V}$,
			\begin{equation}\label{eq:tmp4.9.254}
				\Exp(\qstate' \models \qassert_k) = \Exp(\sem{S^k}(\qstate') \models \qassert) + \tr(\qstate') - \tr(\sem{S^k}(\qstate'))
			\end{equation}
			where $S^k$ is defined as in Lemma~\ref{lem:iout}.
			The case of $k=0$ follows from the definition. Let $B\define\bigwedge_{i=1}^n \neg B_i$. We further calculate from Lemmas~\ref{lem:bpdeg} and~\ref{lem:iout} that
			\begin{align*}
				& \Exp(\qstate' \models \qassert_{k+1})\\
				&= \Exp(\qstate'\models \sum_{i=1}^{n} B_i\wedge wlp.S_i.\qassert_k) + \Exp(\qstate' \models B\wedge \qassert)\\
				&=\sum_{i=1}^{n} \Exp(\qstate'|_{ B_i} \models  wlp.S_i.\qassert_k) + \Exp(\qstate'|_{B} \models \qassert)\\
				& =	\sum_{i=1}^{n}\left[\Exp\left( \sem{S_i}(\qstate'|_{B_i})\models \qassert_k\right) +\tr(\qstate'|_{B_i}) - \tr(\sem{S_i}(\qstate'|_{B_i})) \right]+ \Exp(\qstate'|_{B} \models \qassert)\\
				&=
				\sum_{i=1}^{n}\left[\Exp(\sem{S^k}(\sem{S_i}(\qstate'|_{B_i})) \models \qassert) + \tr(\sem{S_i}(\qstate'|_{B_i})) - \tr(\sem{S^k}(\sem{S_i}(\qstate'|_{B_i})))\right]\\
				& \hspace{2em} + \sum_{i=1}^{n}\left[ \tr(\qstate'|_{B_i}) - \tr(\sem{S_i}(\qstate'|_{B_i})) \right] + \Exp(\qstate'|_{B} \models  \qassert) \\
				&=\Exp(\sem{S^{k+1}}(\qstate') \models \qassert)  + \tr(\qstate') - \tr(\sem{S^{k+1}}(\qstate')),
			\end{align*}
			where the fourth equality follows from the induction hypothesis, and 
			the last one from the fact that  
			$
			\sem{S^{k+1}}(\qstate') = \sum_{i=1}^{n} \sem{S^k}(\sem{S_i}(\qstate'|_{B_i})) + \qstate'|_{B} 
			$
			and $\tr(\qstate') = \sum_{i=1}^{n} \tr(\qstate'|_{B_i}) + \tr(\qstate'|_{B})$.
			
			With Eq.~\eqref{eq:tmp4.9.254}, we have from Lemma~\ref{lem:qasset} that
			$$\Exp(\qstate \models  wlp.S.\qassert) =\Exp(\qstate \models  \bigwedge_{k\geq 0} \qassert_k) =  \Exp(\sem{S}(\qstate) \models \qassert) + \tr(\qstate) - \tr(\sem{S}(\qstate)).$$
		\end{itemize}
	\end{proof}
	
	The following two lemmas, which extend Lemmas 4.16 and 4.17 in~\cite{feng2020quantum}, respectively, can be similarly shown for our sequential programs. The proofs are omitted here.
	\begin{lemma}\label{lem:wpcorres}
		Let $S$ be a sequential program, $\qstate$ a cq-state, and $\qassert$ a cq-assertion with $qv(\qstate) \supseteq qv(\qassert)\supseteq qv(S)$. Let $xp\in \{wp, wlp\}$. Then
		\begin{enumerate}
			\item\label{cl:wpwlp} $wp.S.\qassert + wlp.S.(\top_{qv(\qassert)}-\qassert) = \top_{qv(\qassert)}$;
			\item\label{cl:wpmono}  the function $xp.S$ is monotonic; that is, for all $\qassert_1 \le \qassert_2$,
			$
			xp.S.\qassert_1 \le xp.S.\qassert_2;
			$
			\item\label{cl:wplinear}  the function $wp.S$ is linear; that is, for all $\qassert_1, \qassert_2\in \qasserts$, 
			\[
			wp.S.(\lambda_1\qassert_1 + \lambda_2\qassert_2) = 
			\lambda_1 wp.S.\qassert_1 + \lambda_2 wp.S.\qassert_2;\]
			\item\label{cl:wlplinear}  the function $wlp.S$ is affine-linear; that is, for all $\qassert_1, \qassert_2\in \qasserts$ and $\lambda_1 + \lambda_2 =1$,
			\[
			wlp.S.(\lambda_1\qassert_1 + \lambda_2\qassert_2) = 
			\lambda_1 wlp.S.\qassert_1 + \lambda_2 wlp.S.\qassert_2.\]
			
			\item\label{cl:superoper} if $W\cap qv(\qassert) \subseteq V \subseteq qv(\qassert)$, $(V\cup W)\cap qv(S) = \emptyset$, and $\f_{V\ra W}$ is a completely positive and sub-unital super-operator, then
			$$
			\f_{V\ra W}(wp.S.\qassert)  = wp.S.(\f_{V\ra W}(\qassert))
			$$
			and 
			$$\f_{V\ra W}(wlp.S.\qassert)  \le wlp.S.(\f_{V\ra W}(\qassert)).
			$$
			The equality holds for $wlp$ as well if $\f_{V\ra W}$ is unital;
		\end{enumerate}
	\end{lemma}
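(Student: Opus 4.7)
The plan is to prove all five clauses simultaneously by structural induction on the sequential program $S$, since several clauses feed into one another (e.g.\ clause (1) lets us reduce $wlp$-statements to $wp$-statements, and monotonicity is needed to justify the limit passages for the repetitive case). The base cases (\sskip, \abort, $x:=e$, $x\rassign g$, $q:=0$, $\bar q\apply U$, $x:=\mymeas\ \m[\bar q]$) are direct syntactic calculations using the formulas in Table~\ref{tbl:wpsemantics}, together with the algebraic identities $\sum_i M_i^\dag M_i = I$, $\sum_d g(d) = 1$, and $U^\dag U = I$. Sequential composition is trivial: composition of linear/monotonic/affine-linear maps is linear/monotonic/affine-linear, clause (1) follows from $wp.S_1$ being linear (already established by induction on the \emph{subprograms}), and clause (5) propagates through by applying the inductive hypothesis twice.

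For the alternative command $S \define \altercom$, the identity $xp.S.\qassert = \sum_{i=1}^n (B_i \wedge xp.S_i.\qassert) \;(+\; \bigwedge_{i=1}^n \neg B_i \text{ for } wlp)$ is a finite linear combination indexed by the (mutually exclusive) $B_i$'s. Each clause then reduces to the corresponding clause for the $S_i$'s, invoking the distributivity of $\wedge$ and $\sum$ over cq-assertions and the fact that $\sum_{i=1}^n (B_i \wedge \top_V) + \bigwedge_i \neg B_i = \top_V$. Clause (5) uses that for any sub-unital $\f_{V\ra W}$ disjoint from $\qv(S)$, the Boolean guards $B_i$ (which involve only classical variables, hence commute trivially with $\f$) can be pulled in or out of the super-operator application.

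The main obstacle is the repetitive command $S \define \repcom$, because its semantics is given as a limit $wp.S.\qassert = \bigvee_k \qassert_k$ (resp.\ $\bigwedge_k \qassert_k$ for $wlp$). My plan is to establish each clause first pointwise for the finite approximants $\qassert_k$ by induction on $k$, and then pass to the limit using Lemma~\ref{lem:qasset}. Monotonicity of $\qassert_k$ in the input $\qassert$ follows from monotonicity of each $wp.S_i$ and of $\sum$ and $\wedge$, and is inherited by the sup/inf. Linearity of $wp$ is the most delicate point: I would show by induction on $k$ that $\qassert_k$ is linear in $\qassert$ (using linearity of the inner $wp.S_i$ by the structural induction hypothesis, plus linearity of the $B_i \wedge (\cdot)$ and $+$ operations), and then use Lemma~\ref{lem:qasset} to commute $\bigvee_k$ with the finite linear combinations on the right-hand side; here the monotonicity of the sequence $\{\qassert_k\}$ (guaranteed by the standard CPO argument for the $wp$ fixpoint) is what makes the continuity applicable. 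For affine-linearity of $wlp$, the same strategy applies with the decreasing sequence and $\bigwedge_k$. Clause (1) for the loop follows by combining the $k$-step identities $wp.S^k.\qassert + wlp.S^k.(\top - \qassert) = \top$ (proven by induction on $k$ using the alternative-command case) and taking limits: the sup of $wp$-approximants and the inf of $wlp$-approximants are complementary by Lemma~\ref{lem:qasset}(1)--(4). Finally, for clause (5), I would prove $\f_{V\ra W}(\qassert_k) = \qassert_k'$ (resp.\ $\le \qassert_k'$ for $wlp$) where $\qassert_k'$ is the $k$-th approximant corresponding to $\f_{V\ra W}(\qassert)$; again Boolean guards pass through $\f$ trivially since they act on classical variables disjoint from $V \cup W$, the inner $wp.S_i$ commutes with $\f$ by the structural hypothesis (since $\qv(S_i) \subseteq \qv(S)$ remains disjoint from $V \cup W$), and the limit step uses continuity of $\f$ (it is completely positive on a finite-dimensional space, hence $\omega$-continuous) plus Lemma~\ref{lem:qasset}.
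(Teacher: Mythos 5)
Your proposal is correct, and it reconstructs the argument that the paper deliberately omits: the paper only remarks that this lemma ``can be similarly shown'' to Lemmas 4.16--4.17 of \cite{feng2020quantum}, and the expected argument is exactly your simultaneous structural induction over Table~\ref{tbl:wpsemantics}, with the repetitive command handled through the approximants and a limit passage. Three refinements would tighten it. First, for $S_0;S_1$, clause \ref{cl:wpwlp} does not follow from linearity of $wp.S_1$ as you write; it follows from the clause-\ref{cl:wpwlp} inductive hypothesis applied twice --- to $S_1$, giving $wlp.S_1.(\top_{qv(\qassert)}-\qassert)=\top_{qv(\qassert)}-wp.S_1.\qassert$, and then to $S_0$ with postcondition $wp.S_1.\qassert$; your simultaneous induction makes this available, so this is a misattribution rather than a gap. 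Second, Lemma~\ref{lem:qasset} exchanges limits only at the level of expectations; to conclude the assertion-level identities (notably linearity of $\bigvee_k \qassert_k$ in $\qassert$ when a coefficient is negative in a well-defined linear-sum, where the combined sequence need not be monotone) you should additionally invoke Lemma~\ref{lem:qassetorder}(2) to lift equalities of expectations back to equalities of assertions. Third, be aware that the paper's own machinery offers a shortcut your induction bypasses: since the duality Lemma~\ref{lem:wpwlp} is already proved, clauses \ref{cl:wpwlp}--\ref{cl:wlplinear} drop out in a few lines by pushing the corresponding properties of $\Exp$ (Lemma~\ref{lem:bpdeg}) through $\sem{S}$ and lifting back via Lemma~\ref{lem:qassetorder}(2), with no induction on $S$ at all; only clause \ref{cl:superoper} genuinely needs structural induction (or, alternatively, a commutation lemma for $\sem{S}$ and $\f^\dag_{V\ra W}$ on disjoint variables via Lemma~\ref{lem:bpdeg}\ref{cl:super}). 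For that clause your sketch is right, and it is worth stating explicitly that the sole source of the $wlp$ inequality is $\f(\top)\le\top$ for sub-unital $\f$, arising in the $\abort$ case and in the $\bigwedge_{i=1}^n\neg B_i$ summands of the alternative and repetitive clauses --- which is also precisely why unitality restores equality.
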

	
	\begin{lemma}\label{lem:lesssimpre}
		Let $S$ be a sequential program, and $\qassert$ and $\qassertp$ are cq-assertions. Then 
		\begin{align*}
			\models_{{\mathit{\tot}}} \ass{\qassert}{S}{\qassertp} \quad &\mbox{ iff }\quad
			\qassert \lesssim wp.S.(\qassertp\otimes I_{qv(S)\backslash \qv(\qassertp)})\\
			\models_{\mathit{par}} \ass{\qassert}{S}{\qassertp} \quad &\mbox{ iff }\quad
			\qassert \lesssim wlp.S.(\qassertp\otimes I_{qv(S)\backslash \qv(\qassertp)}).
		\end{align*}
		In particular, if $qv(\qassert) = qv(\qassertp) \supseteq qv(S)$, then 
		\begin{align*}
			\models_{{\mathit{\tot}}} \ass{\qassert}{S}{\qassertp} \quad &\mbox{ iff }\quad \qassert \le wp.S.\qassertp\\
			\models_{\mathit{par}} \ass{\qassert}{S}{\qassertp} \quad &\mbox{ iff }\quad
			\qassert \le wlp.S.\qassertp.
		\end{align*}
	\end{lemma}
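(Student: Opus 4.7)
The plan is to prove Lemma~\ref{lem:lesssimpre} by using the duality between $\sem{S}$ and $wp.S$/$wlp.S$ (Lemma~\ref{lem:wpwlp}) to rewrite the expectation conditions defining $\models_{\tot}$ and $\models_{\pal}$ as inequalities of expectations over the \emph{same} cq-state, and then apply Lemma~\ref{lem:qassetorder}(2) to convert this universal inequality into the $\lesssim$ relation.

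More concretely, first set $\qassertp' \define \qassertp \otimes I_{qv(S)\backslash qv(\qassertp)}$, which belongs to $\qassertsh{qv(\qassertp)\cup qv(S)}$, so $wp.S.\qassertp'$ and $wlp.S.\qassertp'$ are defined. Note that $\qassertp$ and $\qassertp'$ agree on expectations: for every cq-state $\qstate$ with $qv(\qstate)\supseteq qv(\qassertp')$, Definition~\ref{def:satisfaction} gives $\Exp(\qstate\models \qassertp') = \Exp(\qstate\models \qassertp)$, since tracing out an identity over $qv(S)\backslash qv(\qassertp)$ is the same as never introducing those variables. The same remark applies to $\sem{S}(\qstate)$, whose quantum variables include $qv(S)$ by Lemma~\ref{lem:welldef}.

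Now for total correctness, Lemma~\ref{lem:wpwlp}(2) yields, for every $\qstate\in\qstatesh{V}$ with $V\supseteq qv(S,\qassert,\qassertp)$,
\[
\Exp(\sem{S}(\qstate)\models \qassertp) = \Exp(\sem{S}(\qstate)\models \qassertp') = \Exp(\qstate\models wp.S.\qassertp').
\]
Hence $\models_{\tot}\ass{\qassert}{S}{\qassertp}$ is equivalent to the condition that $\Exp(\qstate\models \qassert)\leq \Exp(\qstate\models wp.S.\qassertp')$ holds for all such $\qstate$. Since $wp.S.\qassertp'$ has quantum variables included in $qv(\qassertp)\cup qv(S)$ (Lemma~\ref{lem:wpwlp}(1)), and since $qv(\qassert)\cup qv(wp.S.\qassertp')\subseteq V$, an application of Lemma~\ref{lem:qassetorder}(2) gives exactly $\qassert \lesssim wp.S.\qassertp'$. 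For the converse direction one uses the forward part of Lemma~\ref{lem:qassetorder}(2), which guarantees that $\qassert\lesssim wp.S.\qassertp'$ implies the expectation inequality on every large enough $\qstate$. The partial correctness case is identical modulo Lemma~\ref{lem:wpwlp}(3), which contributes the extra term $\tr(\qstate)-\tr(\sem{S}(\qstate))$ on both sides so that it cancels out when one rearranges $\Exp(\qstate\models\qassert)\leq \Exp(\sem{S}(\qstate)\models\qassertp)+\tr(\qstate)-\tr(\sem{S}(\qstate))$ into $\Exp(\qstate\models\qassert)\leq \Exp(\qstate\models wlp.S.\qassertp')$.

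The special case $qv(\qassert)=qv(\qassertp)\supseteq qv(S)$ is immediate: here $qv(S)\backslash qv(\qassertp)=\emptyset$, so $\qassertp'=\qassertp$, and $\lesssim$ coincides with $\le$ when both sides live over the same set of quantum variables. I do not expect a genuine obstacle here; the main bookkeeping concern is aligning the quantifier domains in the two directions of Lemma~\ref{lem:qassetorder}(2) with the ``$V\supseteq qv(S,\qassert,\qassertp)$'' quantifier in the definitions of $\models_{\tot}$ and $\models_{\pal}$, which is handled by the invariance of expectation under tensoring with identity on ancillary qubits.
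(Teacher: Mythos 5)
Your proof is correct and takes essentially the route the paper intends: the paper omits this proof, deferring to the analogue of Lemma 4.17 in \cite{feng2020quantum}, whose argument is precisely your combination of the duality in Lemma~\ref{lem:wpwlp} with the expectation characterisation of $\lesssim$ in Lemma~\ref{lem:qassetorder}(2), plus the observation that tensoring the postcondition with $I_{qv(S)\backslash \qv(\qassertp)}$ leaves all expectations unchanged. Your quantifier bookkeeping is also right, since $qv(\qassert)\cup qv(wp.S.(\qassertp\otimes I_{qv(S)\backslash \qv(\qassertp)})) = qv(S,\qassert,\qassertp)$ by Lemma~\ref{lem:wpwlp}(1), so the domain required by Lemma~\ref{lem:qassetorder}(2) matches the quantification in the definitions of $\models_{\tot}$ and $\models_{\pal}$.
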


	The next lemma shows a closed relationship between the correctness of a distributed quantum program $S$ and its sequentialisation $T(S)$.
	
	\begin{lemma}\label{lem:reldistseq}
		For any distributed program $S$ and a cq-assertions $\qassert$ and $\qassertp$,
		\begin{align*}
			\models_{{\mathit{\tot}}} \ass{\qassert}{S}{\qassertp\wedge \mathit{TERM}}\ \   &\mbox{ iff }\ \  \models_{{\mathit{\tot}}} \ass{\qassert}{T(S)}{\qassertp\wedge \mathit{TERM}}\\
			\models_{\mathit{par}} \ass{\qassert}{S}{\qassertp\wedge \mathit{TERM} } \ \  &\mbox{ iff }\ \ 
			\models_{\mathit{par}} \ass{\qassert}{T(S)}{\qassertp\wedge \mathit{TERM} + \neg \mathit{TERM} \wedge \mathit{BLOCK}}.
		\end{align*}
	\end{lemma}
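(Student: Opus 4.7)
\medskip

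\noindent\textbf{Proof proposal for Lemma~\ref{lem:reldistseq}.}
The plan is to show, for any cq-state $\qstate$ with $qv(\qstate)\supseteq qv(S,\qassert,\qassertp)$, that the two right-hand sides in the definitions of partial/total correctness agree for $S$ and $T(S)$ respectively. Once this is done, the ``iff'' in each statement is immediate by simply comparing both sides to $\Exp(\qstate\models\qassert)$. The only non-trivial inputs are Theorem~\ref{thm:sequentialisation} and two easy structural facts: (i) $\sem{T(S)}(\qstate)$ is supported on classical states satisfying $\mathit{BLOCK}$, since $T(S)$ terminates iff $\mathit{BLOCK}$ holds; and (ii) $\mathit{TERM}$ implies $\mathit{BLOCK}$, so the cq-states $\sem{T(S)}(\qstate)|_{\mathit{TERM}}$ and $\sem{T(S)}(\qstate)|_{\neg\mathit{TERM}\wedge\mathit{BLOCK}}$ partition $\sem{T(S)}(\qstate)$.

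For the total-correctness part, combining Theorem~\ref{thm:sequentialisation} with clause~\ref{cl:lem3.9} of Lemma~\ref{lem:bpdeg} gives
\[
\Exp(\sem{S}(\qstate)\models\qassertp\wedge\mathit{TERM})
=\Exp(\sem{T(S)}(\qstate)|_{\mathit{TERM}}\models\qassertp\wedge\mathit{TERM})
=\Exp(\sem{T(S)}(\qstate)\models\qassertp\wedge\mathit{TERM}),
\]
where the second equality uses $\mathit{TERM}\wedge(\qassertp\wedge\mathit{TERM})\equiv\qassertp\wedge\mathit{TERM}$. This is exactly the identification of the two total-correctness right-hand sides.

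For the partial-correctness part, I additionally need to account for the trace gap. Using (i) and (ii), I would split $\sem{T(S)}(\qstate)=\sem{S}(\qstate)+\sem{T(S)}(\qstate)|_{\neg\mathit{TERM}\wedge\mathit{BLOCK}}$ and take traces to obtain
\[
\tr(\sem{T(S)}(\qstate))-\tr(\sem{S}(\qstate))
=\tr(\sem{T(S)}(\qstate)|_{\neg\mathit{TERM}\wedge\mathit{BLOCK}})
=\Exp(\sem{T(S)}(\qstate)\models \neg\mathit{TERM}\wedge\mathit{BLOCK}),
\]
the last equality being the definition of expectation against the $\{0,I\}$-valued cq-assertion $\neg\mathit{TERM}\wedge\mathit{BLOCK}$. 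Adding the total-correctness identity above to this, then adding $\tr(\qstate)-\tr(\sem{T(S)}(\qstate))$ to both sides, yields
\[
\Exp(\sem{T(S)}(\qstate)\models\qassertp\wedge\mathit{TERM}+\neg\mathit{TERM}\wedge\mathit{BLOCK})+\tr(\qstate)-\tr(\sem{T(S)}(\qstate))
=\Exp(\sem{S}(\qstate)\models\qassertp\wedge\mathit{TERM})+\tr(\qstate)-\tr(\sem{S}(\qstate)),
\]
which is precisely the identification of the two partial-correctness right-hand sides.

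The only real subtlety, and hence the main obstacle, is observation (i): one has to be confident that every classical state in the support of $\sem{T(S)}(\qstate)$ satisfies $\mathit{BLOCK}$. This relies on the specific shape of $T(S)$ whose main loop has exit condition $\mathit{BLOCK}$, and on the fact that $\sem{\cdot}$ only records properly terminated computations. The first equality in the $\mathit{BLOCK}$ display in Sec.~\ref{sec:sequentialisation} -- that the extra conjuncts $B_i$ do not actually strengthen the exit condition -- is what makes this clean. Everything else reduces to routine manipulations with Lemma~\ref{lem:bpdeg}.
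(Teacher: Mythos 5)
Your proposal is correct and follows essentially the same route as the paper's own proof: the total-correctness half is read off directly from Theorem~\ref{thm:sequentialisation}, and the partial-correctness half uses exactly the paper's decomposition $\sem{T(S)}(\qstate)=\sem{S}(\qstate)+\sem{T(S)}(\qstate)|_{\neg\mathit{TERM}\wedge\mathit{BLOCK}}$ together with the resulting trace identity to match the two right-hand sides (your final display is precisely the paper's Eq.~\eqref{eq:tmp4.8.203}). Your explicit justification of observation (i) -- that every support state of $\sem{T(S)}(\qstate)$ satisfies $\mathit{BLOCK}$, using the displayed equality showing the extra conjuncts $B_i$ do not strengthen the loop-exit condition -- is a fair elaboration of a step the paper leaves implicit.
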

	\begin{proof}
		The first equivalence is direct from Theorem~\ref{thm:sequentialisation}. For the second one,
		let $\qassertp' \define \qassertp\wedge \mathit{TERM} + \neg \mathit{TERM} \wedge \mathit{BLOCK}$. It suffices to prove for any $\qstate\in \qstatesh{V}$ with $V\supseteq qv(\qassertp, \qassert, S)$, 
		\begin{equation}\label{eq:tmp4.8.203}
			\Exp(\sem{T(S)}(\qstate)\models \qassertp')-  \tr(\sem{T(S)}(\qstate)) = \Exp(\sem{S}(\qstate)\models \qassertp \wedge \mathit{TERM}) - \tr(\sem{S}(\qstate)) 
		\end{equation}
		Note that all support configurations in $\sem{T(S)}(\qstate)$ satisfy $\mathit{BLOCK}$. Thus \[\sem{T(S)}(\qstate) = \sem{S}(\qstate) + \sem{T(S)}(\qstate)|_{\neg \mathit{TERM}\wedge \mathit{BLOCK}}
		\] from Theorem~\ref{thm:sequentialisation}, and   
		\[\tr(\sem{T(S)}(\qstate)) = \tr(\sem{S}(\qstate)) + \Exp\left(\sem{T(S)}(\qstate) \models \neg \mathit{TERM}\wedge \mathit{BLOCK}\right).
		\]
		Then Eq.~\eqref{eq:tmp4.8.203} follows easily from the first equivalence.
	\end{proof}
	
	We are now ready to prove the soundness and completeness of our proof systems.
	
	\begin{proof}[Proof of Theorem~\ref{thm:psc}]
		\textbf{Soundness}: We need only to show that each rule in Table~\ref{tbl:psystem} is valid in the sense of partial correctness. The proof is divided into two steps:
		\begin{enumerate}
			\item We first prove by structural induction that the proof rules are sound for sequential programs (thus the rule (Dist) is no applicable).
			We take (Rep) as an example; the others are simpler. 
			Let $S\define \repcom$, and
			$\models_{\pal}  \ass{B_i\wedge \qassert}{S_i}{\qassert}$ for all $1\leq i\leq n$. Without loss of generality, we assume $qv(S) \subseteq qv(\qassert)$. Then $B_i\wedge \qassert\le wlp.S_i.\qassert$ from Lemma~\ref{lem:lesssimpre}. We now prove by induction on $k$ that
			$ \qassert \le \qassert_k$
			for any $k\geq 0$, where $\qassert_k$ is defined as in Table~\ref{tbl:wpsemantics} for the $wlp$ semantics of $\repcom$ when the postcondition is $\bigwedge_{i=1}^n\neg B_i\wedge \qassert$. The case when $k=0$ is trivial. Then we calculate
			\begin{align*}
				\qassert_{k+1} &= \sum_{i=1}^{n} B_i\wedge wlp.S_i.\qassert_k + \bigwedge_{i=1}^n \neg B_i \wedge \qassert\\
				&\ge\sum_{i=1}^{n} B_i\wedge wlp.S_i.\qassert + \bigwedge_{i=1}^n \neg B_i \wedge \qassert\\
				&\ge \sum_{i=1}^{n} B_i\wedge \qassert + \bigwedge_{i=1}^n \neg B_i \wedge  \qassert=\qassert,
			\end{align*}
			where the first inequality follows from the induction hypothesis and Lemma~\ref{lem:wpcorres}.
			Thus \[
			\qassert \le wlp.(\repcom).\left(\bigwedge_{i=1}^n \neg B_i\wedge \qassert\right),
			\] and so 
			\[\models_{\mathit{par}} \ass{\qassert}{\repcom}{\bigwedge_{i=1}^n \neg B_i\wedge \qassert}
			\] by Lemma~\ref{lem:lesssimpre}.
			\item For generic distributed program, the only relevant rules are (Imp) and (Dist). The former is direct from Lemma~\ref{lem:lesssimpre}. For (Dist), let $S\define S_1\| \ldots \| S_n$ and $T(S)$ be its sequentialisation defined in Sec.~\ref{sec:sequentialisation}.
			Suppose $\models_{\pal} \ass{\qassert}{S_{1,0}; \ldots;  S_{n,0}}{\qassertp}$, and for all $(i,j,k,\ell)\in \Gamma$,
			\[\models_{\pal} \ass{B_{i,j}\wedge B_{k,\ell}\wedge\qassertp}{\mathit{Effect}(\alpha_{i,j}, \alpha_{k,\ell}); S_{i,j}; S_{k,\ell}}{\qassertp}.
			\]
			Note that $T(S)$ is sequential.
			First, by the soundness of (Imp) for sequential programs, we have
			\[\models_{\pal}\ass{B_{i,j}\wedge B_{k,\ell}\wedge B_i\wedge\qassertp}{\mathit{Effect}(\alpha_{i,j}, \alpha_{k,\ell}); S_{i,j}; S_{k,\ell}}{\qassertp}.
			\]
			Then $\models_{\pal} \ass{\qassert}{T(S)}{\qassertp \wedge \mathit{BLOCK}}$
			by using the soundness of (Seq) and (Rep) for sequential programs. Note that $\qassertp \wedge \mathit{BLOCK} \le \qassertp'$ where $\qassertp'$ is defined in Lemma~\ref{lem:reldistseq}. Thus from (Imp) and Lemma~\ref{lem:reldistseq} we have $\models_{\pal} \ass{\qassert}{S}{\qassertp \wedge \mathit{TERM}}$.
		\end{enumerate}
		
		\textbf{Completeness}: The proof for completeness is also divided into two steps:
		\begin{enumerate}
			\item We first prove by induction on the structure of $S$ that for any $\qassert$ and sequential program $S$ with $qv(S)\subseteq qv(\qassert)$,
			$\vdash_{\mathit{par}} \ass{wlp.S.\qassert}{S}{\qassert}.
			$
			We take the case for loops as an example. 
			Let $S \define \repcom$ and $\qassertp \define wlp.S.\qassert$.
			By induction, we have 
			$\vdash_{\mathit{par}} \ass{wlp.S_i.\qassertp}{S_i}{\qassertp}$ for any $1\leq i\leq n$.
			Note that
			\[\qassertp = \sum_{i=1}^{n} B_i\wedge wlp.S_i.\qassertp + \bigwedge_{i=1}^n \neg B_i \wedge \qassert.
			\]
			Thus $B_i\wedge \qassertp = B_i\wedge wlp.S_i.\qassertp \le wlp.S_i.\qassertp$ and 
			so $\vdash_{\mathit{par}} \ass{B_i\wedge \qassertp}{S_i}{\qassertp}$ by the (Imp) rule. Now using (Rep) we have
			$\vdash_{\mathit{par}} \ass{\qassertp}{\repcom}{\bigwedge_{i=1}^n \neg B_i\wedge \qassertp}$
			and the result follows from the fact that $\bigwedge_{i=1}^n \neg B_i\wedge \qassertp =\bigwedge_{i=1}^n \neg B_i\wedge \qassert \le \qassert$. 
			\item  Let $S\define S_1\| \ldots \| S_n$ and $T(S)$ its sequentialisation defined in Sec.~\ref{sec:sequentialisation}.
			Suppose $\models_{\pal} \ass{\qassert}{S}{\qassertp}$. Note that for any $\qstate\in \qstatesh{V}$ with $V\supseteq qv(\qassertp, \qassert, S)$, all support configurations in $\sem{S}(\qstate)$ satisfy $\mathit{TERM}$. Thus 	
			$\models_{\pal} \ass{\qassert}{S}{\qassertp\wedge \mathit{TERM}}$. Then from Lemma~\ref{lem:reldistseq}, we have 
			$\models_{\pal} \ass{\qassert}{T(S)}{\qassert'}$ where $\qassert' \define \qassertp\wedge \mathit{TERM} + \neg \mathit{TERM} \wedge \mathit{BLOCK}$, and thus $\qassert \lesssim wlp.T(S).\qassert'$.

			Let $\qassertp' \define wlp.\ddo.\qassert'$ where $\ddo$ is the $\ddo$-loop in $T(S)$.
			As $T(S)$ is sequential, we have from the above clause that
			\[
			\vdash_{\pal} \ass{\qassertp'}{\ddo}{\qassert'} \hspace{2em} \mbox{and} \hspace{2em}
			\vdash_{\pal} \ass{wlp.S_0.\qassertp'}{S_0}{\qassertp'}
			\]
			where $S_0 \define S_{1,0}; \ldots;  S_{n,0}$. Note that
			\[\qassertp' = \sum_{(i,j,k,\ell)\in \Gamma} B_{i,j}\wedge B_{k,\ell}\wedge B_i \wedge wlp.S_{i,j}^{k,\ell}.\qassertp' + \mathit{BLOCK} \wedge \qassert'
			\]
			where $S_{i,j}^{k,\ell}\define \mathit{Effect}(\alpha_{i,j}, \alpha_{k,\ell}); S_{i,j}; S_{k,\ell}$. Thus
			\[
			B_{i,j}\wedge B_{k,\ell}\wedge\qassertp' = B_{i,j}\wedge B_{k,\ell}\wedge B_i \wedge wlp.S_{i,j}^{k,\ell}.\qassertp' \le wlp.S_{i,j}^{k,\ell}.\qassertp',
			\]
			and so 
			\[\vdash_{\pal} \ass{B_{i,j}\wedge B_{k,\ell}\wedge\qassertp'}{S_{i,j}^{k,\ell}}{\qassertp'}.
			\]
			by (Imp) and the completeness result for sequential quantum programs. Note that $wlp.S_0.\qassertp' = wlp.T(S).\qassert'$. Applying (Dist) and (Imp), we derive
			$\vdash_{\pal} \ass{\qassert}{S}{\qassertp'\wedge \mathit{TERM}}$,
			and the result follows from the fact that $\qassertp'\wedge \mathit{TERM} =  \qassert' \wedge  \mathit{TERM}= \qassertp\wedge \mathit{TERM} \le \qassertp$.
		\end{enumerate}
	\end{proof}
	
	The proof for total correctness is more involved.
	\begin{proof}[Proof of Theorem~\ref{thm:total}]
		\textbf{Soundness}:  Similar to the partial correctness case, the proof is divided into two steps:
		\begin{enumerate}
			\item We first prove by structural induction that the proof rules in Table~\ref{tbl:psystem} with the corresponding rules replaced by those in Table~\ref{tbl:tsystem} are sound for sequential programs (thus the rule (Dist-T) is no applicable),  in the sense of total correctness. Again,
			we take (Rep-T) as an example. Let $S\define \repcom$,
			$\models_{\tot}  \ass{B_i\wedge \qassert}{S_i}{\qassert}$ for all $1\leq i\leq n$, and 
			$\{\qassertp_k : k\geq 0\}$ be a sequence of
			$\qassert$-ranking assertions for $S$.
			Without loss of generality, we assume $qv(S) \subseteq qv(\qassert)$. We now prove by induction on $k$ that
			$ \qassert \le \qassert_k + \qassertp_k$
			for any $k\geq 0$, where $\qassert_k$ is defined as in Table~\ref{tbl:wpsemantics} for the $wp$ semantics of $\repcom$ when the postcondition is $\bigwedge_{i=1}^n\neg B_i\wedge \qassert$. The case when $k=0$ is from the fact that $\qassert \le \qassertp_0$. 
			Then from the inductive hypothesis and Lemmas~\ref{lem:wpcorres} and~\ref{lem:lesssimpre},
			\[
			B_i\wedge \qassert\le wp.S_i.\qassert \le wp.S_i.\qassert_k + wp.S_i.\qassertp_k,
			\]
			and so
			\begin{align*}
				\qassert_{k+1} + \qassertp_{k+1}& \ge \sum_{i=1}^{n} B_i\wedge wp.S_i.\qassert_k + \bigwedge_{i=1}^n \neg B_i \wedge \qassert + \sum_{i=1}^{n} B_i \wedge wp.S_i.\qassertp_{k}\\
				&\ge \sum_{i=1}^{n} B_i\wedge \qassert + \bigwedge_{i=1}^n \neg B_i \wedge  \qassert=\qassert,
			\end{align*}
			where the first inequality follows from the definition of ranking assertions and the fact that $B_i$'s are mutually exclusive, and the second one from the induction hypothesis.
			Thus \[
			\qassert \le wp.(\repcom).\left(\bigwedge_{i=1}^n \neg B_i\wedge \qassert\right)
			\] by noting that $\bigwedge_k \qassertp_k = \bot_V$, and so 
			\[\models_{\mathit{\tot}} \ass{\qassert}{\repcom}{\bigwedge_{i=1}^n \neg B_i\wedge \qassert}
			\] as desired.
			\item For generic distributed programs, again we only consider (Dist). The proof is similar to the case for partial correctness, by noting the following two facts: for any distributed program $S$ and cq-assertion $\qassertp$,
			\begin{itemize}
				\item ranking assertions for $S$ are also ranking assertions for the $\ddo$-loop of $T(S)$;
				\item from the assumption $\qassertp \wedge \mathit{BLOCK} \lesssim \mathit{TERM}$ we have
				$\qassertp \wedge \mathit{BLOCK} =\qassertp \wedge \mathit{TERM}$. 
			\end{itemize}

			%
		\end{enumerate}
		
		\textbf{Completeness}: The proof for completeness is also divided into two steps:
		\begin{enumerate}
			\item We first prove by induction on the structure of $S$ that for any $\qassert$ and sequential program $S$ with $qv(S)\subseteq qv(\qassert)$,
			$\vdash_{\mathit{\tot}} \ass{wp.S.\qassert}{S}{\qassert}.
			$
			Again, we take the case for loops as an example. 
			Let $S \define \repcom$ and $\qassertp \define wp.S.\qassert$.
			By induction, we have 
			$\vdash_{\mathit{\tot}} \ass{wp.S_i.\qassertp}{S_i}{\qassertp}$ for any $1\leq i\leq n$.
			Note that
			\[\qassertp = \sum_{i=1}^{n} B_i\wedge wp.S_i.\qassertp + \bigwedge_{i=1}^n \neg B_i \wedge \qassert.
			\]
			Thus $B_i\wedge \qassertp = B_i\wedge wp.S_i.\qassertp \le wp.S_i.\qassertp$ and 
			so $\vdash_{\mathit{\tot}} \ass{B_i\wedge \qassertp}{S_i}{\qassertp}$ by the (Imp) rule. 
			
			Let $\qassert_0 \define wp.S.\top_{qv(\qassert)}$ and 
			$\qassert_{k+1} \define \sum_{i=1}^{n} B_i\wedge wp.S_i.\qassert_k$. We are going to show that $\{\qassert_k : k\geq 0\}$ are $\qassertp$-ranking assertions for $S$. First, note that 
			$$\qassert_ 1 =  \sum_{i=1}^{n} B_i \wedge wp.S_i.\qassert_0 \le \bigwedge_{i=1}^n \neg B_i \wedge \top_{qv(\qassert)} +  \sum_{i=1}^{n} B_i\ \wedge wp.S_i.\qassert_0 = \qassert_0.$$
			So $\{\qassert_k : k\geq 0\}$ is decreasing by easy induction, using  Lemma~\ref{lem:wpcorres}\ref{cl:wpmono}. Next, as $\qassert \le \top_{qv(\qassert)}$, we have 
			$\qassertp\le \qassert_0$.  
			
			Finally, we prove that $\bigwedge_k \qassert_k =\bot_{qv(\qassert)}$.
			We show by induction on $k$ that for any $k\geq 0$ and $\qstate\in \qstatesh{qv(\qassert, S)}$,
			\begin{equation}\label{eq:induc}
				\Exp(\qstate\models \qassert_k) =
				\tr(\sem{S}(\qstate)) - \tr(\sem{S^k}(\qstate)).
			\end{equation}
			The case when $k=0$ is direct from Lemmas~\ref{lem:bpdeg} and \ref{lem:wpwlp}.
			We further calculate that
			\begin{align*}
				\Exp(\qstate\models \qassert_{k+1}) 
				&=\Exp\left(\qstate\models \sum_{i=1}^{n} B_i\wedge wp.S_i.\qassert_k\right)\\	
				&=\sum_{i=1}^{n} \Exp(\qstate|_{B_i} \models wp.S_i.\qassert_{k})\\
				&= \sum_{i=1}^{n} \Exp(\sem{S_i}(\qstate|_{B_i})\models \qassert_{k})\\
				&=  \sum_{i=1}^{n} \tr(\sem{S}(\sem{S_i}(\qstate|_{B_i}))) -  \sum_{i=1}^{n} \tr(\sem{S^k}(\sem{S_i}(\qstate|_{B_i})))\\
				&=\tr(\sem{S}(\qstate)) - \tr(\sem{S^{k+1}}(\qstate)).
			\end{align*}
			Here the second last equality is from induction hypothesis, and the last one from Lemma~\ref{lem:iout}.
			Note that the second term of the r.h.s of Eq.(\ref{eq:induc}) converges to the first one when $k$ goes to infinity. Thus
			$\lim_k \Exp(\qstate\models\qassert_k) = 0$, and so $\bigwedge_k \qassert_k = \emptydis_{qv(\qassert)}$ from the arbitrariness of $\qstate$.
			
			Now using (Rep-T) we have
			$\vdash_{\mathit{\tot}} \ass{\qassertp}{\repcom}{\bigwedge_{i=1}^n \neg B_i\wedge \qassertp}$
			and the result follows from the fact that $\bigwedge_{i=1}^n \neg B_i\wedge \qassertp =\bigwedge_{i=1}^n \neg B_i\wedge \qassert \le \qassert$. 
			
			\item The case for generic distributed programs $S$ is similar to that for partial correctness.
			The construction of ranking assertions for the $\ddo$-loop of $T(S)$, which also work for $S$, follows the same approach in the above clause.

		\end{enumerate}
	\end{proof}
	
	{\renewcommand{\arraystretch}{2.9}
		\begin{table}[t]
			\begin{lrbox}{\tablebox}
				\centering
				\begin{tabular}{l}
					\begin{tabular}{lc}
						(C-Rep-T)	& $\displaystyle\frac{
							\ass{B_i\wedge \qassert}{S_i}{\qassert},
							\ass{B_i\wedge\cassert \wedge t=z}{S_i}{t<z},
							\ \forall i\in \{1,\ldots, n\},\ \cassert \rightarrow t\geq 0	
						}{\ass{\qassert}{\repcom}{\qassert \wedge \bigwedge_{i=1}^n \neg B_i}}$ \\ 
						& where $\mathit{type}(z) = \mathit{type}(t) = \tyint$, $z\not\in \cVar(\cassert, B_i, t, S_i)$, $\qassert = \bigoplus_{i\in I} \<\cassert_i, M_i\>$ and $\cassert \define \bigvee_{i\in I} \cassert_i$.
						\\
						(C-Dist-T) & $\displaystyle\frac{
							{\renewcommand{\arraystretch}{1.5}
								\begin{tabular}{l}
									$\ass{\qassert}{S_{1,0}; \ldots;  S_{n,0}}{\qassertp},\ \cassert \rightarrow t\geq 0,\  \cassert\wedge \mathit{BLOCK} \rightarrow \mathit{TERM}$\\ 
									$\ass{B_{i,j}\wedge B_{k,\ell}\wedge\cassert \wedge t=z}{\mathit{Effect}(\alpha_{i,j}, \alpha_{k,\ell}); S_{i,j}; S_{k,\ell}}{t<z}, \forall (i,j,k,\ell)\in \Gamma$	\\
									$\ass{B_{i,j}\wedge B_{k,\ell}\wedge\qassertp}{\mathit{Effect}(\alpha_{i,j}, \alpha_{k,\ell}); S_{i,j}; S_{k,\ell}}{\qassertp}, \forall (i,j,k,\ell)\in \Gamma$	
								\end{tabular}
						}}	
						{\ass{\qassert}{S_1\|\ldots\|S_n}{\qassertp\wedge \mathit{TERM}}}$\\ 
						& 
						{\renewcommand{\arraystretch}{1.5}
							\begin{tabular}{ll}
								& where $\Gamma$, $\mathit{TERM}$, and $\mathit{BLOCK}$ are defined as in Sec.~\ref{sec:sequentialisation},\\ 
								& $\mathit{type}(z) = \mathit{type}(t) = \tyint$, $z\not\in \cVar(\cassert, t, S_1\|\ldots\|S_n)$, $\qassertp = \bigoplus_{i\in I} \<\cassert_i, M_i\>$, and $\cassert \define \bigvee_{i\in I} \cassert_i$.
							\end{tabular}		
						}
					\end{tabular}		
				\end{tabular}
			\end{lrbox}
			\resizebox{\textwidth}{!}{\usebox{\tablebox}}
			\vspace{1em}
			\caption{Auxiliary rules.}
			\label{tbl:auxrules}
		\end{table}
	}

	\section{Auxiliary Rules}
	We have provided sound and relatively complete proof systems for both partial and total correctness of distributed quantum programs. Thus in principle, these proof rules are sufficient for proving desired properties as long as they can be described faithfully with Hoare triple formulas.
	However, in practice, using these rules directly might be complicated. To simplify reasoning, we introduce two auxiliary proof rules in Table~\ref{tbl:auxrules} for the special case when a classical ranking function can be found to guarantee the (finite) termination of repetitive sequential (C-Rep-T) or distributed (C-Dist-T) quantum programs. More auxiliary proof rules (for deterministic quantum programs) can be found in~\cite{feng2020quantum,ying2019toward}.
	For the sake of convenience, we write $\<\cassert, |\psi\>\>$ for $\<\cassert, |\psi\>\<\psi|\>$, and $\cassert$ for $\cassert \wedge \top_V$ for some appropriate $V$.
	
	\begin{theorem} \label{thm:aux}
		The auxiliary rules presented in Table~\ref{tbl:auxrules} are sound with respect to total correctness. 
	\end{theorem}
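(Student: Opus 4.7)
My plan is to prove both (C-Rep-T) and (C-Dist-T) sound by a \emph{direct semantic induction} on the classical ranking function $t$, rather than by synthesising quantum ranking assertions to feed into (Rep-T) or (Dist-T). Indeed, the naive candidate $\qassert_k\define\qassert\wedge(t\geq k)$ is \emph{not} in general a ranking assertion for (Rep-T), because $S_i$ can, consistently with the invariance hypothesis $\{B_i\wedge\qassert\}S_i\{\qassert\}$, manufacture fresh $\qassert$-mass at arbitrarily high $t$ out of classical states violating $\cassert$; this is what pushes me to work at the semantic level.

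For (C-Rep-T), the target is $\models_{\tot}\{\qassert\}\,\repcom\,\{\qassert\wedge\bigwedge_i\neg B_i\}$. Since $\qassert=\bigoplus_i\<\cassert_i,M_i\>$ vanishes outside $\cassert=\bigvee_i\cassert_i$, monotonicity of $\sem{\cdot}$ and Lemma~\ref{lem:welldef} (linearity of the denotational semantics on cq-states) reduce the inequality to simple cq-states $\<\sigma,\rho\>$ with $\sigma\models\cassert$, hence $\sigma(t)\geq 0$. I then proceed by strong induction on $z\define\sigma(t)$. The base case is $\sigma\models\bigwedge_i\neg B_i$, where $\sem{\repcom}(\<\sigma,\rho\>)=\<\sigma,\rho\>$ and the inequality is immediate. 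In the inductive step, mutual exclusivity of the $B_i$'s singles out a unique $i$ with $\sigma\models B_i$; the invariance hypothesis gives
\[
\Exp(\<\sigma,\rho\>\models\qassert)=\Exp(\<\sigma,\rho\>\models B_i\wedge\qassert)\le\Exp(\sem{S_i}(\<\sigma,\rho\>)\models\qassert),
\]
and the classical ranking hypothesis instantiated at $z$, combined with the trace-nonincreasingness of $\sem{S_i}$, forces every support classical state $\sigma'$ of $\sem{S_i}(\<\sigma,\rho\>)$ to satisfy $\sigma'(t)<z$. Unrolling one further iteration of the loop via the fixed-point formula in Lemma~\ref{lem:iout} and decomposing $\sem{S_i}(\<\sigma,\rho\>)=\sum_{\sigma'}\<\sigma',\rho_{\sigma'}\>$, each summand with $\sigma'\not\models\cassert$ contributes $0$ to the $\qassert$-expectation, while each summand with $\sigma'\models\cassert$ falls under the induction hypothesis because $0\le\sigma'(t)<z$; this closes the step.

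For (C-Dist-T), I reduce to a (C-Rep-T)-style argument on the sequentialisation $T(S)$, using Theorem~\ref{thm:sequentialisation} and the total-correctness clause of Lemma~\ref{lem:reldistseq}. The initialisation prefix $S_{1,0};\ldots;S_{n,0}$ is absorbed by the first hypothesis via (Seq), and the $\ddo$-loop of $T(S)$—whose guards are $B_{i,j}\wedge B_{k,\ell}\wedge B_i$ and bodies $\mathit{Effect}(\alpha_{i,j},\alpha_{k,\ell});S_{i,j};S_{k,\ell}$—yields to the same strong induction on $t$, now with $\qassertp$ playing the role of $\qassert$. The remaining hypothesis $\cassert\wedge\mathit{BLOCK}\rightarrow\mathit{TERM}$ is precisely what upgrades the natural exit condition $\qassertp\wedge\mathit{BLOCK}$ of the sequentialised loop to the required $\qassertp\wedge\mathit{TERM}$, by noting that $\qassertp$ vanishes outside $\cassert$.

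I expect the main obstacle to be the tension between the invariance hypothesis, which delivers only a \emph{lower} bound on post-$\qassert$-expectation, and the upper control needed for an induction to close: one cannot a priori bound the $\qassert$-mass that $\sem{S_i}$ manufactures on $\cassert$-states starting from inputs outside $\cassert$. The induction circumvents this by maintaining the $\cassert$-support invariant from the outset; within that invariant class the classical ranking hypothesis tightens (via $\Exp(\cdot\models\qassert)\leq\tr(\cdot)$ and trace-nonincreasingness) into a trace-preservation statement that pushes all of $\sem{S_i}(\<\sigma,\rho\>)$ into strictly lower values of $t$, which is exactly what makes the recursion well-founded.
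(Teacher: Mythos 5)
Your proposal is correct, but it follows a genuinely different---and considerably more careful---route than the paper's own proof. The paper proves Theorem~\ref{thm:aux} in two sentences: it notes, exactly as you do, that validity of $\ass{B_i\wedge\cassert\wedge t=z}{S_i}{t<z}$ forces every classical state in the support of $\sem{S_i}(\cstate,\rho)$ to satisfy $t<z$ whenever $\cstate\models B_i\wedge\cassert\wedge t=z$ (your trace-preservation tightening is the justification the paper leaves implicit), and then invokes ``an argument similar to that for classical programs'' to claim that \emph{all} computations from $\<\repcom,\cstate,\rho\>$ with $\cstate\models\cassert$ terminate within $\cstate(t)$ steps, dismissing (C-Dist-T) as ``similar''. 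Your strong induction on $\cstate(t)$ through the unrolling identity $\sem{S}(\qstate)=\qstate|_{B_0}+\sum_{i}\sem{S}(\sem{S_i}(\qstate|_{B_i}))$ of Lemma~\ref{lem:iout} avoids that bounded-termination claim altogether, and this is a genuine improvement rather than pedantry: since the invariance premise only lower-bounds expectations, the support of $\sem{S_i}(\cstate,\rho)$ may leave $\cassert$, where neither premise constrains $t$, so the paper's intermediate claim can literally fail. For instance, take a single guard $x\neq 0$ whose body measures $q$ and sets $x:=0$ on outcome $0$ and $x:=2$ on outcome $1$, with $\qassert\define\<x{=}1,\ |0\>_q\<0|\>\oplus\<x{=}0,\ I\>$ and $t\define x(2-x)$: all three premises of (C-Rep-T) hold, yet from $x=1$ the outcome-$1$ branch loops forever, well beyond $\cstate(t)=1$ steps. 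The rule is nonetheless sound precisely because such escaped mass carries zero $\qassert$-weight, which is exactly the zero-contribution case your induction isolates: your split of $\sem{S_i}(\<\cstate,\rho\>)$ into $\cassert$-summands (handled by the induction hypothesis) and $\neg\cassert$-summands (expectation zero) repairs the one spot where the classical analogy the paper appeals to breaks, and it delivers the total-correctness expectation inequality in a single pass, whereas the paper's sketch still has to combine termination with the invariance premise implicitly. Your preliminary observation that $\qassert\wedge(t\geq k)$ does not yield ranking assertions for (Rep-T)---because the defining inequality quantifies over all input states, including those outside $\cassert$ from which $S_i$ may deposit $\qassert$-mass at large $t$---is also correct and properly motivates arguing semantically instead of reducing to (Rep-T)/(Dist-T); the paper's completeness proof has to build ranking assertions from $wp$ for the same reason. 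Finally, your treatment of (C-Dist-T) via $T(S)$, Lemma~\ref{lem:reldistseq}, weakening under the strengthened guards $B_{i,j}\wedge B_{k,\ell}\wedge B_i$, and the upgrade from $\qassertp\wedge\mathit{BLOCK}$ to $\qassertp\wedge\mathit{TERM}$ using $\cassert\wedge\mathit{BLOCK}\ra\mathit{TERM}$ together with the vanishing of $\qassertp$ outside $\cassert$, is precisely the content hidden behind the paper's ``similar''. In short, the paper's proof buys brevity by leaning on the classical analogy; yours buys rigour by localising and closing the point where that analogy fails.
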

	\begin{proof}
		First note that for any $i$, $\models_{\mathit{\tot}}\ass{B_i\wedge\cassert \wedge t=z}{S_i}{t<z}$ implies for any $\cstate \models B_i\wedge \cassert\wedge t=z$ and $\rho$,
		and any $\cstate'$ in the support of $\sem{S_i}(\cstate, \rho)$, we have $\cstate'\models t<z$. Then an argument similar to that for classical programs leads to the conclusion that all computations from $\<\repcom, \cstate, \rho\>$ terminates within $\cstate(t)$ steps, provided $\cstate \models \cassert$. That proves (C-Rep-T). The case for (C-Dist-T) is similar.
	\end{proof}
	\section{Case studies}
	
	To illustrate the effectiveness of the proof systems as well as the auxiliary rules presented in this paper, we employ them to verify the quantum teleportation protocol. A protocol to locally implement nonlocal gates is also investigated.
	
	\subsection{Verification of quantum teleportation}
	
	\begin{example}[Correctness of Quantum Teleportation]
		The correctness of quantum teleportation can be stated as follows: for any $|\psi\>\in \h_2$,
		\begin{equation}\label{eq:telcor}
			\vdash_{\tot}	\ass{|\psi\>_q\otimes |\beta\>_{q_1,q_2}}{\mathit{Teleport}}{|\psi\>_{q_2}}
		\end{equation}

		
		The main technique of proving Eq.~\eqref{eq:telcor} is to employ rule (C-Dist-T). Let $t\define 2-stage_A$ and
		\begin{align*}
			\qassertp \define \frac{1}{4}\sum_{i,j =0,1}& \left(\cqs{stage_A = stage_B =0\wedge x_A = i \wedge z_A = j}{|j,i\>_{q,q_1} \otimes X^iZ^j |\psi\>_{q_2}} \right.\\
			+ & \cqs{stage_A =  stage_B =1 \wedge x_A = i \wedge z_A = j}{|j,i\>_{q,q_1} \otimes Z^j |\psi\>_{q_2}}\\
			+ & \left.\cqs{stage_A =  stage_B =2\wedge x_A = i \wedge z_A = j}{|j,i\>_{q,q_1} \otimes |\psi\>_{q_2}} \right).
		\end{align*}
		The proof consists of three parts.
		\begin{enumerate}
			\item We show that $\qassertp$ is a global invariant for the distributed programs $\mathit{Teleport}$. To this end, consider the first branch of the $\ddo$-loop in $T(\mathit{Telepor})$ presented in Example~\ref{ex:seqtel}:
			\begin{align*}		
				& \left\{stage_A =  stage_B = 0\wedge\qassertp\right\}\\
				&\left\{\frac{1}{4}\sum_{i,j=0,1} \cqs{x_A = i \wedge z_A = j}{|j,i\>_{q,q_1} \otimes X^iZ^j |\psi\>_{q_2}}\right\} & \mathit{(Imp)}\\
				& x_B := x_A;\\
				&\left\{\frac{1}{4}\sum_{i,j,k=0,1} \cqs{x_A = i \wedge z_A = j \wedge x_B=k}{|j,i\>_{q,q_1} \otimes X^kZ^j |\psi\>_{q_2}}\right\}& \mathit{(Assn)}\\
				& stage_A := 1;\\
				&\left\{\frac{1}{4}\sum_{i,j,k=0,1} \cqs{stage_A  =1 \wedge x_A = i \wedge z_A = j \wedge x_B=k}{|j,i\>_{q,q_1} \otimes X^kZ^j |\psi\>_{q_2}}\right\}& \mathit{(Assn)}\\
				& stage_B := 1;\\
				&\left\{\sum_{k=0,1} (x_B=k) \wedge \x_{q_2}^k(\qassertp)\right\}& \mathit{(Assn)}\\
				& \measstm{x_B=1}{q_2\apply X}{\sskip}\\
				&\left\{\qassertp\right\} & \mathit{(Alt)}
			\end{align*}
			where $\x$ is the Pauli-$X$ super-operator. Similarly, for the second branch, we can prove that
			\begin{align*}		
				& \left\{stage_A =  stage_B = 1\wedge\qassertp\right\}\\
				& z_B := z_A;\ stage_A := 2;\  stage_B := 2; \ \measstm{z_B=1}{q_2\apply Z}{\sskip} \\
				&\left\{\qassertp\right\}.
			\end{align*}
			
			\item We show that $t$ is a classical ranking function for the distributed program $\mathit{Teleport}$. Note that $\mathit{BLOCK} \equiv \bigwedge_{k=0,1} \neg \left(stage_A = stage_B = k\right)$, $$\mathit{TERM} \equiv \bigwedge_{k=0,1} \neg \left(stage_A = k\right) \wedge \bigwedge_{k=0,1} \neg \left(stage_B = k\right)$$
			and the classical part of $\qassertp$ is $\cassert \define \bigvee_{k=0}^2 \left(stage_A = stage_B = k\right)$.
			Then it is easy to check that
			$\cassert \rightarrow t\geq 0$ and
			$\cassert\wedge \mathit{BLOCK} \rightarrow \mathit{TERM}$.
			Furthermore, from
			\begin{align*}		
				& \left\{stage_A =  stage_B = 0\wedge\cassert \wedge 2-stage_A=z\right\}\\
				&\left\{1 < z\right\} & \mathit{(Imp)}\\
				& x_B := x_A;\\
				&\left\{1 < z\right\}& \mathit{(Assn)}\\
				& stage_A := 1;\\
				&\left\{2-stage_A < z\right\}& \mathit{(Assn)}\\
				& stage_B := 1; \ \measstm{x_B=1}{q_2\apply X}{\sskip}\\
				&\left\{2-stage_A < z\right\}& \mathit{(Assn,Alt)}
			\end{align*}
			and similarly for the second branch of the  $\ddo$-loop,
			the integer expression $t$ is indeed a classical ranking function for $\mathit{Teleport}$. 
			
			\item We show that the sequential part of $T(\mathit{Teleport})$ establishes $\qassertp$ from the precondition $|\psi\>_q\otimes |\beta\>_{q_1,q_2}$.	Let $|\psi\> = x|0\> + y|1\>$ for some $x,y\in \C$. Then
			\begin{align*}		
				& \left\{|\psi\>_q\otimes |\beta\>_{q_1,q_2}\right\}\\
				& q,q_1 \apply \textit{CNOT}; \\
				& \left\{\frac{1}{\sqrt{2}}\left( x |0\>_q (|00\> + 11\>)_{q_1, q_2} + y |1\>_q (|10\> + 01\>)_{q_1, q_2}\right)\right\}& \mathit{(Unit)}\\
				& q\apply H; \\
				& \left\{\frac{1}{2}\left( x (|0\> + |1\>)_q (|00\> + 11\>)_{q_1, q_2} + y (|0\> - |1\>)_q (|10\> + 01\>)_{q_1, q_2}\right)\right\}& \mathit{(Unit)}\\		
				&\left\{\sum_{i,j =0,1} \frac{1}{2}|j,i\>_{q,q_1} \otimes X^iZ^j |\psi\>_{q_2} \equiv \frac{1}{4}\sum_{i,j =0,1} \cqs{ \true}{ |j,i\>_{q,q_1} \otimes X^iZ^j |\psi\>_{q_2}}\right\} & \mathit{(Imp)}\\
				& z_A := \mymeas\ q;\  x_A := \mymeas\ q_1; \\
				&\left\{\frac{1}{4}\sum_{i,j =0,1} \cqs{ x_A = i \wedge z_A = j}{ |j,i\>_{q,q_1} \otimes X^iZ^j |\psi\>_{q_2}}\right\} & \mathit{(Meas)}\\
				& stage_A := 0;\ stage_B := 0;\\
				& \{\qassertp\} & \mathit{(Assn)}
			\end{align*}
			
		\end{enumerate}
		
		With the three parts shown above, we have from (C-Dist-T) that
		\[
		\vdash_{\tot}	\ass{|\psi\>_q\otimes |\beta\>_{q_1,q_2}}{\mathit{Teleport}}{\qassertp \wedge \mathit{TERM}}.
		\] 
		Then the desired result in Eq.~\eqref{eq:telcor} is obtained by noting that
		\[\qassertp \wedge \mathit{TERM}
		\equiv \frac{1}{4}\sum_{i,j =0,1} \cqs{stage_A =  stage_B =2\wedge x_A = i \wedge z_A = j}{|j,i\>_{q,q_1} \otimes |\psi\>_{q_2}},
		\]
		which is upper bounded above by $|\psi\>_{q_2}$ according to the order $\lesssim$.
	\end{example}
	
	\subsection{Local implementation of nonlocal quantum gates}
	In distributed quantum computing, one of the key tasks is to implement quantum gates between qubits that are located in different quantum computers. To illustrate the basic idea, we recall the protocol proposed in~\cite{eisert2000optimal} which implements a nonlocal CNOT gate  between two parties, say Alice and Bob, by employing only local quantum operations and classical communication, again with the help of a pre-shared entangled state.
	\begin{figure}[t]\centering
		\tikzset{
			my label/.append style={above right,xshift=0.3cm}
		}
		\begin{quantikz}[row sep=0.3cm,column sep=1cm]
			\lstick{$|\psi\>$} &\ctrl{1} & \qw &\qw &\qw & \gate{Z} &  \qw\\	  
			\lstick[3]{$|\beta\>$} &\targ{}  &   \qw& \meter{$x$} & \cwbend{3} & \\
			&\wave&&&&&&\\
			& \ctrl{1}&\gate{H}    &\meter{$z$} & \cw & \cwbend{-3}  \\
			\lstick{$|\phi\>$}& \targ{} & \qw & \qw & \gate{X}  &\qw &\qw
		\end{quantikz}
		\caption{Local implementation of remote CNOT gate. The wires from top to bottom represent qubits $q$, $q_1$, $q_2$, and $r$ respectively.  Furthermore, $q$ and $q_1$ belong to Alice while $q_2$ and $r$ belong to Bob.}\label{fig-rcnot}
	\end{figure}
	The protocol is depicted as in Fig.~\ref{fig-rcnot} and can be written as a distributed program
	$\textit{RCNOT}\define \mathit{Alice}\ \|\ \mathit{Bob}$
	where $\mathit{Alice}\define$
	\begin{align*}
		& q,q_1 \apply \textit{CNOT};\  x_A := \mymeas\ q_1;\ 
		stage_A := 0;\\
		& \mathbf{do}\ stage_A = 0; c!x_A \ra stage_A := 1 \\
		&\ \  \square\  stage_A = 1; d?z_A \ra stage_A := 2; \ \measstm{z_A=1}{q\apply Z}{\sskip} \\
		& \mathbf{od}
	\end{align*}
	and $\textit{Bob}\define$
	\begin{align*}
		& q_2,r \apply \textit{CNOT};\  q_2 \apply H;\  z_B := \mymeas\ q_2;\  stage_B := 0;\\
		& \mathbf{do}\ stage_B = 0; c?x_B \ra stage_B := 1; \ \measstm{x_B=1}{r\apply X}{\sskip}\\
		&\ \  \square\  stage_B = 1; d!z_B \ra stage_B := 2 \\
		& \mathbf{od}
	\end{align*}
	The correctness of $\textit{RCNOT}$ is stated as follows: for any $\alpha_{ij}\in \C$ with $\sum_{i,j =0,1}|\alpha_{ij}|^2 = 1$,
	\begin{equation}\label{eq:rcnot}
		\vdash_{\tot}	\ass{\sum_{i,j =0,1}\alpha_{ij} |i,j\>_{q,r}\otimes |\beta\>_{q_1,q_2}}{\mathit{RCNOT}}{\sum_{i,j =0,1}\alpha_{ij} |i,j\oplus i\>_{q,r}}
	\end{equation}
	where $\oplus$ denotes the addition modulo 2.
	Again, the fact that the postcondition does not refer to $q_1$ and $q_2$ means that the post-measurement state of these quantum systems is irrelevant. 
	
	Similar to that of $\textit{Teleport}$, to prove the correctness of $\textit{RCNOT}$ it suffices to show:
	\begin{enumerate}
		\item the cq-assertion \begin{align*}
			\qassertp \define \frac{1}{4}\sum_{i,j =0,1}& \left(\cqs{stage_A = stage_B =0\wedge x_A = i \wedge z_B = j}{|i,j\>_{q_1,q_2} \otimes X^i_rZ^j_{q} |\varphi\>_{q,r}} \right.\\
			+ & \cqs{stage_A =  stage_B =1 \wedge x_A = i \wedge z_B = j}{|i,j\>_{q_1,q_2} \otimes Z^j_{q} |\varphi\>_{q,r}}\\
			+ & \left.\cqs{stage_A =  stage_B =2\wedge x_A = i \wedge z_B = j}{|i,j\>_{q_1,q_2} \otimes  |\varphi\>_{q,r}} \right).
		\end{align*} 
		where $|\varphi\> \define \sum_{k,\ell =0,1}\alpha_{k\ell} |k,\ell\>$, serves as a global invariant for $\mathit{RCNOT}$;
		\item the expression $t\define 2-stage_A$ is a classical ranking function; and
		\item the sequential part of $\mathit{RCNOT}$ establishes $\qassertp$ from the precondition $|\varphi\>_{q,r}\otimes |\beta\>_{q_1,q_2}$.	
	\end{enumerate}
	We omit the details here.
\end{document}